\documentclass[11pt,noinfoline]{imsart}\setcounter{tocdepth}{2}

\RequirePackage[OT1]{fontenc}
\RequirePackage{amsthm,amsmath}
\RequirePackage[round]{natbib}
\RequirePackage[colorlinks,citecolor=blue,urlcolor=blue]{hyperref}
\usepackage{ulem}
\usepackage{verbatim}
\usepackage{stmaryrd}
\usepackage{fullpage}
\usepackage{bbm}

\usepackage{layout}

\usepackage{float}
\usepackage{dsfont}
\usepackage[mathscr]{eucal}
\usepackage[toc,page]{appendix}
\usepackage{mathrsfs}
\usepackage{color}
\usepackage{pifont}
\usepackage{bm}
\usepackage{latexsym}
\usepackage{amsfonts}
\usepackage{amssymb}
\usepackage{epsfig}
\usepackage{graphicx}
\usepackage{multirow}
\usepackage{tikz}

\usepackage{caption}
\usepackage{subcaption}

\newtheorem{theorem}{Theorem}[section]
\newtheorem{corollary}[theorem]{Corollary}
\newtheorem{lemma}[theorem]{Lemma}

\newtheorem{remark}[theorem]{Remark}

\newtheorem{assumption}[theorem]{Assumption}

\usepackage{verbatim}

\usepackage{tikz}
\usetikzlibrary{fit,positioning,arrows,automata,calc}
\tikzset{
main/.style={circle, minimum size = 5mm, thick, draw =black!80, node distance = 10mm},
connect/.style={-latex, thick},
box/.style={rectangle, draw=black!100}
}

\usepackage{mathtools}
\usepackage{tabu}

\hypersetup{
colorlinks=true,       % false: boxed links; true: colored links
linkcolor=blue,        % color of internal links
citecolor=blue,        % color of links to bibliography
filecolor=magenta,     % color of file links
urlcolor=blue
}
%++++++++++++++++++++++++++++++++++++++++
\newcommand*\diff{\mathop{}\!\mathrm{d}}

\newcommand{\floor}[1]{\lfloor {#1} \rfloor}

\newcommand{\W}{\mathcal{W}}
\newcommand{\E}{\mathbb{E}}
\newcommand{\p}{\mathbb{P}}

\newcommand\norm[1]{\left\lVert#1\right\rVert}
\newcommand{\St}{\mathbf{S}}
\newcommand{\at}{\boldsymbol{\alpha}}
\newcommand{\id}{\mathrm{id}}

\definecolor{Green}{HTML}{59A14F}
\definecolor{Red}{HTML}{E15759}
\definecolor{Orange}{HTML}{F28E2B}
\definecolor{Blue}{HTML}{4E79A7}

\begin{document}

\begin{frontmatter}

    \title{{\large On Distributional Autoregression and Iterated Transportation}}

    \begin{aug}
    \author{\fnms{Laya} \snm{Ghodrati}\ead[label=e1]{laya.ghodrati@epfl.ch}} \and
    \author{\fnms{Victor M.} \snm{Panaretos}\ead[label=e2]{victor.panaretos@epfl.ch}}

    \runauthor{L. Ghodrati \& V.M. Panaretos}
   
    \affiliation{Ecole Polytechnique F\'ed\'erale de Lausanne}
   
    \address{Institut de Math\'ematiques\\
    Ecole Polytechnique F\'ed\'erale de Lausanne}
   
    \end{aug}

    \begin{abstract}
    We consider the problem of defining and fitting models of autoregressive time series of probability distributions on a compact interval of $\mathbb{R}$. An order-$1$ autoregressive model in this context is to be understood as a Markov chain, where one specifies a certain structure (regression) for the one-step conditional Fr\'echet mean with respect to a natural probability metric. We construct and explore different models based on iterated random function systems of optimal transport maps. While the properties and interpretation of these models depend on how they relate to the iterated transport system, they can all be analyzed theoretically in a unified way. We present such a theoretical analysis, including convergence rates, and illustrate our methodology using real and simulated data. Our approach generalises or extends certain existing models of transportation-based regression and autoregression, and in doing so also provides some additional insights on existing models.
    \end{abstract}

    \begin{keyword}[class=AMS]
    \kwd[Primary ]{62R10, 62M, 15A99}
    \kwd[; secondary ]{62M15, 60G17}
    \end{keyword}
   
    \begin{keyword}
    \kwd{Distributional Regression}
    \kwd{Distributional
    Time Series}
    \kwd{Optimal Transport}
    \kwd{Wasserstein Metric}
    \end{keyword}
    \end{frontmatter}

	\doublespacing
    
    \section{Introduction}\label{intro}
   
In distributional regression, one aims to describe/estimate the relationship between a response distribution $\nu$, and a covariate distribution $\mu$, viewed as random measures. This is to be done on the basis of an i.i.d. sample of random pairs $\{(\mu_i,\nu_i)\}_{i=1}^{n}$. The relationship is modelled \textit{globally}, in that the complete distributions (seen as random elements of a suitable function space) are being related. In this sense, such models are useful in contexts where one can access samples from each law marginally, rather than in pairs (also known as uncoupled regression data). This can be due to data collection limitations, or simply because there is no natural coupling.

In light of this global perspective, distributional regression falls under the label of functional regression -- where one random function is to be regressed on another \citep{morris2015functional}. However, the non-linear nature of probability distributions distinguishes distributional regression from typical functional regression. In usual functional regression, one can model the regression via the usual (Bochner) conditional expectation and bounded linear transformations on Hilbert spaces \citep{hsing2015theoretical}. But these concepts do not readily apply in distributional regression, where one is confronted with the challenges of geometrical data analysis \citep{petersen2022modeling,patrangenaru2015nonparametric}. Early approaches to distributional regression circumvented this issue by imbedding the distributions in Hilbert space via suitable transformations \citep{kneip2001inference,delicado2011dimensionality,petersen2016functional,kokoszka2019forecasting}. More recently, attention has focussed on directly modeling random distributions as random elements of the Wasserstein space, a geodesic metric space related to optimal transport, increasingly seen as a canonical setting for distributional statistics \citep{panaretos2020invitation}. In this context, Bochner integrals are replaced by Fr\'echet means \citep{panaretos2016amplitude,bigot2018upper,zemel2019frechet}, and what remains is the choice of regressor function, i.e. the specification of a relationship linking the conditional Fr\'echet mean of the response to the covariate.

Two general strategies have arisen for this specification. The geometrical approach uses the fact that the Wasserstein space is locally Hilbert-like, and defines classical Hilbertian regression by lifting covariate and response on an appropriate tangent space (see \citet{chen2021wasserstein} and \citet{zhang2022wasserstein}). While this model has a natural mathematical interpretation, its statistical interpretation is somewhat contrived. The other strategy is to directly specify the regression transformation as an optimal transport map, exploiting convexity and shape constraints, rather than geometrical features (see \citet{ghodratidistribution}). This has the advantage of a clean interpretation and of avoiding ill-posedness issues.

Distributional autoregression is a natural next-step for distributional regression models -- indeed, it is arguably the setting where most distributional regression data sets arise. Rather than i.i.d. covariate/response distributions, one observes a dependent sequence of probability distributions $\{\mu_n\}_{n=1}^N$. When viewed as a Markov chain in the Wasserstein space, this sequence can be modeled autoregressively by specifying a relationship between the conditional Fr\'echet mean at time $n+1$ and the chain at time $n$. Once again, this can be done geometrically (as indeed was already explored in \citep{chen2021wasserstein} and  \citep{zhang2022wasserstein}), or by way of optimal transport maps, with similar advantages/disadvantages.

A first contribution based directly on transport maps was made in \citet{zhu2021autoregressive}, where random perturbations of the identity were iteratively contracted/composed to form a time-dependent sequence. This was subsequently used either as ``increments" between consecutive distributions or as ``deviations'' from the marginal Fr\'echet mean, to produce autoregressive models. Key in this approach was the use of iterated random function systems and a canny definition of a contraction operation on the space of transport maps, allowing to mimic the contractive effect of a correlation operator in usual autoregression.  \citet{jiang2022wasserstein} subsequently generalised this approach to autoregressive modeling to the case of vector-valued distributional chains, i.e. time-evolving vectors with distributions as coordinates.

A salient limitation of this approach is that the entire dynamics of the process reduce to a single scalar quantity $|\alpha|\leq 1$, regulating the ``strength" of the contraction.  While this resembles real-valued autoregressive processes, it is likely too rigid in a functional context (or even a multivariate context), and can have undesirable consequences when asserting stationarity (see Section \ref{related works} for a more extensive discussion). Ideally, a genuinely functional model would allow for a \textit{functional} specification of the dynamics, thus capable of expressing more complex dependencies. In response to this drawback, \citet{zhu2021autoregressive} also defined a model where the scalar contraction coefficient is replaced by a \textit{functional contraction coefficient}, contracting variably across the domain. This comes with the caveat of a more complicated theory, including cumbersome technical assumptions, as well as a more involved interpretation.

The purpose of this paper is to introduce and develop transportation-based autoregressive models with genuinely functional dynamics, yielding easily interpretable yet rich classes of distributional autoregressions. To do so, we extend to the autoregressive case the functional structure of \citet{ghodratidistribution}, where the regression operator is a monotone rearrangement, making use of the scalar ``contractive effect" introduced by \citet{zhu2021autoregressive} -- intuitively, we posit a model where the \textit{shape} of the dynamics is captured by a monotone map, modulated by a contractive parameter $\alpha$ regulating the degree of non-degeneracy of the model. In its simplest form, this approach can be interpreted as positing that
 $$\mu_{n+1}=\theta_n\#[\alpha\mu_n],  \quad n \in\mathbb{Z},$$
for i.i.d. random increasing maps $\theta_n$ with $\mathbb{E}[\theta_n(x)]=S(x)$; $S$ a deterministic monotone map; and $\mu_n\mapsto [\alpha\mu_n]$ a barycentric contraction operation, suitably defined at the level of quantile functions (see Equation \eqref{scalar_multiplication} for a precise definition). Intuitively, the model suggests that step $n+1$ in the chain is obtained by pushing forward the $n$th step (``shrunken" slightly to allow for temporal stationarity) via a random perturbation of the deterministic deformation $S$. This is a direct autoregressive extension of \citet{ghodratidistribution}, employing the contractive device of \citet{zhu2021autoregressive} to assure temporal stability in law. However, more modeling possibilities are available in our approach, and this is just the motivating one (see Section \ref{interpretations}).

The rest of the paper is organised as follows. After introducing some basic background and notation (Section \ref{Wasserstein}), we revisit the problem of defining iterated random function systems of increasing maps. In particular, Section \ref{iterations} presents a functional extension of the iterated system employed in \citet{zhu2021autoregressive}. This extension is then used in Section \ref{interpretations} in order to define three different possible notions of autoregression -- in each case, the iterated transport map system serves to model a different characteristic of the distributional time series (e.g. the increments, the quantiles, or the generalised quantiles). We compare the resulting models to existing approaches in Section \ref{related works} and determine conditions for stationarity in Section \ref{stationarity}. We then show in Section \ref{estimation} that all three models can be fitted and analysed using the same estimation theory -- albeit applied to optimal maps that represent a different characteristic in each case. In particular, we establish identifiability, consistency, and rates of convergence. Finally, the finite sample performance of our methodology is illustrated on some simulated and real data (Sections \ref{simulations} and \ref{real_data_analysis}). The proofs are collected in a separate Section, and we conclude with a discussion of some further possible generalisations.

    \section{Background on Optimal Transport and Some Notation}{\label{Wasserstein}}
    In this section, we provide some background on optimal transport and associated probability metrics. For more background see, e.g. \cite{panaretos2020invitation}. Let $\Omega = [\omega_0,\omega_1]$ be a closed interval of $\mathbb{R}$ and $\W_2(\Omega)$ be the set of Borel probability measures on $\Omega$, with finite second moment. The 2-Wasserstein distance $W$ between $\mu,\nu \in \W_2(\Omega)$ is defined by
    $$d^2_{\W}(\nu,\mu):=\underset{\gamma \in \Gamma(\nu,\mu)}{\inf} \int_{\Omega} |x-y|^2 \diff \gamma(x,y),$$
    where $\Gamma(\nu,\mu)$ is the set of couplings of $\mu$ and $\nu$, i.e. the set of Borel probability measures on $\Omega \times \Omega$ with marginals $\nu$ and $\mu$. It can be shown that $\W_2(\Omega)$ endowed with $d^2_{\W}$ is a metric space, which we simply call the Wasserstein space of distributions. A coupling $\gamma$ is deterministic if it is the joint distribution of $\{X, T(X)\}$ for some deterministic map $T: \Omega \to \Omega$, called an optimal transport map. In such a case, we write $\nu=T\#\mu$ and say that $T$ pushes $\mu$ forward to $\nu$, i.e. $\nu(B)=\mu\{T^{-1}(B)\}$ for any Borel set $B$.

    \begin{remark}
        Throughout the paper, we will focus on invertible maps (hence strictly increasing).
    \end{remark}

    When the source distribution $\mu$ is absolutely continuous with respect to the Lebesgue measure,  then the optimal plan is induced by a map $T$. When $d=1$, the map $T$ is a nondecreasing map and admits the explicit expression $T=F^{-1}_{\nu}\circ F_\mu$, where $F^{-1}_\nu$ is the quantile function of $\nu$, and $F_\mu$ is the cumulative distribution function of $\mu$. It follows immediately that the composition of two optimal maps results in another optimal map. In addition, we have the explicit expression
   
    \begin{equation}\label{w1d}
      d_{\W}^2(\mu,\nu)=\int_0^1\big|F^{-1}_{\mu}(p) - F^{-1}_{\nu}(p)\big|^2 \diff p.
    \end{equation}

    Finally, we will use the notation $a \lesssim b$ to indicate that there exists a positive constant $C$ for which $a\leq C b$ holds (bounded above up to a universal constant). We denote by $\norm{.}_p$ the usual $L^p$ norm of a function.
   
    \section{Autoregressive Models via Iterated Transportation}
    \subsection{Random Iterated Transport}\label{iterations}

    Our definition of autoregressive models for distributions will hinge on appropriately defined iterated random systems of transport maps (following the approach of \citet{zhu2021autoregressive}, to whom we compare below). This is a special case of a framework for studying questions about Markov chains via iterated random functions, going back to at least \citet{diaconis1999iterated}. They define an iterated random function system on a state space $\mathcal{T}$ as
    $$T_i=f(T_{i-1}\,;\,\theta_i)$$
    for a family of transformations $\{f(\,\cdot\,;\,\theta):\theta\in\Theta\}$ acting on $\mathcal{T}$,  and random elements $\theta_i$ in some parameter space $\Theta$,  independent of $T_i\in\mathcal{T}$. By suitable choice of the family $f(\,\cdot\,;\,\theta)$ and some distribution on $\Theta$ they show how a plethora of Markov chains can be cast in this light.
   
    In our case, the state space $\mathcal{T}$ will be the set of optimal transport maps
    $$\mathcal{T}:=\{T :\Omega \to \Omega| T(\omega_1) = \omega_1 , T(\omega_2)=\omega_2, T \text{ is strictly increasing and continuous} \},$$
    viewed as a closed and complete subset of the Lebesgue space $L^p(\Omega)$ equipped with the corresponding $p$-distance $\|\cdot\|_p$, for some $ 1\leq p < \infty$ (we will mostly focus on $p=2$). And, the question is how to define $f$ and $\theta_i$ to generate an iterated random system that is sufficiently rich to serve as a basis for interesting autoregressive models, yet remains tractable and admits a non-degenerate stationary solution. Naively, one might simply posit that $\Theta=\mathcal{T}$ and $f(T;\theta)=\theta\circ T$, as increasing maps form a transformation group under composition. However,  $f_\theta$ needs to be a contraction ``on average" (in a precise sense) for the \citet{diaconis1999iterated} results to be applicable.
   
    This motivates forms of $f$ that are ``contractive compositions". To this aim, given $|\alpha|\leq  1$, define the $\alpha$-contraction of an optimal transport map to be the operator $T\mapsto [\alpha T]$ defined pointwise via    
    \begin{equation}{\label{scalar_multiplication}}
        [\alpha T](x) = \begin{cases}
            x+\alpha(T(x)-x) & 0<\alpha \leq 1 \\
            x & \alpha=0 \\
            x+\alpha(x-T^{-1}(x)) & -1\leq\alpha <0.
         \end{cases}
    \end{equation}
   
    This definition is due to \citet{zhu2021autoregressive}, under slightly different terminology/notation, and mimics the operation of contracting an unconstrained function by a scalar, but conforming to the constraints elicited by working in $\mathcal{T}$. Notice that $T\mapsto [\alpha T]$ is indeed a contraction on $\mathcal{T}$ with respect to $L^1$ norm, with the identity as its fixed point -- any other fixed point must equal the identity almost everywhere by the Banach fixed-point theorem.
   
    Finally, given $|\alpha|<1$ and $\theta\in\mathcal{T}$ we can now make precise the notion of $f$ being a ``contractive composition" map by defining
    $$ f(T;\theta)=\theta\circ [\alpha T].$$
    To define an iterated random system, it suffices to put a probability distribution $Q$ on $\mathcal{T}$, and make i.i.d. draws $\theta_i\sim Q$ yielding
    \begin{equation}\label{general-iteration}T_i = f(T_{i-1};\theta_i).\end{equation}

    Our proposal is to draw i.i.d. elements of $\mathcal{T}$ with a specified expectation $S \in\mathcal{T}$, say $\theta_i = T_{\epsilon_i} \circ S,$ for $\{T_{\epsilon_i}\}_{i=1}^{N}$ a collection of independent and identically distributed random optimal maps satisfying $\E\{T_{\epsilon_i}(x)\}=x$ almost everywhere on $\Omega$. Explicitly, our iteration is now
    \begin{equation}\label{Model}
        T_i = f(T_{i-1}; \underset{\theta_i}{\underbrace{T_{\epsilon_i} \circ S}})=\underset{\theta_i}{\underbrace{T_{\epsilon_i} \circ S}} \circ [\alpha T_{i-1}].
    \end{equation}
     The degrees of freedom in this iteration are the choice of $S\in\mathcal{T}$ and $\alpha\in [-1,1]$. In a statistical setting, these would be the targets of estimation. This definition extends the iteration of \citet{zhu2021autoregressive} where $S$ was a priori fixed to be the identity. Our extension seems natural and conceptually straightforward: it iterates contracted composition with perturbations of an arbitrary element of the transformation group, rather than with perturbations of the neutral element.  Yet, it substantially complicates the subsequent probabilistic analysis and estimation theory. In exchange, we get a richer class of autoregressive models that exhibit advantages in the context of modeling and data analysis.  We elaborate on the relationship and the nature of the extension in a subsequent paragraph. We then show that the iteration admits a unique stationary solution (under some additional assumptions). First, though, we explore how such an iterated random system of optimal maps could be used as a basis for distributional autoregression.

    \subsection{Autoregressive Models}{\label{interpretations}}
   
    The main purpose of a random iteration \eqref{general-iteration} is the construction of a Markov chain model for a dependent sequence of probability distributions $\mu_i\in\W_2(\Omega)$, that will always be taken to possess a continuous cumulative distribution function. The models we seek are of autoregressive type, and so should ultimately be interpretable as a structural specification of the one-step conditional mean. Given stationary random sequence $\{T_i\}$ of optimal maps, there appear to be (at least) three different ways of doing so, by relating the $T_i$ to some suitable feature of $\{\mu_i\}$:
   
    \begin{enumerate}
    \medskip
    \item[(I)] Modeling the ``increments" $T_{\mu_{i-1}}^{\mu_i}:=F^{-1}_{\mu_i}\circ F_{\mu_{i-1}}$ as being equal to $T_i$ (we call these increments, as $T_{\mu_{i-1}}^{\mu_i}$ is the optimal map pushing $\mu_{i-1}$ forward to $\mu_i$), or equivalently modeling the quantiles as
    $$ F^{-1}_{\mu_i}:= T_i\circ F^{-1}_{\mu_{i-1}}.$$
     When $\{T_i\}$ is stationary, this yields a process with stationary increments, but the process could be non-stationary (if so, it's interesting to understand if there is ``drift"). This chain corresponds to specifying that the
    (usual) conditional expectation of $F^{-1}_{\mu_i}$ given $F^{-1}_{\mu_{i-1}}$ as
    $$\mathbb{E}[F^{-1}_{\mu_i}|F^{-1}_{\mu_{i-1}}]=\mathbb{E}\{T_i\}\circ F^{-1}_{\mu_{i-1}}=\mathbb{E}\{f(T_{i-1};\theta_i)\}\circ F^{-1}_{\mu_{i-1}}=\mathbb{E}\{\theta_i\circ[\alpha T_{i-1}]\}\circ F^{-1}_{\mu_{i-1}}.$$

  The precise form of $\mathbb{E}[T_i]$ will depend on the stationary solution of $T_i=f(T_{i-1};\theta_i)$. \\

    \medskip
    \item[(UQ)] Modeling the (uniform) quantiles $F_{\mu_i}^{-1}$ as being equal to $T_i$,
    $$F_{\mu_i}^{-1}:=T_i.$$
    This automatically yields a stationary process when $\{T_i\}$ is stationary, directly interpretable at the level of quantiles, and corresponds to specifying the (usual) conditional expectation of $F^{-1}_{\mu_i}$ given $F^{-1}_{\mu_{i-1}}$ as
    $$\mathbb{E}[F^{-1}_{\mu_i}|F^{-1}_{\mu_{i-1}}]=(\mathbb{E}\theta_i)\circ [\alpha F^{-1}_{\mu_{i-1}}]=S\circ [\alpha F^{-1}_{\mu_{i-1}}]=f(F^{-1}_{\mu_{i-1}};S).$$
   This model corresponds to an autoregressive extension of the model in \citet{ghodratidistribution}.

    \medskip
    \item[(GQ)] Modeling the generalised quantiles \citep{chernozhukov2017monge} or $\mu$-quantiles $F^{-1}_{\mu_i}\circ F_\mu$  with respect to some measure $\mu$ as being equal to $T_{i}$. This also immediately yields stationarity and (under regularity conditions) is equivalent to stating $\mu_{i}=T_{i}\#\mu$, in effect modeling the $\mu_i$ as serially dependent ``perturbations" of a fixed $\mu$. This corresponds to specifying the (usual) conditional expectation of $F^{-1}_{\mu_i}$ given $F^{-1}_{\mu_{i-1}}$ as
    $$\mathbb{E}[F^{-1}_{\mu_i}|F^{-1}_{\mu_{i-1}}]=(\mathbb{E}\theta_i)\circ[\alpha [F^{-1}_{\mu_{i-1}}\circ F_\mu]]=S\circ [\alpha [F^{-1}_{\mu_{i-1}}\circ F_\mu]]=f(F^{-1}_{\mu_{i-1}}\circ F_\mu;S).$$
    \end{enumerate}

    Note that setting $\alpha = 1$ in (UQ) yields the same model as setting $\alpha = 0$ in (I), interpretable as a random walk, and this we shall revisit. In Section \ref{real_data_analysis} we will focus on (UQ) and (I) to model sequential distributional data and discuss the merits/drawbacks of each approach. Model (GQ) can actually be seen to be a variant of the model (UQ) albeit under a modification of the definition of the contraction operator itself  -- see Section \eqref{sec:gq-model}, and especially Remark \eqref{rem:gq-model} for an equivalent characterization of the model (GQ)

    \color{black}

    \subsection{Comparison with Related Work}{\label{related works}}
   
    Our iteration \eqref{Model} represents a  generalization of the iteration in
    \citet{zhu2021autoregressive}, by combining their notion of $\alpha$-contraction (which they call \textit{distributional scalar multiplication}), with the functional structure of the model in \citet{ghodratidistribution}. Specifically, \citet{zhu2021autoregressive} considered autoregressive models for distributional time series, based on the iterative system of optimal transport maps
    \begin{equation}{\label{model_muller}}
        T_i = T_{\epsilon_i} \circ [\alpha T_{i-1}].
    \end{equation}
    That this is a special case of our system \eqref{Model} when $S$ is fixed to be the identity map $\id(x)=x$. Their clever $\alpha$-contraction, combined with classical results on iterated random function theory, allows one to deduce the existence of a unique stationary solution to the iteration \eqref{model_muller} thanks to the contracting effect of $\alpha$ for $-1<\alpha<1$ (and some additional technical assumptions).
       
    \noindent However, basing a distributional autoregressive model on this system is restrictive in two important ways:
    \begin{enumerate}
    \item As a stochastic model, the system \eqref{model_muller} is parametric and univariate: the only unknown is the scalar coefficient $\alpha\in (-1,1)$. Correspondingly, when basing our model on that iteration (with any of the three interpretations specified in the previous section), the temporal dependence of $\mu_{i}$ on $\mu_{i-1}$ will be completely specified up to an unknown scalar parameter. This is reminiscent of autoregressive models on the real line but is arguably overly restrictive in a functional data analysis (or even multivariate analysis) setting, where the temporal dependence is very likely more complex. A genuinely functional model would replace the scalar coefficient with a suitable \textit{functional coefficient}, e.g. a non-linear operator.
     \\
     
    \item If a stationary solution to system \eqref{model_muller} exists,  then it must satisfy  $\E(T_i) = \id$. To see this, recall the definition of the scalar multiplication \eqref{scalar_multiplication} and observe that
    $$
            \E[T_{i}] = \E[T_{i+1}]
            = \E[\E[T_{i+1}|T_{i}]]
            = \E[\alpha T_{i}].
    $$
    This is consequential if using the sequence $T_i$ to induce a distributional time series $\{\mu_i\}$. In the (I) model, where $T_i$ models the increments between consecutive $\mu_i$, this implies that the conditional Fr\'echet mean (in the Wasserstein metric) of $\mu_i$ given $\mu_{i-1}$ is exactly equal to $\mu_{i-1}$, a sort of `Fr\'echet martingale'. Effectively this trivializes the regressor relationship to be an identity -- there is no modeling flexibility for the conditional mean, only the conditional variance (via $\alpha$). In the (UQ) model, where $T_i\equiv F^{-1}_{\mu_i}$ is taken as the quantile function of $\mu_i$, the fact that $\E(T_i) = \id$ implies that the distributional autoregression model can only admit the uniform distribution as its Fr\'echet mean (with respect to the Wasserstein metric). There is no flexibility in the modeling of the marginal mean.  \end{enumerate}
    By contrast, models based on our system \eqref{Model} are genuinely functional, since on account of the unknown transport map $S$. Furthermore, our model can accommodate \textit{any} distribution as its Fr\'echet mean: given any optimal map $T\in \mathcal{T}$,  there exist $S$ and $\alpha$ such that $\E(T_i) = T$.

    The optimal map interpretation of our system \eqref{Model} is an auto-regressive modification of the distributional optimal transport regression model of \citet{ghodratidistribution}.  \citet{ghodratidistribution} define the regression model
    $$\nu_{i}=T_{\epsilon_i}\#(S\#\mu_i),  \quad  \{\mu_i,\nu_i\}_{i=1}^N,$$
    where $S:\Omega \to \mathbb{R}$ is an unknown optimal map and $\{T_{\epsilon_i}\}_{i=1}^{N}$ is a collection of independent and identically distributed random optimal maps satisfying $\E\{T_{\epsilon_i}(x)\}=x$ almost everywhere on $\Omega$.
    By direct analogy, an autoregressive model (optimal map interpretation) for a time series of distributions $\{\mu_i\}$ would be defined as
    \begin{equation}{\label{model}}
    \mu_{i}=T_{\epsilon_i}\#(S\#\mu_{i-1}),
    \end{equation}
    which is equivalent to model \eqref{Model} when $\alpha=0$ and when we interpret $T_i$ such that $\mu_i = T_i \# \mu_{i-1}$, i.e. the optimal map interpretation. If we take the quantile interpretation, the two models are again related for $\alpha = 1$ since model \eqref{Model} is equivalent to
    $$F^{-1}_i = T_{\epsilon_i} \circ S \circ F^{-1}_{i-1}.$$
   
    However, assuming the noise maps $T_{\epsilon_i}$ are close to identity, one observes that the series of CDFs $F^{-1}_i$ would stabilize around a step function where the position of the jumps coincide with fixed points of the map $S$, and therefore the distribution $\mu_i$ would oscillate around a mixture of Dirac measures. This is where we combine the functional structure of \citet{ghodratidistribution} with the scalar ``contractive effect" introduced by \citet{zhu2021autoregressive} -- intuitively, the magnitude of $\alpha$ regulates the non-degeneracy of the model. The next Section demonstrates that this combined extension does indeed yield a unique stationary solution.
   
    \subsection{Existence of Unique Stationary Solution}\label{stationarity}
    We now turn to establish the existence of a unique stationary solution for the system \eqref{Model}. We will use the results of \citet{wu2004limit}, extending to our iteration \eqref{Model} the steps follows by \citet{zhu2021autoregressive} in the context of iteration \eqref{model_muller}.
    Let $\{T_{\epsilon_i}\}_{i=1}^{N}$ be a collection of independent and identically distributed random optimal maps satisfying $\E\{T_{\epsilon_i}(x)\}=x$ almost everywhere on $\Omega$.     Define $\Phi_i,\tilde{\Phi}_{i,m}:\mathcal{T}\to\mathcal{T}$ by
    \begin{equation}{\label{iterated_system}}
    \begin{split}
    &\Phi_i(T) = f(T;T_{\epsilon_i}\circ S)=T_{\epsilon_i} \circ S \circ [\alpha T] \\
    &\tilde{\Phi}_{i,m}(T) = \Phi_i\circ \Phi_{i-1}\circ\cdots\circ \Phi_{i-m+1}(T).
    \end{split}
    \end{equation}

   The following assumption stipulates
    \begin{assumption}(Moment Contracting Condition \citep{wu2004limit})\label{moment_contracting_assumption}
        Suppose there exists $\eta>0, Q_0\in \mathcal{T}, C>0$ and $r\in(0,1)$ such that
        \begin{equation}{\label{moment_contracting_eq}}
            \E\norm{\tilde{\Phi}_{i,m}(Q_0)-\tilde{\Phi}_{i,m}(T)}_2^\eta\leq Cr^m \norm{Q_0-T}_2^\eta
        \end{equation}
        holds for all $i\in \mathbb{Z}$, $m\in \mathbb{N}$ and all $T\in \mathcal{T}$.
    \end{assumption}
   
    \begin{lemma}{\label{stationary_solution}}
        Assume the parameters of the model \eqref{Model} satisfy the Assumption \ref{moment_contracting_assumption}. Then for all $T\in \mathcal{T}$, $\tilde{T}_i:=\lim_{m\to\infty} \tilde{\Phi}_{i,m}(T)\in \mathcal{T}$ exists almost surely and does not depend on $T$. In addition, $\tilde{T}_i$ is a stationary solution to the following system of stochastic transport equations:
        $$T_i = T_{\epsilon_i} \circ S \circ [\alpha T_{i-1}], \quad i \in \mathbb{Z},$$
        and is unique almost surely.
    \end{lemma}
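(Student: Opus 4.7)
The plan is to follow the classical Diaconis–Freedman iterated random function scheme in the form generalized by \citet{wu2004limit}, adapting the route of \citet{zhu2021autoregressive} to our iteration. The key idea is to apply the maps backwards in time and exploit the geometric decay in Assumption \ref{moment_contracting_assumption} to establish a Cauchy limit, transfer that limit to a fixed point of the recursion by continuity, and finally use the same contracting estimate to rule out any other stationary solution.

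For existence of the limit, the semigroup identity $\tilde\Phi_{i,m'}(T)=\tilde\Phi_{i,m}\bigl(\tilde\Phi_{i-m,m'-m}(T)\bigr)$ holds for $m<m'$, and the innovations $\{T_{\epsilon_j}:i-m+1\le j\le i\}$ used in the outer map $\tilde\Phi_{i,m}$ are independent of those used in the inner map $\tilde\Phi_{i-m,m'-m}$. Setting $U:=\tilde\Phi_{i-m,m'-m}(T)$, I would condition on $U$ and invoke Assumption \ref{moment_contracting_assumption}, combined with the triangle inequality through the reference map $Q_0$, to obtain
$$\E\bigl\|\tilde\Phi_{i,m'}(T)-\tilde\Phi_{i,m}(T)\bigr\|_2^{\eta}\;\lesssim\; r^{m},$$
where the implicit constant is uniform in $m'$ because every element of $\mathcal{T}$ takes values in the bounded interval $\Omega$. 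Summability of $r^m$, together with Markov's inequality and Borel–Cantelli applied to a geometrically decreasing threshold, forces $\{\tilde\Phi_{i,m}(T)\}_m$ to be Cauchy in $L^2(\Omega)$ almost surely; the limit $\tilde T_i$ lies in $\mathcal{T}$ by the closedness of $\mathcal{T}$ in $L^2$ recorded in Section \ref{iterations}, and independence from $T$ follows from the same contracting estimate applied to two arbitrary starting points.

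The next step is to verify the recursion $\tilde T_i=\Phi_i(\tilde T_{i-1})$ by passing the identity $\tilde\Phi_{i,m}(T)=\Phi_i\bigl(\tilde\Phi_{i-1,m-1}(T)\bigr)$ to the limit. For this I would argue that $\Phi_i$ is continuous on $\mathcal{T}\subset L^2(\Omega)$: the $\alpha$-contraction is $L^2$-Lipschitz (with constant $|\alpha|$ when $\alpha\ge 0$ and an analogous estimate via the reflection $T\mapsto T^{-1}$ when $\alpha<0$), while left-composition by the uniformly continuous bounded maps $S$ and $T_{\epsilon_i}$ is $L^2$-continuous after extracting an a.e.\ convergent subsequence and invoking dominated convergence. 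Stationarity of $\{\tilde T_i\}_{i\in\mathbb Z}$ then follows from the i.i.d.\ structure of $\{T_{\epsilon_i}\}$, since the finite-dimensional law of $(\tilde T_{i_1},\dots,\tilde T_{i_k})$ depends only on the lags.

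For uniqueness, let $\{T_i'\}$ be any other stationary solution measurable with respect to $\{T_{\epsilon_j}:j\le i\}$. Then $T_i'=\tilde\Phi_{i,m}(T_{i-m}')$ for every $m$, and $T_{i-m}'$ is independent of $T_{\epsilon_{i-m+1}},\dots,T_{\epsilon_i}$. Conditioning on $T_{i-m}'$, applying Assumption \ref{moment_contracting_assumption} through $Q_0$, and using stationarity to bound $\E\|T_{i-m}'-Q_0\|_2^{\eta}$ by a constant yields $\E\|\tilde T_i-T_i'\|_2^{\eta}\lesssim r^{m}$ for every $m$, forcing $\tilde T_i=T_i'$ almost surely. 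The main delicacy in the whole argument is the $L^2$-continuity of the left-composition inside $\Phi_i$, since pre-composition is not continuous on $L^2$ in general; this is salvaged here only because $\Omega$ is compact and the outer maps $S,T_{\epsilon_i}$ are uniformly continuous, so a sub-subsequence/dominated-convergence argument suffices.
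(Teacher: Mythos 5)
Your proof is correct and follows exactly the route the paper invokes by citing \citet{wu2004limit} and \citet{zhu2021autoregressive}: backward iteration with the semigroup identity, geometric decay via Assumption \ref{moment_contracting_assumption} plus boundedness of $\Omega$, Markov/Borel--Cantelli for almost-sure Cauchyness, continuity of $\Phi_i$ to pass to the limit in the recursion, and the same contraction estimate conditioned on the distant past for uniqueness. Since the paper's own proof is only a one-line deferral to those references, your write-up is a faithful and appropriately detailed reconstruction of that argument, including the genuine (and often glossed-over) subtlety that left-composition by $T_{\epsilon_i}\circ S$ is $L^2$-continuous only thanks to the compactness of $\Omega$ and uniform continuity of the outer maps.
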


\begin{remark}\label{sufficient_condition_existence}
     \citet{zhu2021autoregressive} proposed a specific parameter condition for their model that ensures Assumption \ref{moment_contracting_assumption} is satisfied. We provide a similar sufficient condition for the parameters of Model \eqref{Model} that also guarantees the satisfaction of Assumption \ref{moment_contracting_assumption}. Let $L_\epsilon$ be constant such that $\E|T_\epsilon(x)-T_\epsilon(y)|^2\leq L_\epsilon^2 |x-y|^2$. Assuming $\alpha\geq 0$, if $|S(x)-S(y)|\leq L_S|x-y|$ and $\alpha L_S L_{\epsilon}< 1$, then Model \eqref{Model} satisfies Assumption \ref{moment_contracting_assumption} with $\eta=2$ and $r = \sqrt{\alpha L_S L_\epsilon}$. Similarly, if $\alpha<0$, suppose the aforementioned conditions are met and define $\mathcal{T}_{l,u} = \{ T\in \mathcal{T}:  0<L_l\leq T' \leq L_u <\infty\}$ and assume $\{T_i\} \subset \mathcal{T}_{l,u} \subset \mathcal{T}$ (see Lemma \ref{norm_inverse}). Then Model \eqref{Model} also satisfies Assumption \ref{moment_contracting_assumption} with $\eta=2$ and $r = \sqrt{\alpha L_S L_\epsilon}$.
\end{remark}

\subsection{Estimation and Statistical Analysis}\label{estimation}

We consider a time series of continuous distributions $\mu_i\in\W_2(\Omega)$ and corresponding time series $T_i\in\mathcal{T}$, which are related by one of the models from section 3.2. Although the methods to obtain $T_i$ may differ for each model, we can always obtain $T_i$ by observing $\mu_i$. Our analysis is thus applicable to all three models studied, but in each different model, the $T_i$ will represent a different feature of the distributional time series. For the remainder of our analysis, we assume that $T_i$ is a (the) stationary solution obtained from system \eqref{Model}.

As discussed in Section \ref{related works}, when $S$ is fixed a priori to be the identity, our iteration \eqref{Model} will reduce to that of \citet{zhu2021autoregressive}. In this simplified setting, \citet{zhu2021autoregressive} use the fact that $\alpha$ is the minimizer of $\E\norm{T_{i+1}-[\alpha T_i]}_2^2$ to  obtain a closed form expression for $\alpha$ as
$$
    \frac{\displaystyle\int_\Omega \E[(T_{i+1}(x)-x)(T_i(x)-x)]\diff x}{\displaystyle\int_\Omega \E[(T_{i}(x)-x)^2]\diff x}
    $$
    when $\alpha\in[0,1)$ or
        $$\frac{\displaystyle\int_\Omega \E[(T_{i+1}(x)-x)(x-T^{-1}_i(x))]\diff x}{\displaystyle\int_\Omega \E[(x-T^{-1}_{i}(x))^2]\diff x}
$$
when $\alpha\in (-1,0)$. These show that $\alpha$ can be interpreted as the autocorrelation coefficient, and can be estimated by its empirical version, which allows for a straightforward path to consistency and parametric rates of convergence.

However, our more general iteration \eqref{Model}, involves an arbitrary non-decreasing map $S$ that also needs to be estimated. Consequently, not only are there no closed forms for the estimands $(\alpha,S)$ but the estimation problem becomes distinctly non-linear.

To motivate our estimators, we note that if $S$ were known, then $\alpha$ could be estimated by non-linear least squares, as the minimiser of $ \frac{1}{N} \sum_{i=1}^N\norm{S \circ [\alpha T_{i-1}]-T_i}^2_2$. On the other hand, if $\alpha$ were known, then a natural candidate to estimate $S$ would be the ergodic average
$$S_{N,\alpha}:=\frac{1}{N} \sum_{j=1}^N T_j \circ [\alpha T_{j-1}]^{-1}.$$
This is because the definition of the iteration $T_j=f(T_{j-1};T_{\epsilon_i}\circ S)=T_{\epsilon_i}\circ S\circ [\alpha T_{j-1}]$, combined with the assumption that $\E[T_{\epsilon_j}(x)]=x$, yields that
$$\E\{T_j\circ[\alpha T_{j-1}]^{-1}\}=\E\{T_{\epsilon_j}\circ S\}=S.$$  
Since $S_{N,\alpha}$ is available in closed form for any choice of $\alpha$, this suggests plugging the expression $S_{N,\alpha}$ for $S$ into the sum of squares, to obtain an objective that depends only on $\alpha$. Minimising the said objective over $\alpha$ one obtains an estimator $\hat\alpha$, which automatically induces an estimator of $S$ in the form of $S_{N,\hat\alpha}$.

\medskip
\noindent Formally,  we define the estimators $(\hat{\alpha}_N,S_{N,\hat{\alpha}_N})$ of $(\alpha,S)$ as follows:

\begin{equation}\label{estimator}
        \hat{\alpha}_N \coloneqq \arg\min_\alpha M_N(\alpha),
\end{equation}
where
\begin{equation}\label{estimator-components}
    \begin{split}
            &M_N(\alpha)\coloneqq \frac{1}{N} \sum_{i=1}^N g_\alpha(T_{i-1},T_i,S_{N,\alpha})\\
            &g_\alpha(T_{i-1},T_i,S) \coloneqq  \norm{S \circ [\alpha T_{i-1}]-T_i}^2_2\\
        &S_{N,\alpha} \coloneqq \frac{1}{N} \sum_{j=1}^N T_j \circ [\alpha T_{j-1}]^{-1}.
    \end{split}
\end{equation}

To analyse the behaviour of our estimators, we also define the following population quantities:

\begin{equation}
    \begin{split}
        &S_\alpha \coloneqq \E[T_j \circ [\alpha T_{j-1}]^{-1}]\\
        &M(\alpha) \coloneqq \E g_\alpha(T_{i-1},T_i,S_\alpha).
    \end{split}
\end{equation}
The left-hand sides do not depend on $j$ due to stationarity, which will be assumed throughout.

\medskip
For the sake of clarity, we will henceforth denote the true parameters of the model using boldface fonts, namely as $(\at,\St)$.
\begin{theorem}{\label{S_alpha}}
    If the true parameters of the model are $(\at,\St)$, then $S_{\at}=\St$.
\end{theorem}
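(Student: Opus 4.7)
The plan is a direct substitution followed by one cancellation and one application of the noise condition. Once the correct substitution is made, the identity is a one-line consequence of the structure of the iteration \eqref{Model}, so the ``proof'' is essentially a matter of invoking the right pieces.

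First, I would substitute the structural equation under the true parameters, namely $T_j = T_{\epsilon_j} \circ \St \circ [\at T_{j-1}]$, into the definition $S_{\at} := \E[T_j \circ [\at T_{j-1}]^{-1}]$. Since Lemma \ref{stationary_solution} guarantees that the stationary $T_{j-1}$ lies in $\mathcal{T}$ almost surely, and the $\alpha$-contraction operator preserves $\mathcal{T}$ (as noted in Section \ref{iterations}), the map $[\at T_{j-1}]$ is strictly increasing with fixed endpoints and therefore admits a genuine pointwise inverse. Hence $[\at T_{j-1}] \circ [\at T_{j-1}]^{-1} = \id$ almost everywhere, and after cancellation the expression reduces to
$$S_{\at} = \E[T_{\epsilon_j} \circ \St].$$

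Next, I would evaluate pointwise at an arbitrary $x \in \Omega$. Since $\St$ is deterministic, $\St(x)$ is a fixed point of $\Omega$, and the defining property $\E[T_{\epsilon_j}(y)] = y$ for almost every $y \in \Omega$ applies with $y = \St(x)$, giving $\E[T_{\epsilon_j}(\St(x))] = \St(x)$. Exchanging pointwise evaluation with expectation (valid by square integrability of the paths, as the maps take values in the bounded set $\Omega$) then yields $S_{\at}(x) = \St(x)$ for almost every $x$, which is the claimed identity.

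The only points worth verifying carefully are that $[\at T_{j-1}]$ is indeed pointwise invertible rather than merely formally so, and that the plug-in $y = \St(x)$ is legitimate under the expectation. The first follows from the fact that the contraction operator, whatever the sign of $\at$, returns a strictly increasing map fixing the endpoints of $\Omega$ (a convex combination of $\id$ with either $T_{j-1}$ or $T_{j-1}^{-1}$). The second is immediate from the independence of $T_{\epsilon_j}$ from $\St$ (which is deterministic). Neither is a genuine obstacle; the statement is essentially a sanity check that the population-level estimator targets the true parameter.
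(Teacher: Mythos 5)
Your proof is correct and follows the same route as the paper's: substitute the iteration $T_j = T_{\epsilon_j} \circ \St \circ [\at T_{j-1}]$ into the definition of $S_{\at}$, cancel $[\at T_{j-1}]\circ[\at T_{j-1}]^{-1}$, and invoke $\E[T_{\epsilon_j}(y)]=y$. You supply a few extra justificatory remarks (invertibility of $[\at T_{j-1}]$, legitimacy of evaluating at $y=\St(x)$) that the paper leaves implicit, but the argument is the same.
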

\begin{proof}
    For the true $\at$, we have
    \begin{equation}
        \begin{split}
            S_{\at} &= \E[T_j \circ [\at T_{j-1}]^{-1}] \\
            &=  \E[T_{\epsilon_j}\circ \St\circ [\at T_{j-1}] \circ [\at T_{j-1}]^{-1}] \\
            &= \E[T_{\epsilon_j} \circ \St] = \St
        \end{split}
    \end{equation}
\end{proof}

We show the consistency of the estimators $(\hat{\alpha}_N,S_{N,\hat{\alpha}_N})$ in the following 4 steps corresponding to the lemmas \ref{minimizer_expectation}, \ref{CLT_maps}, \ref{CLT_g} and Theorem \ref{consistency} respectively:
\begin{itemize}
    \item $\at$ is the unique minimizer of $M(\alpha)$.
    \item $S_{N,\alpha}$ converges uniformly (with respect to $\alpha$) in probability to $S_\alpha$ in $L^2$.
    \item $M_N(\alpha)$ converges uniformly in probability to $M(\alpha)$.
    \item we conclude the consistency (and identifiability) using the M-estimation theory.
\end{itemize}

\begin{lemma}{\label{minimizer_expectation}}(Unique Minimizer of $M(\alpha)$)
    For any $\alpha\neq \at$ we have
    $$M(\at)=\E g_{\at}(T_{i-1},T_i,S_{\at}) < \E g_\alpha(T_{i-1},T_i,S_\alpha)=M(\alpha),$$
    where $\at$ is the true $\alpha$.
\end{lemma}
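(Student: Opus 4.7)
The plan is to express $M(\alpha)-M(\at)$ as a squared $L^2$-distance between two deterministic-centered quantities and then invoke identifiability. The key trick is to substitute the model dynamics $T_i = T_{\epsilon_i}\circ \St \circ [\at T_{i-1}]$ into $M(\alpha)$ and use the zero-mean property $\E T_{\epsilon_i}(y)=y$ to peel off an $\alpha$-independent noise variance.

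Concretely, I decompose
\begin{equation*}
S_\alpha\circ[\alpha T_{i-1}] - T_i = A_i + B_i,
\end{equation*}
with $A_i := S_\alpha\circ[\alpha T_{i-1}] - \St\circ[\at T_{i-1}]$ and $B_i := \St\circ[\at T_{i-1}] - T_{\epsilon_i}\circ \St\circ[\at T_{i-1}]$, then expand the squared $L^2$ norm. Since $T_{\epsilon_i}$ is independent of $T_{i-1}$ with $\E T_{\epsilon_i}(y)=y$ for a.e.\ $y\in\Omega$, conditioning on $T_{i-1}$ gives $\E[B_i(x)\mid T_{i-1}]=0$ for a.e.\ $x$; since $A_i$ is $\sigma(T_{i-1})$-measurable, the cross term $2\E\int A_i(x)B_i(x)\,\diff x$ vanishes by Fubini, leaving $M(\alpha) = \E\|A_i\|_2^2 + \E\|B_i\|_2^2$. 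The noise variance $\E\|B_i\|_2^2$ depends only on $(\at,\St)$ and the law of $T_\epsilon$, so it is a constant $C$ independent of $\alpha$. At $\alpha=\at$, Theorem \ref{S_alpha} yields $S_{\at}=\St$ and hence $A_i\equiv 0$, so $M(\at)=C$. Therefore
\begin{equation*}
M(\alpha)-M(\at) \;=\; \E\norm{S_\alpha\circ[\alpha T_{i-1}] - \St\circ[\at T_{i-1}]}_2^2 \;\geq\; 0.
\end{equation*}

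The remaining, and main, obstacle is upgrading this inequality to a strict one when $\alpha\neq\at$. The argument here will be identifiability-based: if the displayed expectation vanished, then almost surely in $T_{i-1}$ and a.e.\ in $x$ one would have $S_\alpha((1-\alpha)x+\alpha T_{i-1}(x)) = \St((1-\at)x+\at T_{i-1}(x))$ (with a dual expression via $T_{i-1}^{-1}$ if $\alpha$ or $\at$ is negative). For $\alpha\neq\at$, the pair $(u,v) := ([\alpha T_{i-1}](x),\,[\at T_{i-1}](x))$ is an invertible affine transformation of $(x,T_{i-1}(x))$, so its support inherits the genuinely two-dimensional spread of the stationary law of $T_{i-1}$; but the identity $S_\alpha(u)=\St(v)$ cuts out at most a one-dimensional level set in $(u,v)$-space. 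This yields a contradiction under a mild non-degeneracy assumption on the stationary law of $T_{i-1}$ (for instance, that $T_{i-1}-\id$ is not almost surely a deterministic function of $x$), which is implicit in the model being non-trivially autoregressive.
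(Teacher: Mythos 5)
Your proof is correct and, at its core, does the same thing as the paper's proof, just presented through a different lens. The paper's steps~1--3 argue that for each fixed $T_{i-1}$, the function $S_{\at}\circ[\at T_{i-1}]$ uniquely minimizes $h\mapsto\E_\epsilon\|h-T_{\epsilon}\circ\St\circ[\at T_{i-1}]\|_2^2$, and then integrates over $T_{i-1}$; you instead write the orthogonal (bias--variance) decomposition $M(\alpha)=\E\|A_i\|_2^2+\E\|B_i\|_2^2$ explicitly, with the cross term killed by conditioning on $T_{i-1}$ and $\E T_{\epsilon_i}(y)=y$. These are the same projection argument, yours just carried out algebraically rather than via the abstract minimizer characterization. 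For the strictness, both proofs reduce to the identifiability fact that $S_\alpha\circ[\alpha T]=\St\circ[\at T]$ holding $\pi$-a.s.\ forces $\alpha=\at$ when the stationary law is non-degenerate; the paper phrases this as $\St\circ[\at T_j]\circ[\alpha T_j]^{-1}$ being non-deterministic in $T_j$ for $\alpha\neq\at$, while you phrase it as a dimension/level-set argument on the pair $(u,v)=([\alpha T](x),[\at T](x))$. Your version is a bit more geometric and somewhat hand-wavy (e.g.\ ``$T_{i-1}-\id$ is not almost surely a deterministic function of $x$'' should read ``$T_{i-1}$ is not a.s.\ a single fixed map''), but it is substantively equivalent; if you wanted to tighten it, an explicit calculation shows that for $\alpha\neq\at$ the map $T\mapsto[\at T]\circ[\alpha T]^{-1}$ is injective, so the constraint pins down $T$ and cannot hold simultaneously for two distinct maps in $\mathrm{supp}(\pi)$.
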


Now we show that $S_{N,\alpha}$ converges to $S_\alpha$ in probability for any $\alpha$ and also prove a central limit theorem (CLT) for $S_{N,\alpha}$.

If $\alpha = \at$, then it is straightforward to argue that $S_{N,\at}$ converges to $S_{\at}$: first note that for any $x\in[0,1]$, the strong law of large numbers yields that

$$S_{N,\at}=\frac{1}{N} \sum_{j=1}^N T_j \circ [\at T_{j-1}]^{-1} =\frac{1}{N}  \sum_{j=1}^N T_{\epsilon_j} \circ \St \to \E{[T_{\epsilon_j} \circ \St]}=\St.$$
Therefore the terms in the expression are independent and identically distributed with mean $\St$. From Theorem \ref{S_alpha}, we know that the true $\St=S_{\at}$. Therefore in this case that $\alpha=\at$, $S_{N,\alpha}$ converges in probability to $S_{\at}=\St$. However, in general, when $\alpha \neq \at$ the terms $T_j \circ [\alpha T_{j-1}]^{-1}$ are not independent for different $j$. Therefore, we first show that since $\{T_j\}$ satisfies the moment generating condition, we can quantify the dependency between the terms in the sequence $T_j \circ [\alpha T_{j-1}]^{-1}$ and apply CLT methods developed for functional time series.

\begin{lemma}{\label{m_dependent}}
    A sequence $\{T_n\}_{i=-\infty}^\infty$ that satisfies the geometric moment contracting condition (\ref{moment_contracting_assumption}) for $\eta\geq 2$, also satisfies the conditions (1.1),(1.2),(2.1) and (2.2) of \citet{horvath2013estimation}. Namely, assume $$T_n = f(\epsilon_n,\epsilon_{n-1},\cdots),$$
    where $\{\epsilon'_i\}$ is an independent copy of $\{\epsilon_i\}$ defined in the same probability space. Then, letting

    \begin{equation}
        T_{n,m}' = f(\epsilon_n,\epsilon_{n-1},\cdots,\epsilon_{n-m+1},\epsilon'_{n-m},\cdots),
    \end{equation}
    for any $0<\delta<1$ we have
    \begin{equation}{\label{m_dependent_approximation}}
        \sum_{m=1}^\infty (\E\norm{T_n-T'_{n,m}}^{2}_2)^{1/2} <\infty.
    \end{equation}

\end{lemma}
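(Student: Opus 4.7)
The strategy is to couple the two trajectories through the common operator $\tilde\Phi_{n,m}$ of~\eqref{iterated_system} and then invoke Assumption~\ref{moment_contracting_assumption} fibrewise. Since $\{T_n\}$ is the stationary solution furnished by Lemma~\ref{stationary_solution}, the iterative identity $T_i = \Phi_i(T_{i-1})$ gives the Bernoulli-shift representation $T_n=f(\epsilon_n,\epsilon_{n-1},\dots)$ together with
$$T_n=\tilde\Phi_{n,m}(T_{n-m}),\qquad T'_{n,m}=\tilde\Phi_{n,m}(T'_{n-m}),$$
where $T'_{n-m}:=f(\epsilon'_{n-m},\epsilon'_{n-m-1},\dots)$ is an independent copy of $T_{n-m}$, and, crucially, both are independent of the block of innovations $(\epsilon_n,\dots,\epsilon_{n-m+1})$ that drives $\tilde\Phi_{n,m}$.

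\textbf{Applying the contractive estimate.} Inserting the anchor $Q_0$ of Assumption~\ref{moment_contracting_assumption} via the triangle inequality and using the convexity bound $(a+b)^\eta\leq 2^{\eta-1}(a^\eta+b^\eta)$ (valid because $\eta\geq 2$),
$$\norm{T_n-T'_{n,m}}_2^\eta \leq 2^{\eta-1}\Bigl(\norm{\tilde\Phi_{n,m}(T_{n-m})-\tilde\Phi_{n,m}(Q_0)}_2^\eta + \norm{\tilde\Phi_{n,m}(Q_0)-\tilde\Phi_{n,m}(T'_{n-m})}_2^\eta\Bigr).$$
Conditioning on $T_{n-m}$ (respectively $T'_{n-m}$), the independence noted above lets us apply \eqref{moment_contracting_eq} pathwise, so that after taking outer expectations,
$$\E\norm{T_n-T'_{n,m}}_2^\eta \leq 2^\eta\, C\, r^m\,\E\norm{T_{n-m}-Q_0}_2^\eta.$$
Since every element of $\mathcal{T}$ takes values in the compact interval $\Omega=[\omega_0,\omega_1]$, the deterministic bound $\norm{T_{n-m}-Q_0}_2\leq (\omega_1-\omega_0)^{3/2}$ holds, so the right-hand side is $\lesssim r^m$.

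\textbf{Finishing.} Jensen's inequality (available because $\eta\geq 2$) gives
$$\bigl(\E\norm{T_n-T'_{n,m}}_2^2\bigr)^{1/2} \leq \bigl(\E\norm{T_n-T'_{n,m}}_2^\eta\bigr)^{1/\eta} \lesssim r^{m/\eta}.$$
Since $r^{1/\eta}\in(0,1)$, the geometric series $\sum_{m\geq 1} r^{m/\eta}$ converges, which is precisely~\eqref{m_dependent_approximation}. The four conditions of \citet{horvath2013estimation} amount to stationarity together with exactly this kind of $L^2$-approximability for the coupled trajectories, so once~\eqref{m_dependent_approximation} is established they can be verified item by item using the boundedness of $\mathcal{T}$.

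\textbf{Main obstacle.} The only delicate point is the independence bookkeeping, namely justifying that when the contractive bound \eqref{moment_contracting_eq} is invoked with a random second argument $T_{n-m}$ (or $T'_{n-m}$), the $\tilde\Phi_{n,m}$-expectation can legitimately be taken conditionally. This hinges on the Bernoulli-shift representation provided by Lemma~\ref{stationary_solution} and on the observation that the innovation block driving $\tilde\Phi_{n,m}$ is precisely the one shared between $T_n$ and $T'_{n,m}$; beyond that, the argument is a routine combination of Jensen's inequality and the geometric contraction furnished by the assumption.
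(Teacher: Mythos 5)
Your proof is correct and follows essentially the same route as the paper's: represent $T_n=\tilde\Phi_{n,m}(T_{n-m})$ and $T'_{n,m}=\tilde\Phi_{n,m}(T'_{n-m})$, insert the anchor $Q_0$ via the triangle inequality, apply Assumption~\ref{moment_contracting_assumption}, and use Jensen/Lyapunov to reconcile the exponents $2$ and $\eta$. If anything, you are a bit more careful than the paper about the conditioning needed to apply the contraction bound with a random argument $T_{n-m}$, and about the final uniform bound $\norm{T_{n-m}-Q_0}_2\lesssim 1$ coming from the compactness of $\Omega$, both of which the paper leaves implicit.
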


\begin{lemma}(Central limit for $S_{N,\alpha}$)\label{CLT_maps}
    Suppose the parameters of the iteration \eqref{Model} satisfy the Assumption \ref{moment_contracting_assumption}. Then for any $\alpha$, there is a Gaussian process $\Gamma_\alpha$ such that
    $$\sqrt{N} (S_{N,\alpha}-S_\alpha)  \overset{d}{\to} \Gamma_\alpha,  \quad \text{in } L^2.$$
    Also,
   $$\sup_\alpha \norm{S_{N,\alpha}-S_\alpha}_2 = o_\p(1)$$
\end{lemma}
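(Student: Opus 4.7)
The plan is to treat the two statements separately: (i) for each fixed $\alpha$, show that $Y_j(\alpha) := T_j \circ [\alpha T_{j-1}]^{-1} - S_\alpha$ is a centered stationary $L^2(\Omega)$-valued Bernoulli-shift sequence satisfying the $m$-dependent $L^2$-approximation hypothesis of \citet{horvath2013estimation}, and invoke their functional CLT; (ii) for the uniform statement, combine the pointwise consistency from (i) with uniform stochastic equicontinuity in $\alpha$.

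For Step (i), since $T_j = f(\epsilon_j, \epsilon_{j-1}, \ldots)$ is a Bernoulli shift by Lemma \ref{stationary_solution}, so is $Y_j(\alpha)$, and it is centered by Theorem \ref{S_alpha}. Using the coupling $T'_{j,m}$ of Lemma \ref{m_dependent}, I would set $Y'_{j,m}(\alpha) := T'_{j,m} \circ [\alpha T'_{j-1,m-1}]^{-1} - S_\alpha$ and establish a bound of the form
\begin{equation*}
\norm{Y_j(\alpha) - Y'_{j,m}(\alpha)}_2 \lesssim \norm{T_j - T'_{j,m}}_2 + \norm{T_{j-1} - T'_{j-1,m-1}}_2,
\end{equation*}
which relies on $T \mapsto [\alpha T]^{-1}$ being Lipschitz on $\mathcal{T}_{l,u}$ (the subset of $\mathcal{T}$ with derivatives in $[L_l,L_u]$) and on $T_j$ being Lipschitz on $\Omega$. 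Combined with Lemma \ref{m_dependent}, this gives the summability condition $\sum_m (\E\norm{Y_j(\alpha) - Y'_{j,m}(\alpha)}_2^2)^{1/2} < \infty$, after which the Horvath-Kokoszka functional CLT delivers a centered Gaussian element $\Gamma_\alpha \in L^2(\Omega)$ with $\sqrt{N}(S_{N,\alpha} - S_\alpha) \overset{d}{\to} \Gamma_\alpha$.

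For Step (ii), first I would note that $[\alpha_1 T] - [\alpha_2 T]$ is $|\alpha_1 - \alpha_2|$-Lipschitz in the sup-norm (with constant controlled by $\mathrm{diam}(\Omega)$), and this property is preserved by inversion and by left-composition with $T_j$ on $\mathcal{T}_{l,u}$, yielding
\begin{equation*}
\norm{T_j \circ [\alpha_1 T_{j-1}]^{-1} - T_j \circ [\alpha_2 T_{j-1}]^{-1}}_2 \leq C_j \,|\alpha_1 - \alpha_2|,
\end{equation*}
with $\E C_j < \infty$ by stationarity. Averaging and invoking the ergodic theorem gives the stochastic equicontinuity bound $\sup_{|\alpha_1 - \alpha_2| \leq \delta} \norm{S_{N,\alpha_1} - S_{N,\alpha_2}}_2 = O_\p(\delta)$; a parallel Lipschitz bound for $\alpha \mapsto S_\alpha$ is immediate. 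Combined with the pointwise consistency on a dense countable subset of the compact parameter interval, a standard equicontinuity argument lifts this to $\sup_\alpha \norm{S_{N,\alpha} - S_\alpha}_2 = o_\p(1)$.

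The main obstacle will be the Lipschitz bookkeeping for the inversion map $T \mapsto [\alpha T]^{-1}$: inversion is not globally Lipschitz on all of $\mathcal{T}$, so one must restrict to $\mathcal{T}_{l,u}$ where derivatives are uniformly bounded away from $0$ and $\infty$. This is precisely the regime in which the iteration \eqref{Model} is known to be well-behaved (Remark \ref{sufficient_condition_existence}), so the restriction is natural; nevertheless, propagating the Lipschitz estimates uniformly in $\alpha$ and through the Bernoulli-shift coupling requires careful bookkeeping, and is the technical heart of the argument.
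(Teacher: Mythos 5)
Your proposal follows essentially the same route as the paper: an $m$-dependent approximation for the Bernoulli-shift sequence $\{T_j\circ[\alpha T_{j-1}]^{-1}\}$ feeding into the Horv\'ath--Kokoszka functional CLT, followed by Lipschitz-in-$\alpha$ equicontinuity plus pointwise consistency to obtain the uniform statement (the paper cites \citet[Cor.~3.1]{newey1991uniform} for this last step rather than running the dense-subset argument by hand, but these are the same idea).

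One point worth flagging. You restrict to $\mathcal{T}_{l,u}$ to make the inversion $T\mapsto[\alpha T]^{-1}$ Lipschitz, but this hypothesis is not in the lemma statement (which only assumes Assumption~\ref{moment_contracting_assumption}). The paper avoids this by way of Lemma~\ref{norm_inverse}: on all of $\mathcal{T}$ one has the H\"older bound $\norm{T^{-1}-S^{-1}}_2\lesssim\norm{T-S}_2^{1/2}$, which is enough here because the geometric decay $r^m$ from the coupling survives the square root and keeps $\sum_m (\E\norm{Y_j(\alpha)-Y'_{j,m}(\alpha)}_2^2)^{1/2}$ finite; likewise Lemma~\ref{S_g_Lipschitz} gives a H\"older modulus $|\alpha_1-\alpha_2|^b$ with $b\geq 1/2$ unconditionally, which is still sufficient for the Newey-type uniform-convergence step. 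So the $\mathcal{T}_{l,u}$ restriction can be dropped if you replace your Lipschitz estimates by these H\"older ones. A minor slip: the centering of $Y_j(\alpha)=T_j\circ[\alpha T_{j-1}]^{-1}-S_\alpha$ is by the definition of $S_\alpha$, not by Theorem~\ref{S_alpha} (which concerns only $\alpha=\at$).
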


\begin{lemma}\label{CLT_g}  Suppose the parameters of the iteration \eqref{Model} satisfy the Assumption \ref{moment_contracting_assumption}.
Then for any $\alpha$, there is a $\sigma_\alpha\geq 0$ such that
$$\sqrt{N} [M_N(\alpha)-M(\alpha)]\to N(0,\sigma^2_\alpha).$$
Moreover,
$$\sup_\alpha |M_N(\alpha)-M(\alpha)|=o_\p(1).$$

\end{lemma}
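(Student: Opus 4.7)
The plan is to prove both statements by decomposing
$$M_N(\alpha)-M(\alpha) \;=\; A_N(\alpha) + B_N(\alpha),$$
where
$$B_N(\alpha) := \frac{1}{N}\sum_{i=1}^N \bigl\{g_\alpha(T_{i-1},T_i,S_\alpha) - \E g_\alpha(T_{i-1},T_i,S_\alpha)\bigr\}$$
is the ``oracle'' fluctuation that would arise if one plugged in the population quantity $S_\alpha$, and
$$A_N(\alpha) := \frac{1}{N}\sum_{i=1}^N \bigl\{g_\alpha(T_{i-1},T_i,S_{N,\alpha}) - g_\alpha(T_{i-1},T_i,S_\alpha)\bigr\}$$
is the plug-in error. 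Expanding the difference of squares with $\Delta_N := S_{N,\alpha} - S_\alpha$, one writes
$$A_N(\alpha) = \underbrace{\frac{2}{N}\sum_{i=1}^N \bigl\langle \Delta_N\circ[\alpha T_{i-1}],\, S_\alpha\circ[\alpha T_{i-1}] - T_i\bigr\rangle}_{A_N^{(1)}(\alpha)} \,+\, \underbrace{\frac{1}{N}\sum_{i=1}^N \norm{\Delta_N\circ[\alpha T_{i-1}]}_2^2}_{A_N^{(2)}(\alpha)},$$
so that both contributions can be controlled in terms of $\norm{S_{N,\alpha}-S_\alpha}_2$ via Cauchy--Schwarz together with the boundedness of $T_i$ and $S_\alpha$ on the compact interval $\Omega$, plus a change-of-variable bound that uses the derivative bounds inherent to $\mathcal{T}_{l,u}$ from Remark \ref{sufficient_condition_existence}.

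For the uniform-in-$\alpha$ convergence, $\sup_\alpha|A_N(\alpha)|$ is $o_\p(1)$ directly by Lemma \ref{CLT_maps}. For the oracle term $B_N(\alpha)$, the map $(T_{i-1},T_i)\mapsto g_\alpha(T_{i-1},T_i,S_\alpha)$ is Lipschitz in the $L^2$ metric for each fixed $\alpha$, so the scalar sequence $\{g_\alpha(T_{i-1},T_i,S_\alpha)\}_i$ inherits the geometric moment contracting condition of Assumption \ref{moment_contracting_assumption} and in particular satisfies the ergodic theorem. Pointwise-in-$\alpha$ convergence, combined with stochastic equicontinuity in $\alpha$ (which follows from the uniform Lipschitz dependence of $[\alpha T]$ on $\alpha$ for $T\in\mathcal{T}_{l,u}$) and compactness of the parameter set $[-1,1]$, yields $\sup_\alpha|B_N(\alpha)|=o_\p(1)$, hence $\sup_\alpha|M_N(\alpha)-M(\alpha)|=o_\p(1)$.

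For the pointwise CLT at a fixed $\alpha$, the quadratic term $A_N^{(2)}$ is $O_\p(N^{-1})$, so $\sqrt{N}A_N^{(2)}=o_\p(1)$. The oracle term $\sqrt{N}B_N(\alpha)$ converges to a centred Gaussian by the CLT for weakly dependent scalar sequences of \citet{horvath2013estimation}, applicable through Lemma \ref{m_dependent}. For the linear cross-term, write
$$A_N^{(1)}(\alpha)=2\bigl\langle \Delta_N,\, \bar R_N(\alpha)\bigr\rangle,$$
where $\bar R_N(\alpha)=\frac{1}{N}\sum_i R_i(\alpha)$ arises by pulling $S_\alpha\circ[\alpha T_{i-1}]-T_i$ back under $[\alpha T_{i-1}]$ and is a stationary ergodic average with $L^2$ limit $\bar R(\alpha):=\E R_1(\alpha)$. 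By Lemma \ref{CLT_maps}, $\sqrt{N}\Delta_N\overset{d}{\to}\Gamma_\alpha$ in $L^2$, and Slutsky yields $\sqrt{N}A_N^{(1)}(\alpha)\overset{d}{\to}2\langle\Gamma_\alpha,\bar R(\alpha)\rangle$, a scalar Gaussian. Since $\sqrt{N}\Delta_N$ and $\sqrt{N}B_N(\alpha)$ are both functionals of the same weakly dependent chain $\{T_i\}$, a joint CLT in $L^2\times\mathbb{R}$ (again via \citet{horvath2013estimation}) applies to their sum, producing $\sqrt{N}(M_N(\alpha)-M(\alpha))\overset{d}{\to}N(0,\sigma_\alpha^2)$ for some $\sigma_\alpha^2\ge 0$.

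The main technical obstacle will be the explicit treatment of $A_N^{(1)}$. Unlike the usual profile-estimator setting where the plug-in is the unconstrained least-squares minimiser and a first-order optimality condition would make the cross-term disappear, here $S_{N,\alpha}$ is the ergodic average $\tfrac{1}{N}\sum_i T_i\circ[\alpha T_{i-1}]^{-1}$, so the linear cross-term does not vanish and has to be shown to be jointly Gaussian with $B_N$. This forces one to verify uniform-in-$\alpha$ Lipschitz and bi-Lipschitz bounds for $[\alpha T]$ on $\mathcal{T}_{l,u}$, both to interchange $\norm{\Delta_N\circ[\alpha T_{i-1}]}_2$ with $\norm{\Delta_N}_2$ under the change of variable, and to transfer the moment contracting condition from $\{T_i\}$ to the induced real- and $L^2$-valued sequences entering $B_N$ and $A_N^{(1)}$.
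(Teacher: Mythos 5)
Your proposal follows essentially the same route as the paper's proof: linearise $M_N(\alpha)$ in the plug-in estimate $S_{N,\alpha}$ around its population limit $S_\alpha$, establish a CLT for the resulting leading order term via the weakly-dependent framework of \citet{wu2004limit}/\citet{horvath2013estimation}, show the quadratic remainder is $o_\p(N^{-1/2})$, and deduce uniform convergence from the pointwise one plus stochastic equicontinuity in $\alpha$ and compactness of $[-1,1]$ (the paper does this last step via Corollary 3.1 of \citet{newey1991uniform} combined with Lemma \ref{S_g_Lipschitz}). The difference is chiefly one of packaging. The paper abstracts the linearisation into Corollary \ref{theorem3_modified} (a modified version of \citet[Theorem 3]{wu2004limit} allowing the function $g$ to depend on an infinite-dimensional plug-in), with Lemma \ref{frechet} supplying the Fr\'echet derivative of $g_\alpha$ in $S$ and Lemma \ref{SDC} the stochastic Dini continuity needed by the CLT. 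You instead perform the expansion by hand, exploiting the fact that $g_\alpha$ is exactly quadratic in $S$, which makes the Taylor expansion finite: your $A_N^{(1)}(\alpha)$ is precisely the Fr\'echet-derivative term $\langle D_g, \Delta_N\rangle$ from Lemma \ref{frechet}, and your $A_N^{(2)}(\alpha)$ is the exact remainder. Your hands-on version is arguably more transparent, whereas the paper's version is reusable (the same Corollary \ref{theorem3_modified} is invoked again in Lemma \ref{derivative}).

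The one place where your outline is genuinely incomplete is the final joint-normality step. You correctly observe that $A_N^{(1)}$ does not vanish (unlike in a profile-likelihood setting where orthogonality would kill the cross-term), and that $\sqrt{N}A_N^{(1)}(\alpha)$ and $\sqrt{N}B_N(\alpha)$ are both functionals of the same chain and so must be treated jointly. But you then invoke ``a joint CLT in $L^2\times\mathbb{R}$'' as a black box, and this is not a statement that is directly available from Lemma \ref{m_dependent}, Lemma \ref{CLT_maps}, or \citet{horvath2013estimation} as cited (those give marginal CLTs for a single stationary functional). The paper resolves exactly this difficulty inside the proof of Corollary \ref{theorem3_modified}: rather than arguing joint Gaussianity of two separate limits and then adding, it algebraically combines the oracle term and the linearised cross-term into a \emph{single} stationary functional of the chain,
$$\frac{1}{\sqrt{n}}\sum_{i=1}^n\Bigl[g(Y_i,\mu) + \bigl\langle \E D_g(Y,\mu),\, Z_i - \mu\bigr\rangle\Bigr] + o_\p(1),$$
to which Theorem \ref{theorem3} applies directly. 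To close the gap in your argument you would either need to mimic this recombination explicitly (write $\Delta_N=\tfrac1N\sum_j (T_j\circ[\alpha T_{j-1}]^{-1}-S_\alpha)$, substitute into $A_N^{(1)}$, and collect into a single sum over $i$ modulo $o_\p(N^{-1/2})$ cross terms), or else prove a genuine vector-valued/Hilbert-valued CLT for the pair $(B_N(\alpha),\Delta_N)$ under the same moment contracting condition, which is more work than the statement you quote supports.
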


\begin{theorem}(Identifiability and Consistency)\label{consistency}
    Under Assumption \ref{moment_contracting_assumption}, the parameters of the iteration \eqref{Model} are identifiable and $(\hat{\alpha}_N,S_{N,\hat{\alpha}_N})$ are consistent estimators for $(\at,\St)$.

\end{theorem}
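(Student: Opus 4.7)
The proof will combine the three preparatory lemmas via a standard M-estimator (argmin) consistency argument, tracking two estimands. I would organize it around the following path.

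First, for identifiability, I would observe that Theorem \ref{S_alpha} gives $S_{\at}=\St$, so the pair $(\at,\St)$ is recovered from the population map $\alpha \mapsto S_\alpha$ once $\at$ itself is identified. Identifiability of $\at$ is exactly the content of Lemma \ref{minimizer_expectation}: $\at$ is the unique minimiser of $M(\alpha)$. Hence two distinct parameter pairs cannot produce the same stationary law for $\{T_i\}$, because they would lead to different values of $M$ at their respective $\at$'s, contradicting stationarity.

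Next, for consistency of $\hat\alpha_N$, I would apply the classical argmin continuity result (e.g. van der Vaart, Theorem 5.7). The ingredients are: (a) the parameter space $[-1,1]$ is compact; (b) $M$ has a well-separated unique minimiser at $\at$, from Lemma \ref{minimizer_expectation} together with continuity of $M$ in $\alpha$ (which I would verify directly from the definition of the $\alpha$-contraction in \eqref{scalar_multiplication} and dominated convergence, noting that the integrand is uniformly bounded since all maps live in the bounded interval $\Omega$); and (c) uniform convergence in probability $\sup_\alpha |M_N(\alpha)-M(\alpha)| = o_\p(1)$, which is exactly Lemma \ref{CLT_g}. Well-separation then follows because $M$ is continuous on a compact set with a unique minimum. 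Combined, the argmin continuous-mapping theorem yields $\hat\alpha_N \xrightarrow{\p} \at$.

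Finally, for consistency of $S_{N,\hat\alpha_N}$ in $L^2$, I would use the triangle inequality
\begin{equation*}
\norm{S_{N,\hat\alpha_N}-\St}_2 \;\leq\; \norm{S_{N,\hat\alpha_N}-S_{\hat\alpha_N}}_2 + \norm{S_{\hat\alpha_N}-S_{\at}}_2,
\end{equation*}
using $\St = S_{\at}$ from Theorem \ref{S_alpha}. The first term is bounded by $\sup_\alpha \norm{S_{N,\alpha}-S_\alpha}_2$, which is $o_\p(1)$ by the second statement of Lemma \ref{CLT_maps}. The second term is controlled by the continuous-mapping theorem applied to $\alpha\mapsto S_\alpha$, provided this map is $L^2$-continuous at $\at$; combining this continuity with $\hat\alpha_N \xrightarrow{\p} \at$ from the previous step gives $o_\p(1)$.

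The main technical point, and the only place any real work is needed beyond quoting the lemmas, is the $L^2$-continuity of $\alpha \mapsto S_\alpha = \E\bigl[T_j\circ [\alpha T_{j-1}]^{-1}\bigr]$. The strategy is to show that for fixed realisations the pointwise map $\alpha \mapsto [\alpha T_{j-1}]^{-1}(x)$ is continuous (which follows from the piecewise-linear-in-$\alpha$ definition in \eqref{scalar_multiplication} together with strict monotonicity of the maps in $\mathcal{T}$), and then to pass to the expectation using that all quantities are uniformly bounded on the compact interval $\Omega$, so dominated convergence applies. Once this continuity is in hand, the two steps above combine to conclude $(\hat\alpha_N,S_{N,\hat\alpha_N}) \xrightarrow{\p} (\at,\St)$ and the theorem follows.
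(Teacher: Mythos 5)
Your argument follows essentially the same route as the paper: Lemma \ref{minimizer_expectation} for unique minimisation, Lemma \ref{CLT_g} for uniform convergence of $M_N$, and van der Vaart--Wellner's argmin consistency theorem to get $\hat\alpha_N\to\at$, after which the consistency of $S_{N,\hat\alpha_N}$ follows by the triangle inequality using the uniform bound from Lemma \ref{CLT_maps}. The one place where you propose to do extra work --- proving $L^2$-continuity of $\alpha\mapsto S_\alpha$ via dominated convergence --- is unnecessary, since Lemma \ref{S_g_Lipschitz} already establishes the stronger H\"older bound $\norm{S_{\alpha_1}-S_{\alpha_2}}_2\lesssim|\alpha_1-\alpha_2|^b$ with $b\geq \tfrac12$, which you could simply cite.
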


\begin{theorem}\label{rate}(Rate of Convergence)
Let $\mathcal{T}_{l,u} = \{ T\in \mathcal{T}:  0<L_l\leq T' \leq L_u <\infty\}$ and suppose $\{T_i\} \subset \mathcal{T}_{l,u} \subset \mathcal{T}$. Under Assumption \ref{moment_contracting_assumption} and twice differentiability of the $T_i$, we have
$$N^{\frac{1}{2}}|\hat{\alpha}_N-\at|=O_\p(1),$$
$$N^{\frac{1}{2}}\norm{S_{N,\hat{\alpha}_N}-\St}_2=O_\p(1).$$

\end{theorem}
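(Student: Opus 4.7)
The plan is to follow the standard blueprint for parametric M-estimator rates (e.g.\ Van der Vaart, Thm.~5.52), adapted to the profile objective $M_N$. Since $\hat{\alpha}_N$ is a scalar parameter and, by construction, $M_N(\hat{\alpha}_N)\leq M_N(\at)$, it will suffice to (i) show a quadratic lower bound $M(\alpha)-M(\at)\gtrsim (\alpha-\at)^2$ near $\at$, and (ii) establish a stochastic equicontinuity bound of the form
$$\sup_{|\alpha-\at|\leq \delta}\,\bigl|(M_N-M)(\alpha)-(M_N-M)(\at)\bigr|=O_\p\bigl(\delta/\sqrt{N}+1/N\bigr).$$
Combined, these yield $|\hat{\alpha}_N-\at|=O_\p(N^{-1/2})$; the rate for $S_{N,\hat\alpha_N}$ then comes from a two-term decomposition.

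For step (i), I would differentiate $M(\alpha)=\E\|S_\alpha\circ[\alpha T_{i-1}]-T_i\|_2^2$ twice in $\alpha$ at $\at$. By Theorem \ref{S_alpha}, $S_{\at}\circ[\at T_{i-1}]=\E[T_i\mid T_{i-1}]$, so the first-order term vanishes (confirming Lemma \ref{minimizer_expectation} at the level of derivatives), while the second-order term becomes a non-degenerate quadratic form in $\alpha-\at$. The assumption $\{T_i\}\subset \mathcal{T}_{l,u}$ together with twice differentiability of the $T_i$ lets one differentiate the maps $\alpha\mapsto [\alpha T_{i-1}]$ and $\alpha\mapsto S_\alpha=\E[T_j\circ[\alpha T_{j-1}]^{-1}]$ in $L^2$, with $L^2$-Lipschitz derivatives uniformly in $i$; this gives both the strict positivity of the curvature constant and the needed remainder control.

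For step (ii), I would linearise $g_\alpha(T_{i-1},T_i,S_{N,\alpha})$ around $(\at, S_{\at})$. Writing
$$g_\alpha(T_{i-1},T_i,S_{N,\alpha})-g_\alpha(T_{i-1},T_i,S_\alpha)$$
as a term linear in $S_{N,\alpha}-S_\alpha$ plus an $O(\|S_{N,\alpha}-S_\alpha\|_2^2)$ remainder, and using the CLT in Lemmas \ref{CLT_maps} and \ref{CLT_g} together with the Lipschitz bounds from $\mathcal{T}_{l,u}$ to control the map $\alpha\mapsto S_{N,\alpha}-S_\alpha$, one obtains the required equicontinuity. Concretely, writing $\Delta_N(\alpha):=(M_N-M)(\alpha)-(M_N-M)(\at)$, expanding $g_\alpha$ both in $\alpha-\at$ and in $S_{N,\alpha}-S_\alpha$ produces empirical-process increments of the form $N^{-1}\sum h_\alpha - \E h_\alpha$ where the class $\{h_\alpha-h_{\at}:|\alpha-\at|\leq\delta\}$ has $L^2$-diameter $\lesssim \delta$; the moment-contracting Assumption \ref{moment_contracting_assumption} (via Lemma \ref{m_dependent}) supplies the dependence control needed to conclude $\Delta_N(\alpha)=O_\p(\delta/\sqrt{N})$ uniformly. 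The basic inequality $M(\hat\alpha_N)-M(\at)\leq \Delta_N(\hat\alpha_N)$ then delivers $|\hat\alpha_N-\at|=O_\p(N^{-1/2})$ via the standard peeling argument.

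Finally, for the rate on $S_{N,\hat\alpha_N}$, I would decompose
$$S_{N,\hat\alpha_N}-\St=\bigl(S_{N,\hat\alpha_N}-S_{\hat\alpha_N}\bigr)+\bigl(S_{\hat\alpha_N}-S_{\at}\bigr).$$
The second summand is $O_\p(N^{-1/2})$ since $\alpha\mapsto S_\alpha$ is Lipschitz in $L^2$ on $\mathcal{T}_{l,u}$ (by the same implicit-function argument used in step (i)), and $\hat\alpha_N-\at=O_\p(N^{-1/2})$. The first summand is handled by combining the CLT of Lemma \ref{CLT_maps} with a uniform-in-$\alpha$ extension on a shrinking neighborhood of $\at$: tightness of the process $\alpha\mapsto \sqrt{N}(S_{N,\alpha}-S_\alpha)$, obtained once more via Lemma \ref{m_dependent} and the Lipschitz dependence in $\alpha$, yields $\|S_{N,\hat\alpha_N}-S_{\hat\alpha_N}\|_2=O_\p(N^{-1/2})$. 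I expect the main obstacle to be precisely this uniform CLT step, together with verifying the quadratic curvature constant in step (i)—both reduce to differentiating through the piecewise-defined $\alpha$-contraction $[\alpha T]$ and its inverse, which requires the $L^2$-regularity furnished by the $\mathcal{T}_{l,u}$ and twice-differentiability hypotheses.
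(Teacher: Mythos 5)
Your blueprint for the $\alpha$-rate matches the paper's in all essential respects: both apply \citet[Theorem~3.2.5]{van1996weak} (Theorem~\ref{theorem3.2.5}), both rely on the quadratic curvature of $M$ at $\at$ coming from twice differentiability, and both reduce the stochastic equicontinuity bound to controlling the linear-in-$(\alpha-\at)$ increment of $M_N-M$. The paper makes this last step concrete by differentiating $M_N$ explicitly in $\alpha$ and invoking Lemma~\ref{derivative} to get $\E|M'_N(\at)|\lesssim N^{-1/2}$ (itself another CLT application via Corollary~\ref{theorem3_modified}), yielding $\phi_N(\delta)=\delta$; your empirical-process sketch aims at the same conclusion by a slightly less explicit route, but the idea is the same.

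Where you genuinely diverge is in the rate for $S_{N,\hat\alpha_N}$, and your decomposition creates an obstacle that the paper's avoids. You write
$$S_{N,\hat\alpha_N}-\St=\bigl(S_{N,\hat\alpha_N}-S_{\hat\alpha_N}\bigr)+\bigl(S_{\hat\alpha_N}-S_{\at}\bigr),$$
so the first summand evaluates both the empirical and population curves at the \emph{random} point $\hat\alpha_N$; controlling it at rate $N^{-1/2}$ requires tightness of $\alpha\mapsto\sqrt{N}(S_{N,\alpha}-S_\alpha)$, which is not established in the paper — Lemma~\ref{CLT_maps} gives a pointwise CLT in $\alpha$ plus only a $\sup_\alpha\|S_{N,\alpha}-S_\alpha\|_2=o_\p(1)$ uniform law of large numbers, not a uniform $O_\p(N^{-1/2})$ bound. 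You flag this as the main obstacle yourself. The paper sidesteps it entirely by pairing differently:
$$S_{N,\hat\alpha_N}-\St=\bigl(S_{N,\hat\alpha_N}-S_{N,\at}\bigr)+\bigl(S_{N,\at}-\St\bigr).$$
The first term compares two finite-sample averages at different $\alpha$, which Lemma~\ref{S_g_Lipschitz} (with $b=1$ on $\mathcal{T}_{l,u}$, via Lemma~\ref{norm_inverse}) bounds by $|\hat\alpha_N-\at|=O_\p(N^{-1/2})$; the second term is at the \emph{fixed} $\at$ and is $O_\p(N^{-1/2})$ by the pointwise CLT, since $S_{\at}=\St$ (Theorem~\ref{S_alpha}). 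So your argument is not wrong in principle, but it appeals to a uniform CLT you would still have to prove, whereas a small change in the decomposition — anchoring at $S_{N,\at}$ instead of $S_{\hat\alpha_N}$ — removes the need for it.
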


\section{Simulation Experiments}\label{simulations}

In this section, we probe the behaviour of our models, and the finite sample performance of our estimation framework, via simulation. To generate the noise maps $T_{\epsilon_i}$, we use a class of random optimal maps introduced in \citet{ghodratidistribution} that are modifications of the maps used in \citet{panaretos2016amplitude}: Let $K$ be a random integer with a symmetric distribution around zero. We define $\zeta_K:[0,1] \to [0,1]$ by
\begin{equation}{\label{map}}
\zeta_0(x)=x, \quad \zeta_K(x)=x-\frac{\sin(\pi Kx)}{|K|\pi}, \qquad K\in Z\setminus \{0\}.
\end{equation}
These are strictly increasing smooth functions satisfying $\zeta_K(0)=0$ and $\zeta_K(1)=1$. For $x\in[0,1]$ we have $\E[\zeta_K(x)]=x$, as required in the definition of model \eqref{Model}. The random maps $T_\epsilon$ will be a mixture of the maps \eqref{map} as defined in \citet{ghodratidistribution}.

Each plot in Figure \ref{fig:simulation} corresponds to a time series simulation with a different combination of $\St$ and $\at$. Each column corresponds to a different value of $\at\in\{-0.9,-0.5,0,0.5,0.9\}$ from left to right. In the three top rows, $\St$ is chosen to be $\zeta_K$ for $K=\{-6, -4, -2\}$ from top to bottom. In row four, $\St$ is the average of $\zeta_1$ and an instance of $T_{\epsilon}$. Rows five and six exemplify the method on non-differentiable and discontinuous maps $\St$ respectively.

Plots that fall within the bounding red rectangle correspond to settings where our theory is guaranteed to apply. Plots outside of that rectangle are not guaranteed to be covered by our theory: they either distinctly violate our assumptions (such as the last row where the true map is not continuous, as required) or we cannot confirm whether the assumption \ref{stationary_solution} holds true. Starting from the identity map, we generate a time series with 300 iterations and discard the first 100 maps of the series. The remaining 200 maps $\{T_i\}$ are shown in light blue, the true map $\St$ is in dark blue, and the estimated map is in orange. For each time series, we show the estimated $\hat{\alpha}$ and the error between the estimator and true map in $\norm{.}_2$-norm.

As expected from Remark \ref{sufficient_condition_existence}, smaller values of $|\at|$ lead to time series which apparently oscillate around the mean of the stationary time series, which in turn leads to the convergence of our estimator with respect to the true map. In particular, good agreement is seen between the estimator and true map for values of $|\at|$ up to $0.5$ at least, only noticeably failing for $\at=0.5$ in the discontinuous map case (where our theoretical guarantee does not apply due to the discontinuity).

Larger values of $|\at|$ can still lead to similar stationary state time series {(sometimes even outside of the red rectangle, where our theoretical guarantees apply)} but with naturally larger oscillations. Still, a good agreement between the estimator and ground truth is observed. This can depend on the choice of map $\St$ and the precise value of $\at$. For instance, in the third, fourth, and fifth rows, when $\at=-0.9$. In the remaining rows of the first column, the stationary state behavior changes to a period-two time series (with noise) where the maps oscillate alternatively between two maps related by inversion (recall that negative values of $\at$ imply an inversion of the map $T_{i-1}$ at each time step). Nevertheless, the estimator is able to capture features of the $\St$ map that are not visible in the time series itself: notably, the discontinuous step in row six is present in the estimated map.

In the other extreme of $\at = 0.9$, the time series maps are close to step-like functions with some variation in the step height. The maps are in fact still oscillating around the mean of the stationary time series that is very close to the step-like map $\St^{\infty}$, which is the mean of the solution to the model \eqref{Model} when $\alpha \rightarrow 1$, that is $T_{s} = \St \circ T_{s}$. However, the performance of the estimator is the worst in this limit.

Do note that the family of maps $\zeta_K(x)$ is not symmetric with respect to inversion in the sense that the derivative of $\zeta_K(x)$ is $0$ at some fixed points ($\zeta_K(x) = x$) but is never infinite, and therefore the random maps $T_{\epsilon}$, which are derived from $\zeta_K(x)$, are biased in this way. For this reason, the vertical variance observed in most maps is much more pronounced than the horizontal one, which is very clear in the case $\at = 0.9$.

\begin{figure}[H]
    \centering
    \begin{tikzpicture}
    \node (f) [anchor=south west, inner sep=0pt] {\includegraphics[width=0.954\textwidth]{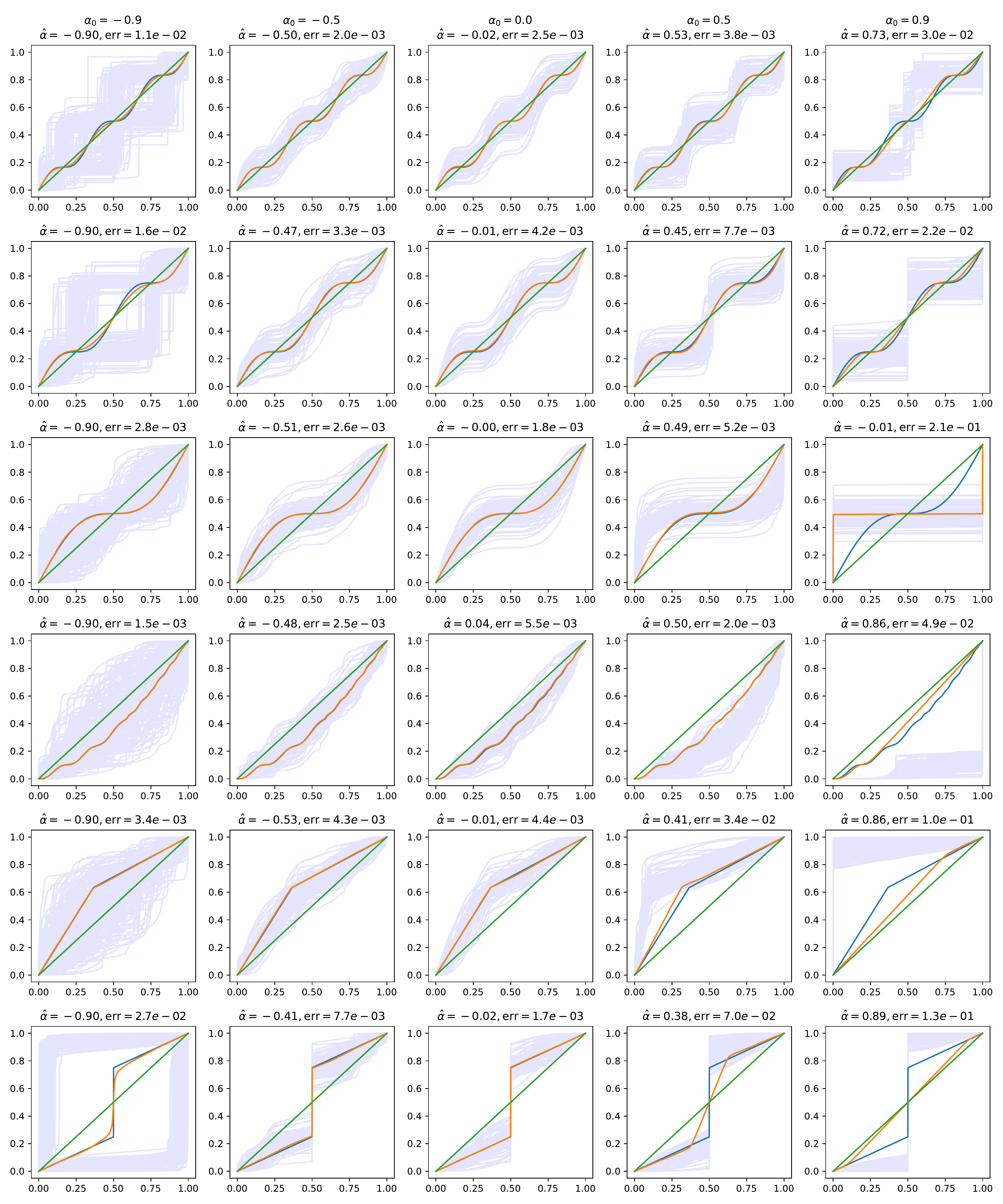}};
    \begin{scope}[x={(f.south east)},y={(f.north west)}]
    \draw[color=Red,ultra thick] (0.203, 0.16666) rectangle (0.795, 0.992);
    \end{scope}
    \end{tikzpicture}
    \caption{\small{Estimated map (orange) versus the true map (blue) for different combinations of $\at$ and $\St$. The light blue line represents the simulated time series, while the green line represents the $\id$ map. Each column corresponds to a different value of $\at$, ranging from $-0.9$ on the left to $0.9$ on the right. The top three rows show results for $\St=\zeta_K$ where $K$ is chosen from $\{-6, -4, -2\}$ from top to bottom. In the fourth row, $\St$ is the average of $\zeta_1$ and an instance of $T_{\epsilon}$. The fifth and sixth rows demonstrate the method on non-differentiable and discontinuous maps, respectively. The cases within the red rectangle are covered by our theoretical guarantees.}}
    \label{fig:simulation}
\end{figure}

\section{Illustrative Data Analysis}{\label{real_data_analysis}}
In this section, we consider the distribution of minimum daily temperatures recorded in the summer of the years from 1960 to 2020 from several airports in the USA (available at \texttt{www.ncei.noaa.gov}).  That is, the years are taken as the time index, and for any given time index we observe a distribution over the temperature scale (representing the distribution of minimal temperatures over that year's summer). Thus, each airport gives rise to a distributional time series. This data set has been also analysed by \citet{zhu2021autoregressive} to demonstrate their own distributional autoregressive model, which allows for constructive comparison.

We examine the daily minimum temperature for June, July, August, and September from 1960 to 2020 in four locations: Chicago O'Hare International Airport, Atlanta Hartsfield-Jackson International Airport, Phoenix Airport, and New Orleans Airport. The corresponding distributions are displayed in Figure \ref{fig:temperature_dist}.

\begin{figure}[H]
    \centering
    \includegraphics[width=1\textwidth]{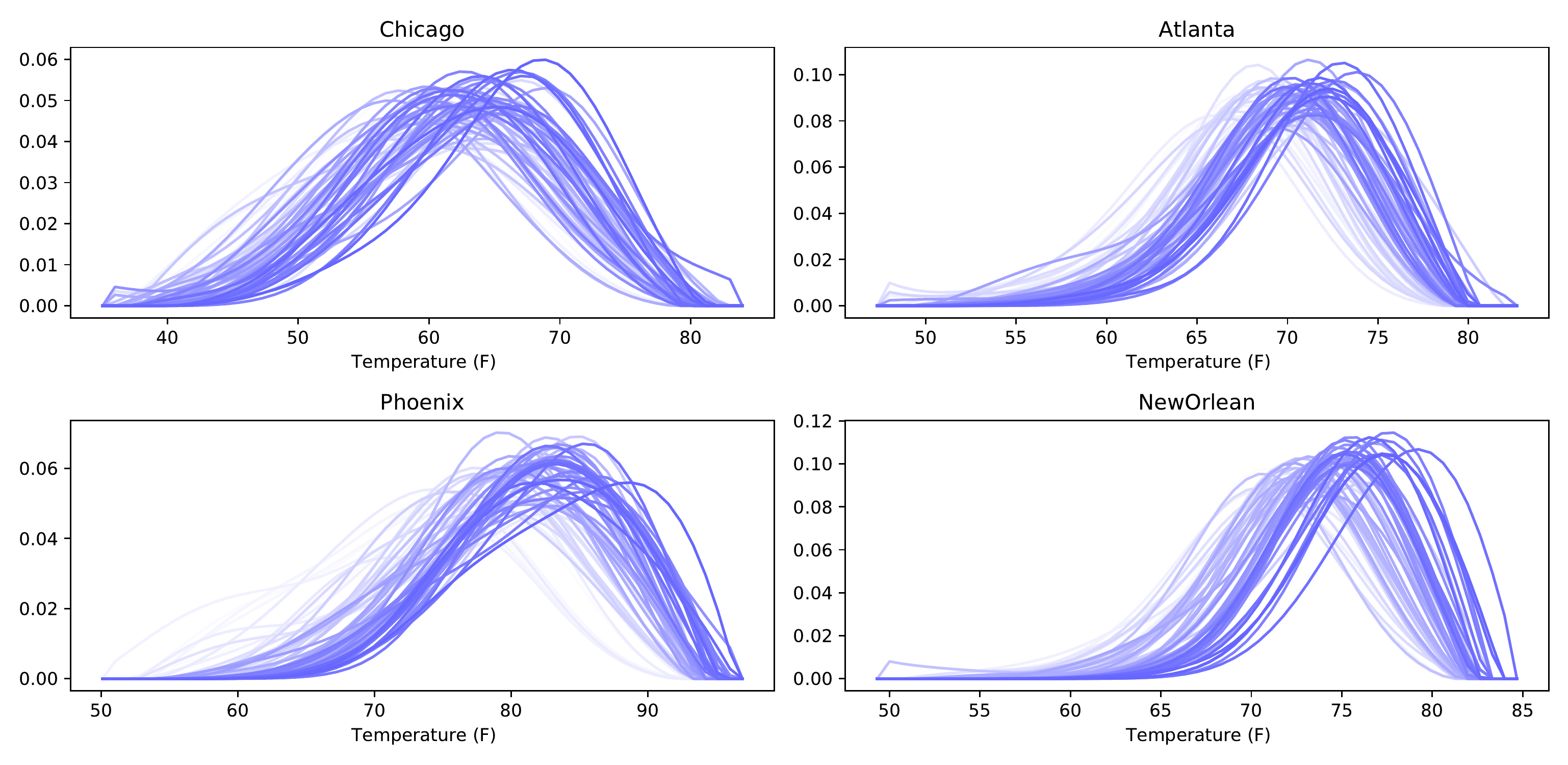}
    \caption{\small{Time series of distribution of daily minimum temperature in summer from 1960 to 2020 at Chicago Ohare international airport, Atlanta Hartsfield Jackson international airport, Phoenix airport, and New Orleans airport.} The shading reflects the time index: the fainter the curve, the earlier in time it corresponds to.}
    \label{fig:temperature_dist}
\end{figure}

The map sequence elicited by adopting the increment model (Model (I)) is shown in Figure \ref{fig:method1_Ts}. These maps are obtained by calculating the optimal maps between consecutive annual temperature distributions for each location. These maps exhibit oscillations around the identity, except in the subdomains corresponding to extreme temperature values. In the lower extreme, the maps impose a cutoff on the lower end of the support of the temperature distribution, while the higher end is pushed towards higher values and eventually reaches the extreme of the support. This implies that extreme temperatures are increasing, indicating that the coldest and hottest nights in summer are becoming hotter.

Figure \ref{fig:method1_estimates} presents the estimates of $S$ obtained using Model (I), where the estimated $\alpha$ was found to be $0$ up to three decimal points for all airports. This suggests that the optimal maps $T_i$ are independent from each other and, on average, they are equal to the estimated maps $S$ presented. The estimated $S$ maps are very similar across all airports, effectively being the identity map in the middle portion of the support and above the identity at the extreme points.

\begin{figure}[H]
    \begin{subfigure}{0.5\linewidth}
      \centering
    \includegraphics[width=\linewidth]{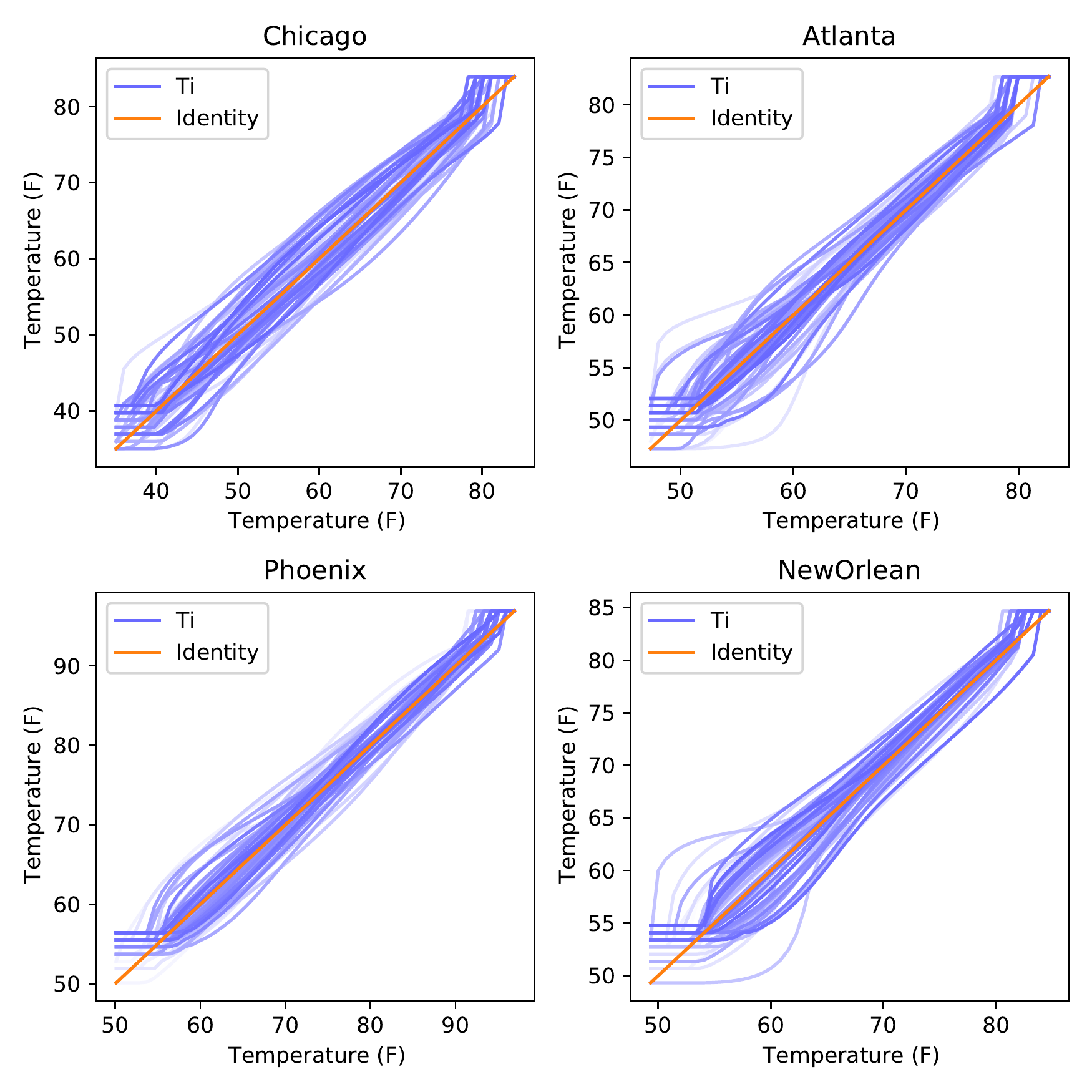}
    \caption{\footnotesize{}}
    \label{fig:method1_Ts}
    \end{subfigure}%
    \begin{subfigure}{0.5\linewidth}
      \centering
    \includegraphics[width=\linewidth]{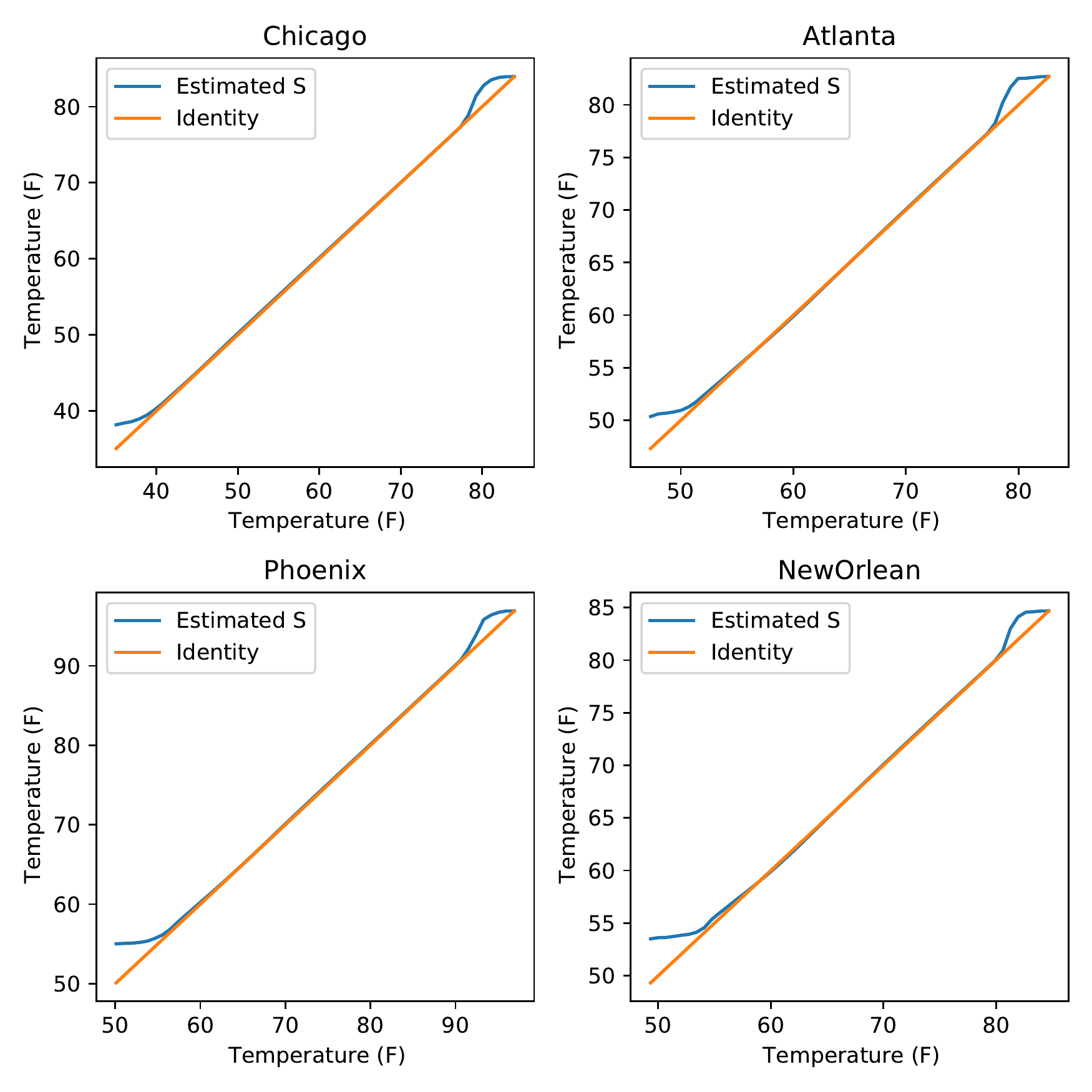}
    \caption{}
    \label{fig:method1_estimates}
    \end{subfigure}
    \caption{\small{(a): Time series $\{T_i\}$ (blue) based on model (I) and identity map (orange) for the four locations. (b): Estimated map $S$ (blue) and identity map (orange). Faint blue shading corresponds to early years, and bold shading to later years.}}
\end{figure}

Examining the maps generated by fitting Model (I), i.e. computing the optimal maps between consecutive annual distributions in Figure \ref{fig:method1_Ts}, we can observe an increasing trend in the cutoff value of the lower endpoint over time. This implies that the time series of optimal maps may not be stationary. Of course, the $S$ maps are not able to capture the overall increase in the cutoff value of the lower end over time: the plateaus of the $S$ maps are just the averages of the optimal maps $T_i$ and don't show this trend. Indeed, a problem of modeling such data is that the system may be dynamically evolving due to factors like global warming, and it is not obvious a priori if stationary regimes exist that can be captured by our models.

However, using the uniform quantile model (Model (UQ)), the resulting maps are more interpretable and reveal more refined dynamics beyond the cutoffs at the extremes. To obtain these maps, we fitted iteration \eqref{Model} to the time series of quantile functions of the temperature distributions. The quantile functions are shown in figure \ref{fig:method2_Qs}. The resulting estimated maps $S$ are in figure \ref{fig:method2_estimates}, and the estimated $\hat{\alpha}$ for the four airports are $\{0.39,0.80,0.89, 0.89\}$. All the maps show a cutoff at the lower end and a fixed point in the second half of the support where the derivative is smaller than 1. The fixed point implies a point of stability, and the derivative means there is a trend towards a concentration of weight around this point, that is, if we start the time series at a Gaussian-like distribution of mean different from the $S$ fixed point, the distributions in the time series will progress towards Gaussian-like distributions of mean approaching the fixed point. Again, the model may be failing to capture a trend of ever-increasing temperature, or it may be implying a stabilization at temperatures given by the fixed points, which will become the new norm.

\begin{figure}[H]
    \begin{subfigure}{0.5\linewidth}
      \centering
    \includegraphics[width=\linewidth]{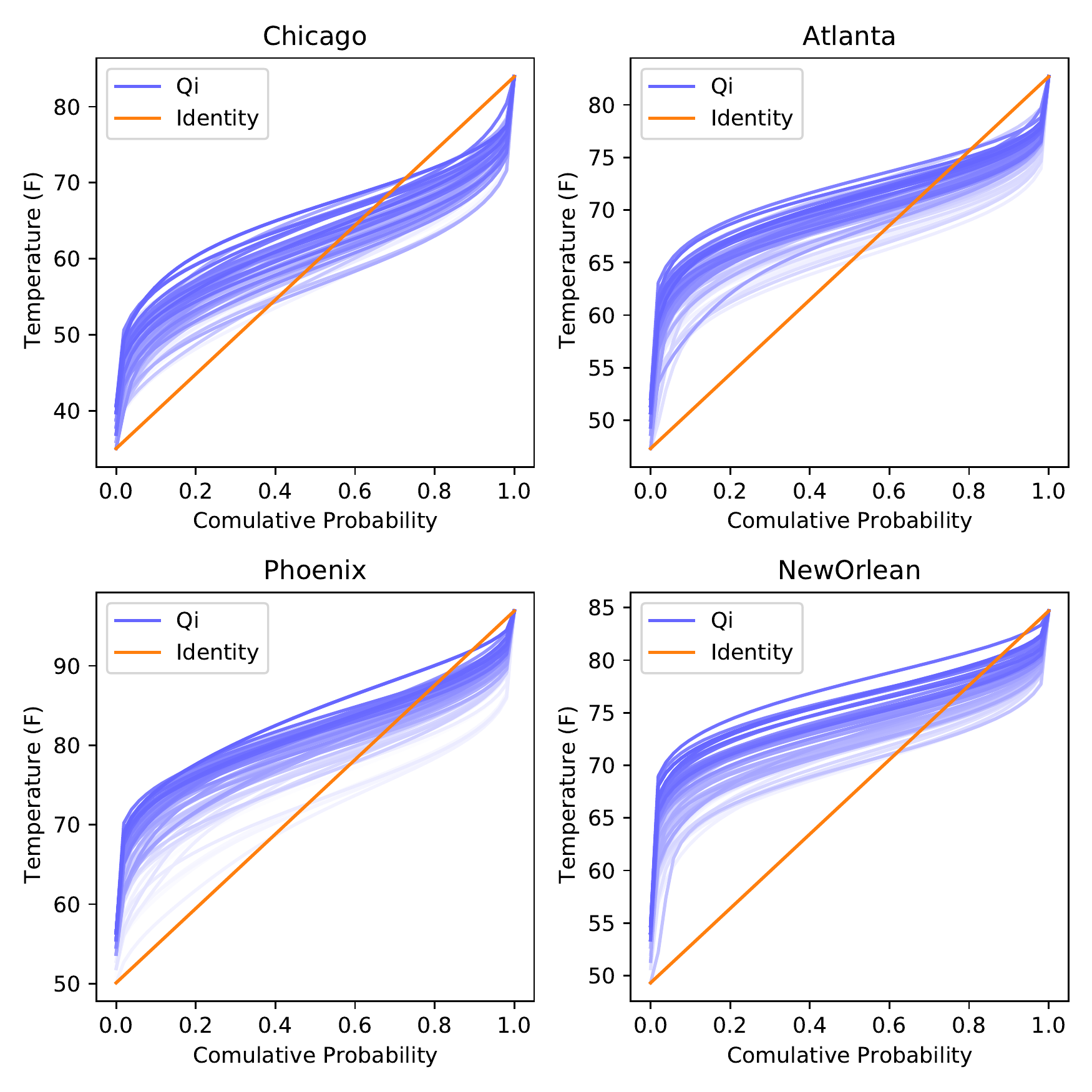}
    \caption{}
    \label{fig:method2_Qs}
    \end{subfigure}%
    \begin{subfigure}{0.5\linewidth}
      \centering
    \includegraphics[width=\linewidth]{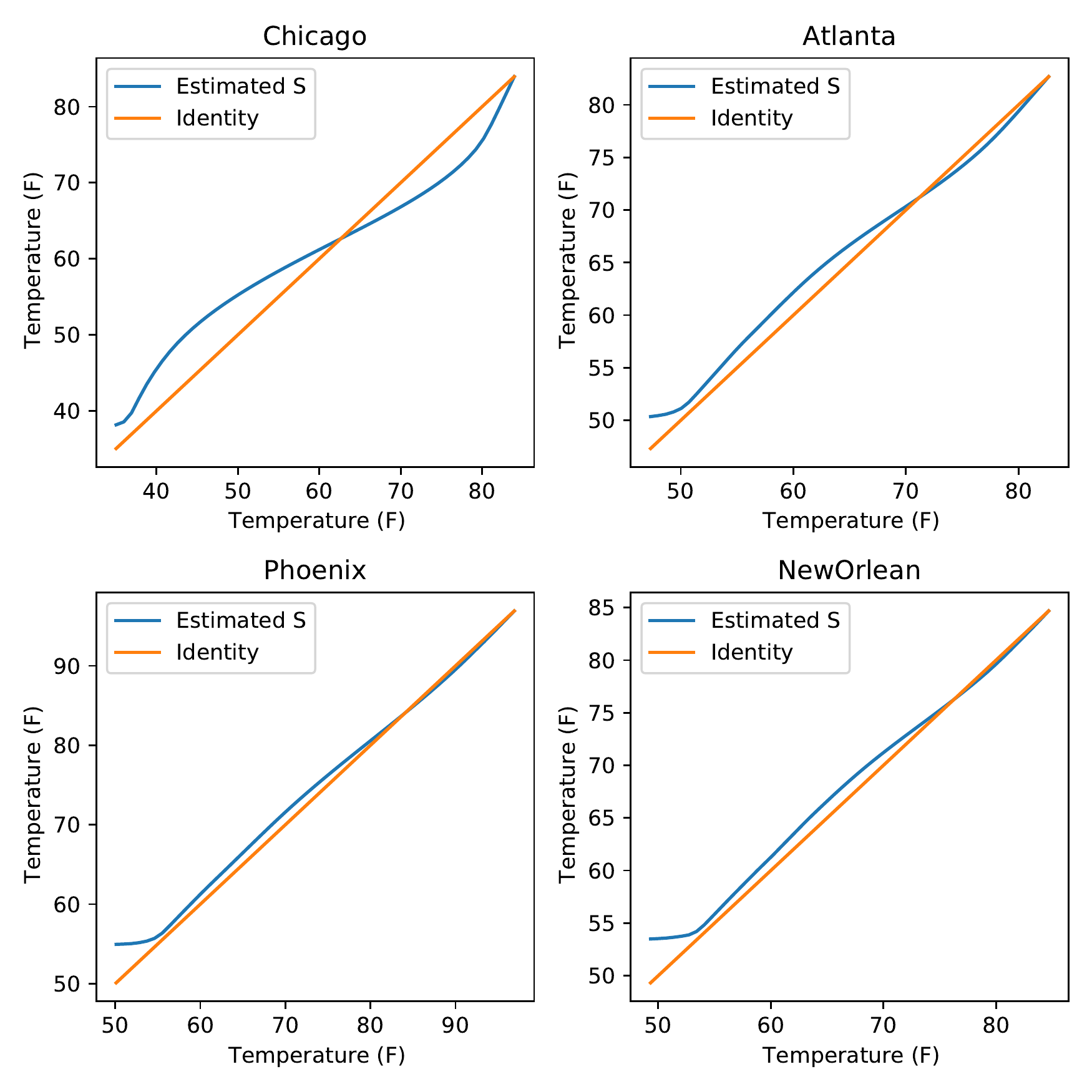}
    \caption{}
    \label{fig:method2_estimates}
    \end{subfigure}
    \caption{\small{(a): Time series of quantile functions (blue) based on model (UQ) and identity map (orange) for the four locations. (b): Estimated map $S$ (blue) and identity map (orange). Faint blue shading corresponds to early years, and bold shading to later years.}}
\end{figure}

Even if the model is possibly misspecified, the estimated maps $S$ are still able to condense several features of the time series of distributions. Namely, the reduction of extreme cold events and the progression toward higher modal temperatures which may or may not be static.

There is an interesting observation to be made given that the estimated $\alpha$ when fitting the intercept model (I) is numerically 0 while it is in (0,1) when fitting the quantile model (UQ). Specifically, in combination, these results suggest that the quantile model is, in a certain sense, a better fit to the data. The reasoning is as follows. Recall that the increment model (I) with $\alpha = 0$ is equivalent to the quantile model (UQ) when $\alpha = 1$, and corresponds to ``trivial dynamics" (random walk). With those respective values of $\alpha$, the two models yield the \textit{same} estimator for $S$, namely the map $G=\frac{1}{N}\sum_{j=1}^{N}F_{i}^{-1}\circ F_{i-1}$ (see equation \eqref{estimator-components}, where $T_i=F_{i}^{-1}\circ F_{i-1}$ for model (I), whereas $T_i=F_{i}^{-1}$ for model (UQ)). Since the estimated $\alpha$ is zero under the increment model (I), then the best fitting model of type (I) yields a fit
$$M_N^{(I)}(0)=\sum_{i} \| G - F_i^{-1}\circ F_{i-1} \|^2_2=\sum_{i} \| G\circ F^{-1}_{i-1} - F_i^{-1} \|^2_2=M_N^{(UQ)}(1).$$
The last expression on the right-hand side is interpretable as the fit obtained under the (UQ) model when estimating $\alpha$ by 1. But this is strictly worse than the best fit, which is obtained at values of $\alpha$ distinctly smaller than 1, leading to non-trivial dynamics (as opposed to those corresponding to a random walk). In other words, the best possible fit under the increment model can be interpreted in the same sense as the best fit in the quantile model and is strictly worse in that sense.

A more high-level way of seeing this is to say that whenever fitting model (I) results in an estimated $\alpha$ that is nearly zero, then the best fitting model of type (I) is in fact a (UQ) model. In which case we have evidence to prefer a (UQ) modeling approach instead, which will correspond to non-trivial dynamics. Conversely, if fitting model (UQ) yields an estimated $\alpha$ near 1, it may be preferable to use model (I) instead.

\section{Proofs}

\begin{proof}[Proof of Lemma \ref{stationary_solution}]
    The proof is directly analogous to that of Theorem 2 in \citet{wu2004limit} and theorem 1 in \citet{zhu2021autoregressive}.
\end{proof}

\begin{proof}[Proof of Lemma \ref{minimizer_expectation}]
    We prove the theorem in the following 4 steps:

    \begin{enumerate}
        \item Given a function $f\in L^2$, and a random function $\epsilon$ such that $\E(\epsilon)= \id$, we can show that
        $$\arg\min_h \E_\epsilon\norm{h-\epsilon\circ f}^2_2=f.$$
        To do so, we can apply Fubini's theorem and rewrite the expression as follows:
        $$\int\int |h(x)-\epsilon(f(x))|^2 \diff x \diff \epsilon= \int\int|h(x)-\epsilon(f(x))|^2 \diff \epsilon \diff x $$
        Since $\E_\epsilon [\epsilon(f(x))] = f(x)$ for any $x$, the minimizer of the inner integral on the left-hand side is $h(x) = \E[\epsilon(f(x))]=f(x)$.

        \item We will now demonstrate that for any fixed $T_i$ and $T_{i-1}$, as well as for all $\alpha$, the following inequality holds:
        $$\E_\epsilon [g_{\at}(T_{i-1},T_i,S_{\at})] \leq \E_\epsilon [g_\alpha(T_{i-1},T_i,S_\alpha)].$$

        Let us define $f(\alpha, T) = S_\alpha \circ [\alpha T]$. Note that for all indices $i$, we have
        \begin{equation}\label{g_form}
        \begin{split}
          g_\alpha(T_i,T_{i-1},S_\alpha)&=\norm{S_\alpha \circ [\alpha T_{i-1}]-T_{\epsilon_i}\circ S_{\at} \circ [\at T_{i-1}]}^2_2\\
            &=\norm{f(\alpha,T_{i-1})-T_{\epsilon_i}\circ f(\at,T_{i-1})}^2_2.
        \end{split}
        \end{equation}

    Using the result from part 1 and the equation (\ref{g_form}), we can conclude that if there exists an $\alpha$ such that $f(\alpha,T_{i-1})$ minimizes the expression $\E_\epsilon [g_{\at}(T_{i-1},T_i,S_{\at})]$ in equation \ref{g_form}, then we must have $f(\alpha,T_{i-1}) = f(\at,T_{i-1})$. Therefore, we obtain the desired inequality.

\item We now aim to prove that for any $\alpha$, we have
$$\E[g_{\at}(T_{i-1},T_i,S_{\at})] \leq \E[g_\alpha(T_{i-1},T_i,S_\alpha)].$$

We start by denoting by $\pi$ the marginal distribution of $T_i$, and $Q$ the marginal distribution of the pair $(T_{i-1},T_{i})$. Then, we can express the expectation of $g_\alpha(T_{i-1},T_i,S_\alpha)$ as follows:
$$\E_Q [g_\alpha(T_{i-1},T_i,S_\alpha)] = \E_\pi [\E_\epsilon{g_\alpha(T_{i-1},T_i,S_\alpha)|T_{i-1}}].$$

By using part 2 of the proof, we know that $\at$ is a minimizer for the inner expectation of the right-hand side, i.e.,
$$\E_\epsilon\{g_{\at}(T_{i-1},T_i,S_{\at})|T_{i-1}\} \leq \E_\epsilon\{g_\alpha(T_{i-1},T_i,S_\alpha)|T_{i-1}\},$$
and this for all $T_{i-1}$. Therefore, taking the expectation over $T_{i-1}$, we get
$$\E_Q [g_{\at}(T_{i-1},T_i,S_{\at})] \leq \E_Q [g_\alpha(T_{i-1},T_i,S_\alpha)].$$

        \item Finally we can conclude that $\at$ is the unique minimizer of $M(\alpha)$. Suppose there exists an $\alpha$ such that $\E[g_{\at}(T_{i-1},T_i,S_{\at})] = \E[g_\alpha(T_{i-1},T_i,S_\alpha)]$. Using parts 2 and 3, we can deduce that for each fixed $T_i, T_{i-1}$, $\E_\epsilon [g_{\at}(T_{i-1},T_i,S_{\at})] = \E_\epsilon [g_\alpha(T_{i-1},T_i,S_\alpha)]$. Then using equation \ref{g_form} we can conclude that, for all indices $j$,
       $$  S_\alpha \circ [\alpha T_j] = S_{\at} \circ[\at T_{j}].$$
        If $S_\alpha \circ \alpha T_j = S_{\at} \circ[\at T_{j}]$ for all $j$, we can deduce $S_\alpha = S_{\at} \circ[\at T_{j}] \circ [\alpha T_j]^{-1}$ for all $j$. However, note that while $S_\alpha$ is deterministic, the right-hand side is deterministic (and not random) if and only if $\alpha = \at$. This is because if $\alpha \neq \at$, then the right-hand side depends on $T_j$, which is a random variable.

    \end{enumerate}
\end{proof}

\begin{lemma}{\label{norm_inverse}}
For any $T,S\in \mathcal{T}$ we have $\norm{T^{-1}-S^{-1}}_2\lesssim \sqrt{\norm{T-S}_2}$.
Moreover, let $\mathcal{T}_{l,u} = \{ T\in \mathcal{T}:  0<L_l\leq T' \leq L_u <\infty\}$. For any $T,S \in \mathcal{T}_{l,u}\subset \mathcal{T}$ we have $\norm{T^{-1}-S^{-1}}_2 \lesssim \norm{T-S}_2$. In summary,  there exists $b\in[\frac{1}{2},1]$ such that  $\norm{T^{-1}-S^{-1}}_2\lesssim \norm{T-S}_2^b$ for any $T,S \in \mathcal{T}$.
\end{lemma}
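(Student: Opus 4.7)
The plan is to prove the two inequalities separately, with the closing summary statement following immediately since the first bound exhibits $b=1/2$ and the second yields $b=1$. The backbone of the general case is the classical $L^1$ equality between monotone maps and their inverses, combined with the boundedness of $\Omega$ to convert between $L^2$ and $L^1$ norms.

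First, I would establish the identity $\norm{T-S}_1 = \norm{T^{-1}-S^{-1}}_1$ for $T,S\in\mathcal{T}$ by a Fubini argument. Writing $|T(x)-S(x)| = \int_\Omega \mathbf{1}\{\min(T(x),S(x))\le y\le\max(T(x),S(x))\}\,\diff y$, swapping the order of integration, and using monotonicity of $T,S$ (so that $y$ lies between $T(x)$ and $S(x)$ if and only if $x$ lies between $T^{-1}(y)$ and $S^{-1}(y)$), the inner integral over $x$ collapses to $|T^{-1}(y)-S^{-1}(y)|$, yielding the claimed identity.

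For the general bound, I would exploit that $T^{-1},S^{-1}$ take values in $\Omega$, so $|T^{-1}(y)-S^{-1}(y)|\le|\omega_1-\omega_0|$ pointwise. Hence
$$\norm{T^{-1}-S^{-1}}_2^2 \le |\omega_1-\omega_0|\cdot\norm{T^{-1}-S^{-1}}_1 = |\omega_1-\omega_0|\cdot\norm{T-S}_1 \lesssim \norm{T-S}_2,$$
where the last step uses Cauchy--Schwarz on the bounded interval. Taking square roots delivers the first inequality with $b=1/2$.

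For $T,S\in\mathcal{T}_{l,u}$, the two-sided derivative bound makes $S^{-1}$ Lipschitz with constant $1/L_l$, and $T$ is bi-Lipschitz with Jacobian in $[L_l,L_u]$. Changing variables $y=T(x)$,
$$\norm{T^{-1}-S^{-1}}_2^2 = \int_\Omega |x - S^{-1}(T(x))|^2\, T'(x)\,\diff x = \int_\Omega |S^{-1}(S(x)) - S^{-1}(T(x))|^2\, T'(x)\,\diff x \le \frac{L_u}{L_l^2}\norm{T-S}_2^2,$$
which yields the second inequality, and the summary statement follows. The only nontrivial input is the Fubini identity of the first step; everything else is a mechanical combination of pointwise bounds, Cauchy--Schwarz, and a change of variables.
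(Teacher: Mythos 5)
Your proof is correct and follows essentially the same route as the paper: the $L^1$ identity between monotone maps and their inverses (established via Fubini), combined with $L^1$--$L^2$ comparisons on a bounded domain for the general $\sqrt{\,\cdot\,}$ bound, and a change of variables plus Lipschitz estimates for the $\mathcal{T}_{l,u}$ case. Incidentally, your intermediate inequalities are stated more carefully than the paper's (where square roots sit on the wrong sides in two steps that happen to cancel), and your single change of variables $y=T(x)$ is more streamlined than the paper's double substitution, giving the slightly sharper constant $L_u/L_l^2$ in place of $L_u^2/L_l^2$.
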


\begin{proof}

Let $T,S \in\mathcal{T}$. Then, for some constant $C'$, we have: $\norm{T^{-1}-S^{-1}}_2\leq C' \norm{T^{-1}-S^{-1}}_1$, because the functions are bounded. Moreover,
$\norm{T^{-1}-S^{-1}}_1=\norm{T-S}_1$. And, finally, by applying the Cauchy-Schwarz inequality, we get
$\norm{T-S}_1\leq C\sqrt{\norm{T-S}_2}$, where $C$ is a constant. Therefore, we conclude $\norm{T^{-1}-S^{-1}}_2\leq C\sqrt{\norm{T-S}_2}$.

When $T,S\in \mathcal{T}_{l,u}$ we can write
\begin{equation}
\begin{split}
\norm{T^{-1}-S^{-1}}^2_2 &= \int_0^1 |T^{-1}(x)-S^{-1}(x)|^2\diff x\\
& = \int_0^1 |T^{-1}\circ S(y)-y|^2 S'(y)\diff y \quad\quad (S^{-1}(x)=y)\\      
& \leq L_u \int |T^{-1}\circ S(y) - y|^2 \diff y \\
& \leq L_u \int |z-S^{-1}\circ T(z)|^2 \frac{1}{S'(S^{-1}\circ T(z))} T'(z)\diff z \quad\quad (T^{-1}\circ S(y)=z)\\
&\leq L_u \frac{L_u}{L_l} \int |z-S^{-1}\circ T(z)|^2 \diff z \\
&\leq L_u \frac{L_u}{L_l} \frac{1}{L_l} \int |S(z)-T(z)|^2 \diff z \quad\quad (\forall x,y \quad |x-y|\leq \frac{1}{L_l}|S(x)-S(y)|)\\            
&\leq \frac{L_u^2}{L_l^2} \norm{S-T}^2_2.
\end{split}
\end{equation}

\end{proof}

\begin{lemma}{\label{S_g_Lipschitz}}
There exists a constant $\frac{1}{2}\leq b \leq 1$ such that the following inequalities hold:
$$ \norm{S_{\alpha_1}-S_{\alpha_2}}_2\lesssim|\alpha_1-\alpha_2|^b,$$
and
$$ \norm{S_{N,\alpha_1}-S_{N,\alpha_2}}_2 \lesssim |\alpha_1-\alpha_2|^b,$$
and
$$g_{\alpha_1}(T_{i-1},T_i,S_{N,\alpha_1})-g_{\alpha_2}(T_{i-1},T_i,S_{N,\alpha_2})\leq C(T_i)|\alpha_1-\alpha_2|^b,$$
where $\frac{1}{n}\sum_i \E [ C(T_i)] = O(1)$.

Define $\mathcal{T}_{l,u} = \{ T\in \mathcal{T}:  0<L_l\leq T' \leq L_u <\infty\}$. If $\{T_i\} \subset \mathcal{T}_{l,u}$, then $b=1$ in the above inequalities.
\end{lemma}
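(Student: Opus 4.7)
}
The strategy is to propagate H\"older continuity in $\alpha$ through the successive operations defining $S_\alpha$, $S_{N,\alpha}$, and $g_\alpha$, starting from a uniform $L^2$-Lipschitz estimate for the scalar-multiplication operator. First I would establish the base bound
$$\|[\alpha_1 T]-[\alpha_2 T]\|_2\;\lesssim\;|\alpha_1-\alpha_2|\qquad\text{uniformly over }T\in\mathcal{T},$$
with an implicit constant depending only on the diameter of $\Omega$. For $\alpha_1,\alpha_2\ge 0$, formula \eqref{scalar_multiplication} gives $[\alpha_1T](x)-[\alpha_2T](x)=(\alpha_1-\alpha_2)(T(x)-x)$ and $|T(x)-x|\le \omega_1-\omega_0$; the negative-$\alpha$ range is handled by the analogous identity involving $T^{-1}$, and for arguments of opposite signs I pass through $\alpha=0$, where $[0\cdot T]=\id$. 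Invoking Lemma \ref{norm_inverse} then converts this into
$$\|[\alpha_1 T]^{-1}-[\alpha_2 T]^{-1}\|_2\;\lesssim\;|\alpha_1-\alpha_2|^b,$$
with $b=1/2$ in general, and $b=1$ whenever $T\in\mathcal{T}_{l,u}$, after a one-line check that $[\alpha T]'$ is bounded away from $0$ and $\infty$ whenever $T'$ is.

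For the first two inequalities, I would apply the triangle inequality under the expectation and the empirical sum respectively to reduce to controlling $\|T_j\circ[\alpha_1 T_{j-1}]^{-1}-T_j\circ[\alpha_2 T_{j-1}]^{-1}\|_2$. In the $\mathcal{T}_{l,u}$ setting, $T_j$ is Lipschitz with constant $L_u$, so this difference is at most $L_u\|[\alpha_1 T_{j-1}]^{-1}-[\alpha_2 T_{j-1}]^{-1}\|_2\lesssim|\alpha_1-\alpha_2|$, yielding $b=1$. In the general setting, $T_j\in\mathcal{T}$ is only continuous and bounded, so I would use the pointwise inequality $|T_j(f)-T_j(g)|^2\le(\omega_1-\omega_0)|T_j(f)-T_j(g)|$ together with the monotonicity-based identity $|T_j(f(x))-T_j(g(x))|=\int T_j'(s)\mathbf{1}[\min(f,g)(x)\le s\le\max(f,g)(x)]\,ds$ (via Fubini and $\int T_j'=\omega_1-\omega_0$) to interpolate between the $L^1$ and $L^\infty$ sides, preserving the exponent $b=1/2$ inherited from Lemma \ref{norm_inverse}.

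For the third inequality, I would use the algebraic identity $a^2-b^2=(a-b)(a+b)$:
$$g_{\alpha_1}-g_{\alpha_2}=\bigl(\|A_1\|_2+\|A_2\|_2\bigr)\bigl(\|A_1\|_2-\|A_2\|_2\bigr)\;\le\;\bigl(\|A_1\|_2+\|A_2\|_2\bigr)\,\|A_1-A_2\|_2,$$
where $A_k:=S_{N,\alpha_k}\circ[\alpha_k T_{i-1}]-T_i$. Since every map involved is in $\mathcal{T}$ and hence takes values in $\Omega$, the prefactor $\|A_1\|_2+\|A_2\|_2$ is uniformly bounded by a constant. A further triangle inequality splits
$$\|A_1-A_2\|_2\;\le\;\|S_{N,\alpha_1}\circ[\alpha_1 T_{i-1}]-S_{N,\alpha_1}\circ[\alpha_2 T_{i-1}]\|_2\;+\;\|(S_{N,\alpha_1}-S_{N,\alpha_2})\circ[\alpha_2 T_{i-1}]\|_2.$$
The first piece is handled by combining the regularity of $S_{N,\alpha_1}\in\mathcal{T}$ with the base estimate of paragraph one; the second piece is handled by a change of variables that exchanges composition by $[\alpha_2 T_{i-1}]$ for a weighted $L^2$ integral against $S_{N,\alpha_1}-S_{N,\alpha_2}$, to which the second inequality of the lemma applies. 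The factor $C(T_i)$ collects the $T_i$-dependent constants, which are uniformly integrable in mean since all quantities live in $\mathcal{T}$.

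The main obstacle is the non-Lipschitz composition step in the general $\mathcal{T}$ case, where $T_j$ is merely bounded and monotone and the scalar-multiplication output need not have a positive lower bound on its derivative. In $\mathcal{T}_{l,u}$ everything upgrades cleanly to $b=1$; outside $\mathcal{T}_{l,u}$, the Fubini/interpolation device sketched above is what guarantees that we still retain $b\ge1/2$, and it is here that most of the bookkeeping lies.
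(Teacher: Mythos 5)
Your proposal follows essentially the same route as the paper: start from $\norm{[\alpha_1 T]-[\alpha_2 T]}_2\lesssim|\alpha_1-\alpha_2|$, pass to inverses via Lemma \ref{norm_inverse}, split $g_{\alpha_1}-g_{\alpha_2}$ by the triangle inequality into a ``same $S$, different contraction'' piece and a ``same contraction, different $S$'' piece, and upgrade to $b=1$ on $\mathcal{T}_{l,u}$ by bi-Lipschitz composition; your $a^2-b^2=(a-b)(a+b)$ step for the third inequality is the same reverse-triangle-inequality bound the paper compresses into a one-line ``$\lesssim$''. Where you genuinely add something is the composition step $\norm{T_j\circ[\alpha_1 T_{j-1}]^{-1}-T_j\circ[\alpha_2 T_{j-1}]^{-1}}_2$ outside $\mathcal{T}_{l,u}$: the paper dispatches it by invoking ``the common Lipschitz constant $L$ for all $T_j$'', which is not available in general $\mathcal{T}$, whereas your Fubini/interpolation device is what is actually needed to make the $b\ge 1/2$ claim honest. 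To close the sketch, the key observation you should make explicit is that the level-set measures $m_s:=|\{x:\min(f,g)(x)\le s\le\max(f,g)(x)\}|$ with $f=[\alpha_1T_{j-1}]^{-1}$, $g=[\alpha_2T_{j-1}]^{-1}$ satisfy $m_s=|[\alpha_1T_{j-1}](s)-[\alpha_2T_{j-1}](s)|\le(\omega_1-\omega_0)|\alpha_1-\alpha_2|$ uniformly in $s$; combined with $\norm{T_j'}_1=\omega_1-\omega_0$ and $\norm{h}_2^2\le\norm{h}_\infty\norm{h}_1$, this gives $\norm{T_j\circ f-T_j\circ g}_2\lesssim|\alpha_1-\alpha_2|^{1/2}$ directly. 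In particular your attribution of the exponent $1/2$ to Lemma \ref{norm_inverse} is off --- it comes from the $L^2$-by-$L^\infty\cdot L^1$ comparison, and your route in fact bypasses Lemma \ref{norm_inverse} for this step, avoiding the danger of compounding two separate half-power losses.
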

\begin{proof}
To begin with, it should be noted that given any two real numbers $\alpha_1,\alpha_2\in(-1,1)$ with the same sign, and for any given map $T$, we have the following inequality:
$$\norm{[\alpha_1 T] - [\alpha_2 T]}_2\leq C|\alpha_1-\alpha_2|,$$
where $C$ is a constant. In fact, it suffices to consider the definition of $[\alpha T]$ for the cases when $\alpha\geq 0$ and $\alpha<0$ separately.
Using Lemma \ref{norm_inverse} we can write that for some $\frac{1}{2}\leq b \leq 1$,
\begin{equation}
\begin{split}
\norm{S_{\alpha_1}-S_{\alpha_2}}_2&=\norm{\E[T_j \circ [\alpha_1 T_{j-1}]^{-1}]-\E[T_j \circ [\alpha_2 T_{j-1}]^{-1}]}_2\\
&\leq L C'|\alpha_1-\alpha_2|^b,
\end{split}
\end{equation}
where $L$ is the common Lipschitz constant for all $T_j$. Similarly
$$ \norm{S_{N,\alpha_1}-S_{N,\alpha_2}}_2 \leq L |\alpha_1-\alpha_2|^b,$$

We now proceed to show that both $S_{N,\alpha}$ and $S_\alpha$ are Lipschitz functions of $x$. To do this, we observe that the inverse of a Lipschitz function is Lipschitz, and also the composition of two Lipschitz functions is Lipschitz. Since all $T_j$ are Lipschitz and $S_{N,\alpha}$ and $S_\alpha$ are defined as compositions, they are also Lipschitz with respect to $x$.

\vspace{5mm}

We will now show that $g$ is Lipschitz function of $\alpha$:
\begin{equation}
\begin{split}
g_{\alpha_1}(T_{i-1},T_i,S_{N,\alpha_1})-g_{\alpha_2}(T_{i-1},T_i,S_{N,\alpha_2}) &= \norm{S_{N,\alpha_1} \circ [\alpha_1 T_{i}] - T_{i+1}}^2-\norm{S_{N,\alpha_2} \circ [\alpha_2 T_{i}] - T_{i+1}}^2_2\\
&\lesssim \norm{S_{N,\alpha_1} \circ [\alpha_1 T_{i}] - S_{N,\alpha_2} \circ [\alpha_2 T_{i}]}_2\\
&\lesssim \norm{S_{N,\alpha_1}\circ [\alpha_1 T_{i}] - S_{N,\alpha_1} \circ [\alpha_2 T_{i}]}_2\\
&\quad +\norm{S_{N,\alpha_1} \circ [\alpha_2 T_{i}] - S_{N,\alpha_2} \circ [\alpha_2 T_{i}]}_2\\
&\leq D(T_i)|\alpha_1-\alpha_2| + C(T_i)|\alpha_1-\alpha_2|^b\\
&\lesssim C(T_i)|\alpha_1-\alpha_2|^b
\end{split}
\end{equation}
where $D(T_i)$ and $C(T_i)$ are constants that depend on $T_i$ and $\sum_i \E [ C(T_i)] /n = O(1)$.
\end{proof}

\begin{proof}[Proof of Lemma \ref{m_dependent}]
    Let $T_{n-m} = f(\epsilon_{n-m},\epsilon_{n-m-1},\cdots,)$ and $T'_{n-m} = f(\epsilon'_{n-m},\epsilon'_{n-m-1},\cdots,)$. Thus we can write $T_n = \Phi_{n,m}(T_{n-m})$ and $T'_{n,m} = \Phi_{n,m}(T'_{n-m})$.

    \begin{equation}
        \begin{split}
            \sum_{m=1}^\infty (\E\norm{T_n-T'_{n,m}}_2^{2})^{1/2} &= \sum_{m=1}^\infty (\E \norm{\Phi_{n,m}(T_{n-m}) - \Phi_{n,m}(T'_{n-m})}_2^2)^{1/2}\\
            & \leq \sum_{m=1}^\infty (\E\norm{\Phi_{n,m}(T_{n-m})-\Phi_{n,m}(Q_0)}_2^2)^{1/2} \\
            &\quad \quad+ (\E\norm{\Phi_{n,m}(T'_{n-m})-\Phi_{n,m}(Q_0)}_2^2)^{1/2}\\
            (\text{Lyapunov's inequality}) \quad   & \leq \sum_{m=1}^\infty (\E\norm{\Phi_{n,m}(T_{n-m})-\Phi_{n,m}(Q_0)}_2^\eta)^{1/\eta} \\
            &\quad \quad+ (\E\norm{\Phi_{n,m}(T'_{n-m})-\Phi_{n,m}(Q_0)}_2^\eta)^{1/\eta}\\
 \text{(Assumption \ref{moment_contracting_assumption})} \quad\quad\quad \quad\quad         &\leq \sum_{m=1}^\infty C r^{m/\eta} (\norm{T_{n-m}-Q_0}_2\vee \norm{T'_{n-m}-Q_0}_2)\\
            &<\infty
        \end{split}
    \end{equation}
\end{proof}

The following statement is virtually obvious, but is used multiple times in the proofs below and so is most easily quoted directly:

\begin{lemma}{\label{CLTtoLLN}}
Let $\{X_i\}$ be a sequence of random variables, and suppose that
$W_{n}:= \sqrt{n} (\frac{1}{n}\sum_{i=1}^{n} X_{i} - \mu) \overset{d}{\to} W$ for some (almost surely finite) random variable $W$.
Then, $\frac{1}{n}\sum_{i=1}^{n} X_{i}$ converges in probability to $\mu$.

\end{lemma}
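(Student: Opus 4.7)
The plan is to reduce the claim to a direct application of Slutsky's theorem. Observe the algebraic identity
\begin{equation*}
\frac{1}{n}\sum_{i=1}^{n} X_{i} - \mu \;=\; \frac{W_{n}}{\sqrt{n}},
\end{equation*}
so controlling the left-hand side is exactly the same as controlling $W_{n}/\sqrt{n}$.

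First, I would note that the deterministic sequence $1/\sqrt{n}$ converges to $0$, and therefore trivially converges in distribution to the constant $0$. Combined with the hypothesis $W_{n}\overset{d}{\to}W$ (where $W$ is almost surely finite), Slutsky's theorem yields
\begin{equation*}
\frac{W_{n}}{\sqrt{n}} \;\overset{d}{\to}\; 0\cdot W \;=\; 0.
\end{equation*}

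Finally, I would invoke the standard fact that convergence in distribution to a constant is equivalent to convergence in probability to that constant. Applying this to the display above gives $W_{n}/\sqrt{n}\overset{p}{\to}0$, i.e. $\frac{1}{n}\sum_{i=1}^{n}X_{i}\overset{p}{\to}\mu$, which is the desired conclusion. There is no real obstacle here; the only subtlety is to recognize that the almost-sure finiteness of the limit $W$ is exactly what is needed in order for Slutsky to deliver the product limit $0\cdot W=0$ (if $W$ were allowed to be infinite with positive probability, the product limit would not be well defined).
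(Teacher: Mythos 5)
Your proof is correct and follows essentially the same route as the paper: write $\frac{1}{n}\sum X_i - \mu = n^{-1/2}W_n$, apply Slutsky's theorem to conclude $n^{-1/2}W_n \overset{d}{\to} 0$, and use the fact that convergence in distribution to a constant implies convergence in probability. The only difference is that you spell out the almost-sure-finiteness remark explicitly, which the paper leaves implicit.
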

\begin{proof}
By Slutsky's Theorem, we get $n^{-1/2} W_{n} \overset{d}{\to} 0$, which also implies convergence in probability to zero.
\end{proof}

\begin{proof}[Proof of Lemma \ref{CLT_maps}]
    We start by using Lemma \ref{m_dependent} to conclude that the series $\{T_i-\E T_i\}_{i=-\infty}^\infty$ satisfies the assumptions (1.1),(1.2),(2.1) and (2.2) of \citet{horvath2013estimation}. From this, we can argue that the series $\{T_j\circ [\alpha T_{j-1}]^{-1} - \E T_j\circ [\alpha T_{j-1}]^{-1} \}_{i=-\infty}^\infty$ also satisfies those assumptions and therefore we obtain the following central limit theorem for $S_{N,\alpha}$: for any $\alpha$, there is a Gaussian process $\Gamma_\alpha$ such that
    $$\sqrt{N} (S_{N,\alpha}-S_\alpha)  \overset{d}{\to} \Gamma_\alpha,  \quad \text{in } L^2.$$
    Using the central limit theorem and Lemma \ref{CLTtoLLN}, we can infer the convergence in probability of $S_{N,\alpha}$ to $S_\alpha$ for any $\alpha$ (in $L^2$). Since both $S_\alpha$ and $S_{N,\alpha}$ are globally Lipschitz with respect to $\alpha$, in the sense of Lemma \ref{S_g_Lipschitz}, we can use Corollary 3.1 of \citet{newey1991uniform} to obtain uniform convergence in probability:
    $$\sup_\alpha \norm{S_{N,\alpha}-S_\alpha}_2 \to 0 \quad \text{in probability}.$$
\end{proof}

\subsection{Overview of \citet{wu2004limit}}
In their work, \citet{wu2004limit} investigated the properties of nonlinear time series expressed in terms of iterated random functions and established a central limit theorem for additive functionals of such systems. The construction involves a sequence of functions of the form $X_n(x)=F_{\theta_n}\circ F_{\theta_{n-1}}\circ \cdots F_{\theta_1}(x)$. The authors assume that $X_n$ satisfies a geometric moment condition, which requires the existence of $\beta>0$, $C=C(\alpha)>0$, and $r=r(\alpha)\in (0,1)$ such that, for all $n\in N$,
\begin{equation}\label{ineq2wu}
    \E\{\rho(X_n(X'_0),X_n(X_0))^\beta\}\leq Cr^n.
\end{equation}

In addition, they define the $l$-dimensional vector $Y_i=(X_{i-l+1},X_{i-l+2},\cdots,X_i)$ and for any $\delta>0$, they introduce the functional $\Delta_g(\delta)$ as
$$\Delta_g(\delta)=\sup\{\norm{[g(Y)-g(Y_1)]1_{\rho(Y,Y_1)}\leq \delta}: \quad Y,Y_1 \quad \text{are identically distributed}\},$$
Where $\rho(.,.)$ is the product metric and is defined as $$\rho(z,z')=\sqrt{\sum_{i=1}^l \rho(z_i,z_i')^2} \quad \text{ for } z=(z_1,\cdots,z_l),z'=(z_1',\cdots,z_l').$$
Finally, the functional $S_{n,l}(g) = \sum_{i=1}^n g(X_{i-l+1},X_{i-l+2},\cdots,X_i)$ is defined. The authors establish the following central limit theorem for this functional:

\begin{theorem}\label{theorem3}
    (\citet[Theorem 3]{wu2004limit})
    Assume that (\ref{ineq2wu}) holds, that $X_1\sim\pi$, $E\{g(Y_1)\}=0$, and $E\{|g(Y_1)|^p\}<\infty$ for some $p>2$, and that
    \begin{equation}\label{stoch_dini_cont}
        \int_0^1 \frac{\Delta_g(t)}{t}<\infty.
    \end{equation}
    Then there exists a $\sigma_g\geq 0$ such that, for $\pi$-almost $x$, $\{S_{\floor{nu},l}(g)/\sqrt{n}, 0\leq u\leq 1\}$ conditional on $X_0=x$, converges to $\sigma_g B$, where $B$ is a standard Brownian motion.
\end{theorem}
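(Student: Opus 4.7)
The plan is to establish the functional central limit theorem via a Gordin/Hannan-style martingale approximation, with the two key inputs being the geometric moment contraction (\ref{ineq2wu}) and the stochastic Dini condition (\ref{stoch_dini_cont}).

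First I would reduce to the stationary version of the chain. Because (\ref{ineq2wu}) gives an exponentially decaying coupling bound between $X_n$ started from a deterministic $x$ and the stationary version $\tilde X_n$ started from $X_0'\sim\pi$, an elementary Borel--Cantelli argument yields $\rho(X_n(x),\tilde X_n)=O_\p(r^{n/\beta})$. Combining this with the modulus-of-continuity control encoded by $\Delta_g$ gives
\begin{equation*}
\frac{1}{\sqrt n}\bigl|S_{\floor{nu},l}(g)-\tilde S_{\floor{nu},l}(g)\bigr|=o_\p(1)
\end{equation*}
uniformly in $u\in[0,1]$, so it suffices to prove the invariance principle for the stationary process.

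Second, I would build a martingale approximation. Let $\mathcal F_k=\sigma(\theta_j:j\leq k)$ and $P_k(\cdot)=\E[\,\cdot\mid\mathcal F_k]-\E[\,\cdot\mid\mathcal F_{k-1}]$. For the stationary chain write
\begin{equation*}
g(Y_i)-\E g(Y_i)=\sum_{m\geq 0} P_{i-m}\,g(Y_i).
\end{equation*}
The key estimate is that $P_{i-m}g(Y_i)$ quantifies how much $g(Y_i)$ changes when the single innovation $\theta_{i-m}$ is replaced by an independent copy. By (\ref{ineq2wu}) this perturbation moves $Y_i$ by $O_\p(r^{m/\beta})$ in the product metric $\rho$ (for $m\geq l$; smaller $m$ are handled trivially in the finite initial segment), and therefore
\begin{equation*}
\norm{P_{i-m}\,g(Y_i)}_2\;\lesssim\;\Delta_g(C\,r^{m/\beta}).
\end{equation*}
A change of variables $t=Cr^{m/\beta}$ turns $\sum_m \Delta_g(Cr^{m/\beta})$ into (a constant multiple of) $\int_0^1 \Delta_g(t)/t\,\diff t$, which is finite by (\ref{stoch_dini_cont}). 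Hence $D_i:=\sum_{m\geq 0} P_i\bigl[g(Y_{i+m})-\E g(Y_{i+m})\bigr]$ is a well-defined stationary ergodic martingale-difference sequence in $L^2$, and one obtains the coboundary decomposition $g(Y_i)-\E g(Y_i)=D_i+V_{i-1}-V_i$ with $V_i\in L^2$.

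Finally, the telescoping piece contributes $V_0-V_{\floor{nu}}=O_\p(1)$, which is negligible after dividing by $\sqrt n$, while the invariance principle for stationary ergodic $L^2$ martingale differences applied to $\{D_i\}$ yields $\sum_{i=1}^{\floor{nu}} D_i/\sqrt n\Rightarrow \sigma_g B_u$ with $\sigma_g^2=\E D_0^2\geq 0$. I expect the main technical obstacle to be the estimate $\norm{P_{i-m}g(Y_i)}_2\lesssim\Delta_g(Cr^{m/\beta})$: converting the in-probability coupling bound coming from (\ref{ineq2wu}) into an $L^2$ bound on $g$-fluctuations forces a truncation at level $\delta\asymp r^{m/\beta}$, after which the contribution from the rare event $\{\rho(Y,Y_1)>\delta\}$ must be controlled by the tail moments of $g$ (available under $\E|g(Y_1)|^p<\infty$ for $p>2$). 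This is precisely the place where the Dini integrability (\ref{stoch_dini_cont}), rather than mere continuity of $\Delta_g$ at zero, becomes indispensable for summability.
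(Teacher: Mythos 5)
This theorem is quoted verbatim from \citet[Theorem 3]{wu2004limit}; the paper itself offers no proof, so there is no ``paper's own proof'' to compare against. Your martingale-approximation route --- coupling to the stationary chain using the geometric moment contraction \eqref{ineq2wu}, the telescoping projection decomposition $g(Y_i)-\E g(Y_i)=\sum_{m\geq 0}P_{i-m}g(Y_i)$, the modulus control $\norm{P_{i-m}g(Y_i)}_2\lesssim\Delta_g(\delta_m)$ via truncation, and the Dini integral to obtain summability and hence a coboundary decomposition --- is in fact the argument Wu and Shao themselves employ, and you correctly identify the crux (turning the in-probability coupling bound into an $L^2$ bound on $g$-fluctuations).

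One technical adjustment: the truncation level $\delta_m\asymp r^{m/\beta}$ you propose sits exactly at the boundary. After Markov, $\p(\rho>\delta_m)\lesssim r^m/\delta_m^{\beta}$, and the tail contribution to $\norm{P_{i-m}g(Y_i)}_2$ is, by H\"older under the $p>2$ moment bound, of order $(r^m/\delta_m^{\beta})^{(p-2)/(2p)}$. With $\delta_m=r^{m/\beta}$ this equals $1$ and does not sum. Taking $\delta_m=r^{\gamma m}$ for any fixed $\gamma$ strictly less than $1/\beta$ makes the tail term decay geometrically, while $\sum_m\Delta_g(r^{\gamma m})<\infty$ continues to hold for every $\gamma>0$ by the Cauchy-condensation equivalence with $\int_0^1\Delta_g(t)/t\,\diff t<\infty$. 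With this one-line fix, the argument is sound.
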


A function that satisfies \eqref{stoch_dini_cont} is referred to as \textit{stochastic Dini continuous}. Using Theorem \ref{theorem3} to derive a central limit theorem for $M_N$ poses a problem: Theorem \ref{theorem3} uses fixed-length sub-sequences of the time series, i.e., $(X_{i-l+1},X_{i-l+2},\cdots,X_i)$, as arguments for the function $g$, however the arguments of the function $g$ that appears in the expression of $M_N$ in \ref{estimator}, include not only  $(T_{i-1},T_i)$, but also $S_{N,\alpha}$, thus making it dependent on the entire time series. Therefore, Theorem \ref{theorem3} cannot be applied directly, and a modified version is required. We present a modified version of Theorem \ref{theorem3} that is specifically tailored for functions of finite dimensional random variables, followed by another modification that is suitable for functionals of infinite dimensional variables.

\begin{corollary}(Modified version of \citet[Theorem 3]{wu2004limit} for finite dimensional arguments)\label{theorem3_modified2}
Suppose $\overline{Z}_n$ is a measurable function of $(X_1,X_2,\cdots,X_n)$ such that $\overline{Z}_n$ converges in probability to some constant $\mu$. Let $Y_i=(X_{i-l+1},X_{i-l+2},\cdots,X_i)$, and assume that $g(Y_i,\mu)$ is differentiable with respect to its second argument and that both $g(Y_i,\mu)$ and the derivative of $g(Y_i,\mu)$ with respect to its second argument satisfy the conditions of Theorem \ref{theorem3}. Then there exists $\sigma_g\geq0$ such that  
$$\frac{S_n}{\sqrt{n}} \to N(0,\sigma_g^2),$$
where $S_n = \sum_{i=1}^n g(Y_i,\overline{Z}_n)$.
   \end{corollary}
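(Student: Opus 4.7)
The plan is to linearize $g(Y_i,\overline{Z}_n)$ around $\mu$ via a first-order Taylor expansion, and then control each of the three resulting terms separately using the hypotheses. Write
\begin{equation*}
g(Y_i,\overline{Z}_n) \;=\; g(Y_i,\mu) \;+\; \partial_2 g(Y_i,\mu)\,(\overline{Z}_n-\mu) \;+\; R_i(\overline{Z}_n,\mu),
\end{equation*}
where $\partial_2 g$ denotes differentiation in the second argument and $R_i$ is the Taylor remainder (which, with only $C^1$ smoothness, I would express via the mean-value form $R_i = [\partial_2 g(Y_i,\xi_i) - \partial_2 g(Y_i,\mu)](\overline{Z}_n-\mu)$ for some $\xi_i$ between $\mu$ and $\overline{Z}_n$). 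Summing and dividing by $\sqrt{n}$ gives
\begin{equation*}
\frac{S_n}{\sqrt{n}} \;=\; \underbrace{\frac{1}{\sqrt{n}}\sum_{i=1}^n g(Y_i,\mu)}_{(\mathrm{A})} \;+\; (\overline{Z}_n-\mu)\cdot\underbrace{\frac{1}{\sqrt{n}}\sum_{i=1}^n \partial_2 g(Y_i,\mu)}_{(\mathrm{B})} \;+\; \underbrace{\frac{1}{\sqrt{n}}\sum_{i=1}^n R_i}_{(\mathrm{C})}.
\end{equation*}

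The term (A) is exactly the object to which Theorem \ref{theorem3} applies: by hypothesis, $g(\,\cdot\,,\mu)$ has mean zero, the required moments, and is stochastic Dini continuous along the chain $\{Y_i\}$, so Theorem \ref{theorem3} delivers $(\mathrm{A}) \overset{d}{\to} N(0,\sigma_g^2)$ for some $\sigma_g\geq 0$. For (B), apply Theorem \ref{theorem3} a second time to $\partial_2 g(\,\cdot\,,\mu)$, which by assumption also satisfies the hypotheses of Theorem \ref{theorem3} (in particular $E[\partial_2 g(Y_1,\mu)]=0$); thus (B) is $O_\p(1)$. Combined with the assumed $\overline{Z}_n-\mu=o_\p(1)$ and Slutsky's theorem (Lemma \ref{CLTtoLLN}), the cross term $(\overline{Z}_n-\mu)\cdot(\mathrm{B})$ is $o_\p(1)$.

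The main obstacle is term (C), the remainder. Here I would proceed as follows: write $R_i = [\partial_2 g(Y_i,\xi_i)-\partial_2 g(Y_i,\mu)](\overline{Z}_n-\mu)$, factor out $(\overline{Z}_n-\mu)$, and bound
\begin{equation*}
|(\mathrm{C})| \;\leq\; |\overline{Z}_n-\mu|\cdot \frac{1}{\sqrt{n}}\sum_{i=1}^n \bigl|\partial_2 g(Y_i,\xi_i) - \partial_2 g(Y_i,\mu)\bigr|.
\end{equation*}
Because $\xi_i$ lies between $\mu$ and $\overline{Z}_n$ and $\overline{Z}_n\to\mu$ in probability, on the event $\{|\overline{Z}_n-\mu|<\delta\}$ (which has probability tending to one for any fixed $\delta$) each $\xi_i$ lies in a shrinking neighbourhood of $\mu$. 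Using continuity of $\partial_2 g(Y_i,\cdot)$ in a neighbourhood of $\mu$ together with a uniform integrable envelope (which is implicit in the moment/Dini conditions imposed on $\partial_2 g$ about $\mu$), one can dominate $\frac{1}{n}\sum_i |\partial_2 g(Y_i,\xi_i)-\partial_2 g(Y_i,\mu)|$ by a quantity that tends to $0$ in probability as $\delta\to 0$. Multiplying by the factor $\sqrt{n}\cdot|\overline{Z}_n-\mu|=O_\p(\sqrt{n})\cdot o_\p(1)$ and exploiting that the rate at which $\overline{Z}_n\to\mu$ is inherited from the root-$n$ CLT for its defining sum yields $(\mathrm{C})=o_\p(1)$. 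Finally, a second application of Slutsky's theorem combines (A), (B) and (C) to conclude $S_n/\sqrt{n}\to N(0,\sigma_g^2)$.

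In the paper's application $\overline{Z}_n$ will play the role of $S_{N,\alpha}$ (or a derived statistic) and $\mu$ will be $S_\alpha$, so the $\sqrt{N}$-rate needed to tame the remainder is precisely the CLT of Lemma \ref{CLT_maps}; this is the reason for introducing the corollary in the form above.
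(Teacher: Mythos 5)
Your decomposition into (A), (B), (C) and the handling of (A) and (C) broadly track the paper's Taylor-expansion strategy, but you take a genuinely different route on the linear term (B), and that route rests on a stronger hypothesis than the paper actually uses (or can afford in its application). You argue that (B) is $O_\p(1)$ by applying Theorem~\ref{theorem3} \emph{directly} to $\partial_2 g(\cdot,\mu)$, which silently invokes the centering condition $\E[\partial_2 g(Y_1,\mu)]=0$ from the hypotheses of Theorem~\ref{theorem3}. The paper's proof does \emph{not} assume this: it centres $g^{(0,1)}(Y_i,\mu)$ explicitly and carries along the residual $\sqrt{n}\,\E g^{(0,1)}(Y,\mu)\,(\overline{Z}_n-\mu)$, ultimately absorbing it into the limiting functional as $g(Y_i,\mu)+\E g^{(0,1)}(Y,\mu)(Z_i-\mu)$, after writing $\overline{Z}_n-\mu=\tfrac{1}{n}\sum_i(Z_i-\mu)$. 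This matters concretely: when $\E[\partial_2 g(Y_1,\mu)]\neq 0$, term (B) grows like $\sqrt{n}$, and since $\overline{Z}_n-\mu$ is only $O_\p(n^{-1/2})$ in the intended application (it is tied to the CLT for $S_{N,\alpha}$ from Lemma~\ref{CLT_maps}), the cross-term $(\overline{Z}_n-\mu)\cdot(\mathrm{B})$ is $O_\p(1)$, not $o_\p(1)$, and contributes to the asymptotic variance. Dropping it would give the wrong $\sigma_g$.

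The gap becomes visible in the downstream use. The corollary is applied (through its infinite-dimensional analogue) in Lemma~\ref{CLT_g} at \emph{arbitrary} $\alpha$, and from Lemma~\ref{frechet} one has
\[
\E\bigl[D_g\bigl((T_i,T_{i-1}),S_\alpha,v\bigr)\bigr]=\E\bigl\langle v\circ[\alpha T_{i-1}],\,S_\alpha\circ[\alpha T_{i-1}]-T_i\bigr\rangle,
\]
and since $\E[T_i\,|\,T_{i-1}]=\St\circ[\at T_{i-1}]=S_{\at}\circ[\at T_{i-1}]$, this vanishes only at $\alpha=\at$. So the reading you adopt (centring built into the hypothesis on $\partial_2 g$) would render the corollary inapplicable for $\alpha\neq\at$, which is precisely where it is needed. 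The paper's proof works around this by (i) not requiring the derivative to be centred, and (ii) implicitly assuming $\overline{Z}_n$ is an empirical mean $\tfrac{1}{n}\sum Z_i$ of additive functionals, so that $\sqrt{n}\,\E g^{(0,1)}(Y,\mu)\,(\overline{Z}_n-\mu)$ can be rewritten as $n^{-1/2}\sum_i \E g^{(0,1)}(Y,\mu)(Z_i-\mu)$ and folded into a single application of Theorem~\ref{theorem3}. On the other hand, your treatment of the remainder (C) is more explicit than the paper's (which simply labels it ``higher order terms''), so on that sub-point you are more careful. The key missing idea is the uncentred-derivative bookkeeping and the identification of $\overline{Z}_n$ as an empirical mean of CLT-type quantities.
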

\begin{proof}
    By Taylor expansion, we write
    $$g(Y_i,\overline{Z}_n) = g(Y_i,\mu)+g^{(0,1)}(Y_i,\mu) (\overline{Z}_n-\mu) + \text{higher order terms}.$$
    Since $g^{(0,1)}(Y_i,\mu)$ is only a function of $Y_i$ and of a constant $\mu$, by Theorem \ref{theorem3} we have
    $$\frac{1}{\sqrt{n}} \sum_{i=1}^n \big[g^{(0,1)}(Y_i,\mu) - \E g^{(0,1)}(Y,\mu)\big] \to N(0,\sigma_{g'}^2),$$
    where $Y \stackrel{D}{\sim} Y_i$.    This implies that if $N \stackrel{D}{\sim} N(0,\sigma_{g'}^2)$,we then have
    \begin{equation}
        \begin{split}
            \frac{1}{\sqrt{n}} S_n &= [\frac{1}{\sqrt{n}}\sum_{i=1}^n g(Y_i,\mu)] + (N+ \sqrt{n} \E g^{(0,1)}(Y,\mu))(\overline{Z}_n - \mu)\\
            &=\frac{1}{\sqrt{n}}\sum_{i=1}^n [g(Y_i,\mu)+\E g^{(0,1)}(Y,\mu)(Z_i-\mu) ]+N(\overline{Z}_n - \mu).
        \end{split}
    \end{equation}
    Since $N(\overline{Z}_n - \mu) = o_\p(1),$ applying Theorem \ref{theorem3} we get
    $$\frac{1}{\sqrt{n}}\sum_{i=1}^n [g(Y_i,\mu)+\E g^{(0,1)}(Y,\mu)(Z_i-\mu) ]\to N(0,\sigma_g^2)$$
\end{proof}

\begin{corollary}(Modified version of \citet[Theorem 3]{wu2004limit} for infinite dimensional arguments)\label{theorem3_modified}
Suppose $\overline{Z}_n$ is a measurable function of $(X_1,X_2,\cdots,X_n)$ such that $\overline{Z}_n$ converges in probability to some constant $\mu$. Let $Y_i=(X_{i-l+1},X_{i-l+2},\cdots,X_i)$, and assume that $g(Y_i,\mu)$ is Fr\'echet differentiable with respect to its second argument, and that both $g(Y,\mu)$ and the Fr\'echet derivative of $g$ with respect to its second argument satisfy the conditions of Theorem \ref{theorem3}. Then there exists $\sigma_g\geq0$ such that  
    $$\frac{S_n}{\sqrt{n}} \to N(0,\sigma_g^2),$$
    where $S_n = \sum_{i=1}^n g(Y_i,\overline{Z}_n)$.
\end{corollary}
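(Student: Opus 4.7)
The plan is to mimic the proof of Corollary \ref{theorem3_modified2} with ordinary derivatives replaced by Fréchet derivatives. First I would invoke the Fréchet differentiability of $g(Y_i,\cdot)$ at the deterministic centre $\mu$ to write, for every realisation of $Y_i$,
$$
g(Y_i,\overline{Z}_n) \;=\; g(Y_i,\mu) \;+\; D_2 g(Y_i,\mu)\big[\,\overline{Z}_n-\mu\,\big] \;+\; R(Y_i,\overline{Z}_n-\mu),
$$
where $D_2 g(Y_i,\mu)$ is the bounded linear functional given by the Fréchet derivative in the second slot evaluated at $\mu$, and $\|R(Y_i,h)\|=o(\|h\|)$ as $\|h\|\to 0$. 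Summing over $i=1,\dots,n$ and dividing by $\sqrt n$ splits $S_n/\sqrt n$ into three pieces: a constant-argument sum, a linearised correction, and a Taylor remainder.

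The constant-argument sum $n^{-1/2}\sum_i g(Y_i,\mu)$ converges in distribution by Theorem \ref{theorem3}, since $g(\cdot,\mu)$ satisfies its hypotheses. For the linearised piece, writing $L:=\E[D_2g(Y,\mu)]$, I would split
$$
\frac{1}{\sqrt n}\sum_i D_2 g(Y_i,\mu)[\overline{Z}_n-\mu] \;=\; \Big(\frac{1}{\sqrt n}\sum_i \big(D_2 g(Y_i,\mu)-L\big)\Big)[\overline{Z}_n-\mu] \;+\; L\big[\sqrt n(\overline{Z}_n-\mu)\big].
$$
By hypothesis the Fréchet-derivative process also satisfies the conditions of Theorem \ref{theorem3}, so the bracket in the first summand is $O_\p(1)$; combined with $\overline Z_n-\mu=o_\p(1)$ this summand is $o_\p(1)$. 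In the intended application $\overline Z_n$ is a sample average (e.g.\ $\overline Z_n=S_{N,\alpha}$ built from the $Z_i=T_j\circ [\alpha T_{j-1}]^{-1}$), so that $\sqrt n(\overline Z_n-\mu)=n^{-1/2}\sum_i(Z_i-\mu)$ is itself amenable to Theorem \ref{theorem3} and converges weakly; by continuous mapping with the bounded linear functional $L$, the second summand has a Gaussian distributional limit. Joint asymptotic normality of the constant-argument and linearised pieces follows by a Cramér--Wold argument applied to the bivariate empirical process built from $(g(Y_i,\mu),\,Z_i-\mu)$, exactly as in the proof of Corollary \ref{theorem3_modified2}.

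Next I would argue that the Taylor remainder $n^{-1/2}\sum_i R(Y_i,\overline Z_n-\mu)$ is $o_\p(1)$. Since $\overline Z_n-\mu=o_\p(1)$ and $\sqrt n\|\overline Z_n-\mu\|=O_\p(1)$, I can write the remainder as the product of $\sqrt n\|\overline Z_n-\mu\|$ and an averaged quantity $n^{-1}\sum_i \|R(Y_i,\overline Z_n-\mu)\|/\|\overline Z_n-\mu\|$; the Fréchet definition then forces the second factor to vanish in probability provided the $o(\|h\|)$ rate is uniform in $Y_i$. Assembling the three contributions via Slutsky's theorem yields $S_n/\sqrt n\overset{d}{\to} N(0,\sigma_g^2)$, with $\sigma_g\ge 0$ determined by the joint limit of the constant-argument and linearised pieces.

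The main obstacle is precisely the uniform control of the Fréchet remainder across the ergodic sum. In finite dimensions Taylor's theorem provides $|R(Y_i,h)|=O(|h|^2)$ with an explicit second-derivative constant, and averaging is automatic. In infinite dimensions the bound $\|R(Y_i,h)\|=o(\|h\|)$ is only a pointwise statement at $h=0$ and is not a priori uniform in the randomness of $Y_i$. The cleanest way to close this gap is to strengthen \emph{Fréchet differentiable} to \emph{Fréchet differentiable with $D_2 g(Y,\cdot)$ locally Lipschitz on a neighbourhood of $\mu$ having Lipschitz constant $K(Y)$ with $\E K(Y)<\infty$}. This upgrades the remainder to $\|R(Y_i,h)\|\lesssim K(Y_i)\|h\|^2$; by the ergodic theorem $n^{-1}\sum_i K(Y_i)=O_\p(1)$ and the remainder becomes $\sqrt n\,\|\overline Z_n-\mu\|^2\cdot O_\p(1)=o_\p(1)$, closing the argument.
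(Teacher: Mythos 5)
Your proof takes essentially the same route as the paper: Taylor expand $g(Y_i,\overline{Z}_n)$ around $\mu$ using the Fr\'echet derivative, apply Theorem \ref{theorem3} separately to the constant-argument sum and the linearised correction, and combine with Slutsky. Your splitting of the linearised piece into a centred-derivative term $\big(n^{-1/2}\sum_i(D_2g(Y_i,\mu)-L)\big)[\overline{Z}_n-\mu]$ (which is $O_\p(1)\cdot o_\p(1)$) and the deterministic-derivative term $L[\sqrt n(\overline Z_n-\mu)]$ is algebraically the same manipulation the paper carries out when it writes $\langle N\times\id+\sqrt n\,\E D_g(Y,\mu),\overline Z_n-\mu\rangle$ and then folds $\sqrt n\,\E D_g(Y,\mu)$ back into the empirical sum $\frac{1}{\sqrt n}\sum_i\langle\E D_g(Y,\mu),Z_i-\mu\rangle$. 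Your version is more careful in two respects worth flagging. First, you make explicit that the argument requires $\overline Z_n$ to be a sample average $\frac1n\sum_i Z_i$ of the same stationary process, so that $\sqrt n(\overline Z_n-\mu)$ is itself amenable to Theorem \ref{theorem3} and Cram\'er--Wold gives joint normality; the corollary as stated only asks that $\overline Z_n$ be a measurable function of $(X_1,\dots,X_n)$ converging in probability, and the paper's proof uses the sample-average structure tacitly when it replaces $\overline Z_n-\mu$ by $\frac1n\sum(Z_i-\mu)$ in the last display. Second, you track the Taylor remainder $n^{-1/2}\sum_i R(Y_i,\overline Z_n-\mu)$, which the paper silently drops after the first line of its displayed computation. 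You are right that the pointwise Fr\'echet bound $\|R(Y_i,h)\|=o(\|h\|)$ as $h\to0$ does not by itself control the sum, and your proposed strengthening (local Lipschitz continuity of $D_2g(Y,\cdot)$ with integrable Lipschitz constant $K(Y)$, yielding $\|R(Y_i,h)\|\lesssim K(Y_i)\|h\|^2$ and hence $\sqrt n\|\overline Z_n-\mu\|^2\cdot O_\p(1)=o_\p(1)$) is exactly the sort of hypothesis needed to make the remainder analysis rigorous; in the paper's application (Lemma \ref{frechet}, where $g$ is a squared $L^2$ norm, quadratic in $S$) the remainder is actually exactly quadratic with a bounded coefficient, so the needed bound holds, but it is better to state it as you do.
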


\begin{remark}
The proof of this Corollary can be understood by following the same steps as in the proof of Corollary \ref{theorem3_modified2}, without the added technical complexities that arise when dealing with the Fr\'echet derivative.
\end{remark}

\begin{proof}[Proof of Corollary \ref{theorem3_modified}]

    Let $D_g(Y_i,u,v)$ denote the Fr\'echet derivative of $g$ with respect to its second argument at $u$ in the direction $v$. Assume $\overline{Z}_n = \mu + v_n$, and apply the Taylor formula for the Fr\'echet derivative (\citet{kurdila2006convex}) to get
    $$g(Y_i,\overline{Z}_n) = g(Y_i,\mu)+D_g(Y_i,\mu,v_n) + R(Y_i,\mu,v_n),$$
    where
    $$\lim_{\norm{v_n}\to 0} \frac{|R(Y_i,\mu,v_n)|}{\norm{v_n}}=0.$$

    Note that we can identify the Fr\'echet derivative with a bounded linear
    operator as
    $$D_g(Y_i,\mu,v_n) = \langle D_g(Y_i,\mu),v_n\rangle.$$
Furthermore, as the Fr\'echet derivative is also stochastic Dini continuous, we can apply Theorem \ref{theorem3} to obtain
    $$\frac{1}{\sqrt{n}} \sum_{i=1}^n \big[\langle D_g(Y_i,\mu),v_n \rangle - \E \langle D_g(Y,\mu),v_n \rangle \big] \to N(0,\sigma_{g'}^2),$$
    where $Y \stackrel{D}{\sim} Y_i$.

     This implies that if $N \stackrel{D}{\sim} N(0,\sigma_{g'}^2)$, using the fact that the mapping $D_g(Y_i,\mu,.)$ is linear, we get:
    \begin{equation}
        \begin{split}
            \frac{S_n}{\sqrt{n}} &= \frac{1}{\sqrt{n}}\sum_{i=1}^n [g(Y_i,\mu)+D_g(Y_i,\mu,\overline{Z}_n-\mu)]\\
            &= \frac{1}{\sqrt{n}}\sum_{i=1}^n [g(Y_i,\mu)+\langle D_g(Y_i,\mu),\overline{Z}_n-\mu \rangle ]\\
            & =\frac{1}{\sqrt{n}}\sum_{i=1}^n [g(Y_i,\mu)]+\langle N \times \id +\sqrt{n}\E D_g(Y,\mu),\overline{Z}_n-\mu \rangle \\
            & = \frac{1}{\sqrt{n}}\sum_{i=1}^n [g(Y_i,\mu)+\langle \E D_g(Y,\mu),Z_i-\mu \rangle] + N \langle \id ,\overline{Z}_n-\mu \rangle
        \end{split}
    \end{equation}

    Since $N \langle \id ,\overline{Z}_n-\mu \rangle = o_P(1),$ and $\E \langle \E D_g(Y,\mu),Z_i-\mu \rangle = 0$, we can apply Theorem \ref{theorem3} and conclude $\frac{S_n}{\sqrt{n}}  \to N(0,\sigma^2)$ for some $\sigma$.
\end{proof}

\begin{lemma}{\label{frechet}}
    The function $g_\alpha(T_{i-1},T_i,S) = \norm{S\circ [\alpha T_{i-1}]-T_i}_2^2$ is Fr\'echet differentiable with respect to $S$ and satisfies the Taylor formula
    $$g_\alpha(T_{i-1},T_i,S+v) = g_\alpha(T_{i-1},T_i,S)+D_g(T_{i-1},T_i,S,v) + R(T_{i-1},T_i,S,v),$$
    where $D_g(T_{i-1},T_i,S,v)$ is the Fr\'echet derivative of $g_\alpha$ with respect to $S$ in the direction $v$, and
    $$\lim_{\norm{v}\to 0} \frac{|R(T_{i-1},T_i,S,v)|}{\norm{v}}=0.$$
    Furthermore, the mapping $D_g(T_{i-1},T_i,\mu,.)$ is both linear and bounded.
\end{lemma}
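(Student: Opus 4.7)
The plan is to derive the Fr\'echet expansion of $g_\alpha$ with respect to $S$ by direct computation, and then verify the three claims (Taylor-type remainder estimate, linearity, boundedness) separately. For brevity in this proposal, write $U:=[\alpha T_{i-1}]$, so $g_\alpha(T_{i-1},T_i,S)=\|S\circ U-T_i\|_2^2$. Since $\alpha\in(-1,1)$ and $T_{i-1}\in\mathcal{T}_{l,u}$, the map $U$ is bi-Lipschitz: there exist constants $0<c\leq C<\infty$ (depending only on $\alpha,L_l,L_u$) with $c\leq U'\leq C$ almost everywhere. This fact will be used repeatedly.

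First, I would expand $g_\alpha$ at $S+v$ and isolate the part linear in $v$:
\begin{equation*}
g_\alpha(T_{i-1},T_i,S+v)=\|S\circ U-T_i\|_2^2+2\langle S\circ U-T_i,\,v\circ U\rangle+\|v\circ U\|_2^2.
\end{equation*}
This identifies the candidate Fr\'echet derivative and remainder as
\begin{equation*}
D_g(T_{i-1},T_i,S,v):=2\langle S\circ U-T_i,\,v\circ U\rangle,\qquad R(T_{i-1},T_i,S,v):=\|v\circ U\|_2^2.
\end{equation*}

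Next, I would establish the key change-of-variables estimate $\|v\circ U\|_2^2\lesssim\|v\|_2^2$. Substituting $y=U(x)$ and using $\mathrm{d}y=U'(x)\,\mathrm{d}x$ together with $c\leq U'\leq C$ gives
\begin{equation*}
\|v\circ U\|_2^2=\int_\Omega |v(U(x))|^2\,\mathrm{d}x=\int_{U(\Omega)}|v(y)|^2\frac{1}{U'(U^{-1}(y))}\,\mathrm{d}y\leq c^{-1}\|v\|_2^2.
\end{equation*}
Consequently $R(T_{i-1},T_i,S,v)\leq c^{-1}\|v\|_2^2$, so $|R|/\|v\|_2\leq c^{-1}\|v\|_2\to 0$ as $\|v\|_2\to 0$, which is the required Taylor estimate.

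Finally, linearity of $v\mapsto D_g(T_{i-1},T_i,S,v)$ is immediate from the definition, since composition with $U$ is linear in $v$ and the inner product is linear in its second argument. For boundedness, Cauchy--Schwarz combined with the change-of-variables bound yields
\begin{equation*}
|D_g(T_{i-1},T_i,S,v)|\leq 2\|S\circ U-T_i\|_2\,\|v\circ U\|_2\leq 2c^{-1/2}\|S\circ U-T_i\|_2\,\|v\|_2,
\end{equation*}
so the operator norm is bounded by $2c^{-1/2}\|S\circ U-T_i\|_2$, which is finite since $S,U,T_i$ all lie in $\mathcal{T}$ and hence are bounded on $\Omega$. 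The only step requiring care is the change-of-variables inequality, and in particular the need to have $U'$ bounded below away from zero; this is precisely where the ambient assumption $\{T_i\}\subset\mathcal{T}_{l,u}$ is used, and it guarantees that the bound $c$ is uniform in the sample.
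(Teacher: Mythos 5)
Your proof is correct, and in fact it is cleaner and more direct than the paper's. The paper computes the G\^ateaux derivative, notes it is linear and bounded in $v$, and then invokes a G\^ateaux-to-Fr\'echet result (Corollary 4.1.1 of Kurdila and Zabarankin) based on continuity of the derivative mapping. You instead expand $\|(S+v)\circ U-T_i\|_2^2$ exactly -- which the squared $L^2$ norm permits -- isolating the linear term and the exact remainder $\|v\circ U\|_2^2$, and then verify the Fr\'echet remainder estimate directly via the change-of-variables bound $\|v\circ U\|_2^2\leq c^{-1}\|v\|_2^2$. What you gain is self-containment: the boundedness of the derivative operator and the remainder estimate both drop out of that single inequality, with no appeal to an external differentiability criterion. (Incidentally, the paper's computation appears to omit the factor $2$ in the cross term; your expansion carries the correct $2\langle\cdot,\cdot\rangle$.)

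One small remark on assumptions: you invoke $T_{i-1}\in\mathcal{T}_{l,u}$ to get the lower bound on $U'$, but this is actually automatic from the $\alpha$-contraction definition when $|\alpha|<1$. For $0<\alpha<1$ one has $U'(x)=(1-\alpha)+\alpha T_{i-1}'(x)\geq 1-\alpha$, and for $-1<\alpha<0$ one has $U'(x)=(1+\alpha)+|\alpha|(T_{i-1}^{-1})'(x)\geq 1+\alpha$; the case $\alpha=0$ gives $U=\id$. So $U'\geq 1-|\alpha|>0$ holds without any restriction to $\mathcal{T}_{l,u}$, which is consistent with the lemma as stated carrying no such hypothesis. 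The rest of your argument goes through unchanged with $c=1-|\alpha|$.
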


\begin{proof}
    To begin, we show that $g_\alpha(T_{i-1},T_i,S)$ is Gateaux differentiable.
    \begin{equation}
        \begin{split}
            \lim_{\epsilon \to 0} \frac{g_\alpha(T_{i-1},T_i,S+\epsilon v)-g_\alpha(T_{i-1},T_i,S)}{\epsilon}&=
            \lim_{\epsilon \to 0} \frac{\norm{(S+\epsilon v) \circ [\alpha T_{i-1}]-T_i}_2^2-\norm{S \circ [\alpha T_{i-1}]-T_i}_2^2}{\epsilon}\\
            &=  \lim_{\epsilon \to 0} \frac{\epsilon^2 \norm{v\circ [\alpha T_{i-1}]}_2^2+\epsilon \langle v\circ \alpha T_{i-1}, S\circ [\alpha T_{i-1}]-T_i \rangle }{\epsilon}\\
            &= \langle v\circ [\alpha T_{i-1}],S\circ [\alpha T_{i-1}]-T_i \rangle\\
            &= D_g((T_i,T_{i-1}),S,v).
        \end{split}
    \end{equation}
  As the above expression is linear and bounded with respect to $v$, it serves as the Gateaux differential. As $D_g(T_{i-1},T_i,S)$ is Gateaux differentiable for every $T$ and the mapping $T\to D_g(T_{i-1},T_i,S)$ is continuous, Corollary 4.1.1. of \cite{kurdila2006convex} guarantees that $D_g$ is also the Fr\'echet derivative.
\end{proof}

\begin{lemma}{\label{SDC}}
The stochastic Dini continuity condition \eqref{stoch_dini_cont} is satisfied by the function $g_\alpha$.
\end{lemma}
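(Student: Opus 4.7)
The plan is to verify the strong estimate $\Delta_{g_\alpha}(\delta) \lesssim \delta^{b}$ for some fixed $b \in (0,1]$, from which the Dini condition follows immediately from $\int_0^1 t^{b-1}\diff t = 1/b < \infty$. Throughout I view $g_\alpha$ as a function of the pair $Y = (T_{i-1},T_i)$ with the third argument held fixed at $S = S_\alpha$, as in the application through Corollary \ref{theorem3_modified}.

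The starting point is the elementary inequality $\bigl|\|A\|_2^2 - \|B\|_2^2\bigr| \leq \|A-B\|_2(\|A\|_2 + \|B\|_2)$, applied with $A := S\circ[\alpha T_{i-1}] - T_i$ and $B := S\circ[\alpha T'_{i-1}] - T'_i$ for two realisations $Y$ and $Y' = (T'_{i-1},T'_i)$ drawn from the common stationary distribution. The ``sum'' factor is uniformly bounded by a constant depending only on $|\Omega|$, since every element of $\mathcal{T}$ maps $\Omega$ into $\Omega$; concretely $\|A\|_2 + \|B\|_2 \leq 4(\omega_1 - \omega_0)$.

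For the ``difference'' factor, the triangle inequality reduces the task to bounding $\|S\circ[\alpha T_{i-1}] - S\circ[\alpha T'_{i-1}]\|_2$ and $\|T_i - T'_i\|_2$. The second is trivially at most $\rho(Y,Y')$ by definition of the product metric. For the first, the Lipschitz continuity of $S = S_\alpha$ (inherited, as in the proof of Lemma \ref{S_g_Lipschitz}, from the fact that $S_\alpha$ is a composition of Lipschitz maps in $\mathcal{T}$) reduces the task further to $\|[\alpha T_{i-1}] - [\alpha T'_{i-1}]\|_2$. For $\alpha \geq 0$ the contraction is affine in its argument, giving exactly $|\alpha|\,\|T_{i-1} - T'_{i-1}\|_2 \leq \rho(Y,Y')$. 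For $\alpha < 0$ the contraction involves $T^{-1}_{i-1}$, and Lemma \ref{norm_inverse} provides $\|T^{-1}_{i-1} - (T'_{i-1})^{-1}\|_2 \lesssim \|T_{i-1} - T'_{i-1}\|_2^{b}$ for some $b \in [1/2,1]$ (improving to $b=1$ under the additional assumption $\{T_i\}\subset \mathcal{T}_{l,u}$).

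Assembling these bounds yields $|g_\alpha(Y) - g_\alpha(Y')| \leq C\,\rho(Y,Y')^{b}$ deterministically, so that on the event $\{\rho(Y,Y') \leq \delta\}$ the quantity $|g_\alpha(Y) - g_\alpha(Y')|\cdot 1_{\{\rho(Y,Y')\leq \delta\}}$ is bounded pointwise by $C\delta^{b}$. Taking whatever $L^p$ norm appears in the definition of $\Delta_g$ then yields $\Delta_{g_\alpha}(\delta) \leq C\delta^{b}$, which is strictly stronger than what Dini continuity requires. The only delicate step is the negative-$\alpha$ regime, where the Hölder exponent $b$ may degrade to $1/2$ in the absence of a uniform lower bound on derivatives; however, any positive exponent suffices for the Dini integral to converge, so no genuinely new ingredient beyond Lemma \ref{norm_inverse} is required.
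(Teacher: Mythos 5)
Your proof is correct and takes essentially the same route as the paper's: both rely on the Lipschitz continuity of $S$, the contraction estimate $\norm{[\alpha T]-[\alpha T']}_2 \lesssim \norm{T-T'}_2^b$ split into the $\alpha\geq 0$ and $\alpha<0$ cases via Lemma~\ref{norm_inverse}, and conclude $\Delta_{g_\alpha}(t)\lesssim t^b$. The only difference is that you spell out the bridging step $\bigl|\norm{A}_2^2-\norm{B}_2^2\bigr|\le\norm{A-B}_2\bigl(\norm{A}_2+\norm{B}_2\bigr)$ together with the uniform boundedness of the factor $\norm{A}_2+\norm{B}_2$ on the compact domain $\Omega$, which the paper leaves implicit.
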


\begin{proof}

We want to show $\int_0^1 \frac{\Delta_g(t)}{t}<\infty$, where
\begin{equation}
\begin{split}
\Delta_g(\delta)=\sup \Big\{\norm{[g_\alpha(T_{i-1},T_i,S)-g_\alpha(T'_{i-1},T'_i,S)]  1_{\rho((T_{i-1},T_i),(T'_{i-1},T'_i))\leq \delta}} \\
:T_i,T'_i \text{ are identically distributed}\Big\}.
\end{split}
\end{equation}
and $\rho((T_1,T_2),(T_1',T_2')) = \sqrt{\norm{T_1-T_1'}_2^2+\norm{T_2-T_2'}_2^2}$.

First, note that $g_\alpha(T_{i-1},T_i,S) = \norm{S \circ [\alpha T_{i-1}]-T_i}_2^2$ and $g_\alpha(T'_{i-1},T'_i,S)=\norm{S \circ [\alpha T'_{i-1}]-T'_i}_2^2$. When $\alpha\geq 0$, we have $\norm{[\alpha T_{i-1}]-[\alpha T'_{i-1}]}_2\leq \alpha \norm{T_{i-1}-T'_{i-1}}_2$. When $\alpha<0$, we can use Lemma \ref{norm_inverse} to conclude that $\norm{[\alpha T_{i-1}]-[\alpha T'_{i-1}]}_2\leq \alpha \norm{T_{i-1}-T'_{i-1}}_2^b$ for some $b\geq \frac{1}{2}$.
As $S$ is Lipschitz, we can deduce that $\Delta_g(t) \leq C \alpha t^b$, for some $b>0$. Therefore the integral is finite.
\end{proof}

\begin{proof}[Proof of Theorem \ref{CLT_g}]
    From Lemma \ref{CLT_maps}, we see that $S_{N,\alpha}$ converges in probability to $S_\alpha$ and we also obtained a central limit theorem for $S_{N,\alpha}$. Then Lemma \ref{frechet} and \ref{SDC} show that $g_\alpha$ is Fr\'echet differentiable and stochastically Dini continuous, which are sufficient conditions for Corollary \ref{theorem3_modified} to be applicable, and yield a central limit theorem for $M_N(\alpha) = \frac{1}{N}\sum_{i=1}^N g_\alpha(T_{i-1},T_i,S_{N,\alpha})$ :
    $$\sqrt{N} [M_N(\alpha)-M(\alpha)]\to N(0,\sigma^2_g).$$

    Thus for any $\alpha$, $M_N(\alpha)$ converges in probability to $M(\alpha)$. By applying Corollary 3.1 from \citet{newey1991uniform} and utilizing Lemma \ref{S_g_Lipschitz}, which establishes that $g_\alpha$ satisfies Lipschitz continuity with respect to $\alpha$, we can achieve uniform convergence in probability of $M_N$ to $M$ with respect to $\alpha$:
   
    $$\sup_\alpha |M_N(\alpha)-M(\alpha)| \to 0 \quad \text{in probability}.$$
\end{proof}

\begin{proof}[Proof of Theorem \ref{consistency} (Consistency)]
Lemma \ref{CLT_g} implies that $M_N$ converges uniformly in probability to $M$ with respect to $\alpha$, and Lemma \ref{minimizer_expectation} shows that $\at$ is the unique minimizer of $M$. By applying \citet[Theorem 3.2.3]{van1996weak}, we can conclude that the estimator $\hat{\alpha}_N = \arg\min_\alpha M_N(\alpha)$ converges to the true parameter $\arg\min_\alpha M(\alpha) = \at$.
\end{proof}

We will now employ M-estimation theory to establish the convergence rate of our estimator. In order to do so, we recall the following theorem, which is taken from \citet{van1996weak}.

\begin{theorem}[\citet{van1996weak}, Theorem 3.2.5.]{\label{theorem3.2.5}}
Let $M_N$ be a stochastic process indexed by a metric space $\Theta$, and let $M$ be a deterministic function, such that for every $\theta$ in a neighborhood of $\theta_0$,
$$M(\theta)-M(\theta_0)\gtrsim d^2(\theta,\theta_0).$$
Suppose that, for every $N$ and sufficiently small $\delta$,

$$\E^* \sup_{d^2(\theta,\theta_0)<\delta} \sqrt{N}\big|(M_N-M)(\theta)-(M_N-M)(\theta_0)\big|\lesssim \phi_N(\delta),$$
for functions $\phi_N$ such that $\delta \to \phi_N(\delta)/\delta^\alpha$ is decreasing for some $\alpha<2$ (not depending on $N$). Let
$$r_N^2\phi_N\left(\frac{1}{r_N}\right)\leq \sqrt{N}, \quad \text{for every } N.$$
If the sequence $\hat{\theta}_N$ satisfies $M_N(\hat{\theta}_N)\leq M_N(\theta_0)+o_\p(r_N^{-2})$, and converges in outer probability to $\theta_0$, then $r_N d(\hat{\theta}_N,\theta_0)=O^*_\p(1)$. If the displayed conditions are valid for every $\theta$ and $\delta$, then the condition that $\hat{\theta}_N$ is consistent is unnecessary.

\end{theorem}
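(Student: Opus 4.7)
The plan is to prove the statement by a \emph{peeling} (or slicing) argument, partitioning the parameter space into geometric shells around $\theta_0$ and bounding the probability that $\hat{\theta}_N$ falls in each shell. For a fixed $M$, define
$$S_{j,N} = \{\theta \in \Theta : 2^{j-1} < r_N d(\theta,\theta_0) \leq 2^j\}, \qquad j \geq M,$$
restricted to the neighborhood of $\theta_0$ where the quadratic lower bound $M(\theta)-M(\theta_0) \gtrsim d^2(\theta,\theta_0)$ is valid. Then
$$\p^*\{r_N d(\hat{\theta}_N,\theta_0) > 2^M\} \leq \p^*\{\hat{\theta}_N \notin U\} + \sum_{\substack{j \geq M \\ 2^j \leq \eta r_N}} \p^*\{\hat{\theta}_N \in S_{j,N}\},$$
where $U$ is the said neighborhood. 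The first term tends to zero by the consistency assumption (and is absent if the hypotheses hold globally, as indicated in the theorem's final sentence).

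The next step is to translate the event $\{\hat{\theta}_N \in S_{j,N}\}$ into a deviation statement for the empirical process. If $\hat{\theta}_N \in S_{j,N}$, then by the quadratic lower bound $M(\hat{\theta}_N) - M(\theta_0) \gtrsim 2^{2(j-1)}/r_N^2$, while the near-minimization property $M_N(\hat{\theta}_N) \leq M_N(\theta_0) + o_\p(r_N^{-2})$ forces
$$\sup_{\theta \in S_{j,N}} \bigl|(M_N-M)(\theta) - (M_N-M)(\theta_0)\bigr| \gtrsim \frac{2^{2j}}{r_N^2}.$$
Applying Markov's inequality together with the modulus-of-continuity hypothesis (enlarging $S_{j,N}$ to the ball $\{d(\theta,\theta_0)\leq 2^j/r_N\}$ where the bound $\phi_N$ applies) yields
$$\p^*\{\hat{\theta}_N \in S_{j,N}\} \lesssim \frac{\phi_N(2^j/r_N)/\sqrt{N}}{2^{2j}/r_N^2} = \frac{r_N^2\,\phi_N(2^j/r_N)}{\sqrt{N}\cdot 2^{2j}}.$$

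The third step invokes the two remaining structural assumptions. Because $\delta \mapsto \phi_N(\delta)/\delta^\alpha$ is decreasing, $\phi_N(2^j/r_N) \leq 2^{j\alpha}\,\phi_N(1/r_N)$, and then the rate-defining inequality $r_N^2 \phi_N(1/r_N) \leq \sqrt{N}$ gives the decisive telescoping bound
$$\p^*\{\hat{\theta}_N \in S_{j,N}\} \lesssim \frac{r_N^2 \cdot 2^{j\alpha}\phi_N(1/r_N)}{\sqrt{N}\cdot 2^{2j}} \leq 2^{j(\alpha-2)}.$$
Since $\alpha < 2$, the tail $\sum_{j \geq M} 2^{j(\alpha-2)}$ is summable and tends to $0$ as $M \to \infty$, uniformly in $N$. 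Combined with the consistency term, this yields $\p^*\{r_N d(\hat{\theta}_N,\theta_0) > 2^M\} \to 0$ as $M \to \infty$ uniformly in $N$, i.e. $r_N d(\hat{\theta}_N,\theta_0) = O_\p^*(1)$.

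The main obstacles are technical rather than structural. First, one must verify that the modulus-of-continuity hypothesis, stated on balls, indeed controls the supremum on each shell $S_{j,N}$; this is a routine enlargement since $S_{j,N}$ is contained in a ball of radius $2^j/r_N$. Second, throughout one must work with outer probabilities and outer expectations, since the suprema in question need not be measurable in the general setting of an arbitrary metric space $\Theta$; this is handled by the standard conventions of empirical-process theory. Finally, a careful bookkeeping is required to align the $o_\p(r_N^{-2})$ slack in the near-minimization property with the $2^{2j}/r_N^2$ shell scale, which is why the peeling begins at the fixed threshold $M$ that is eventually let to infinity.
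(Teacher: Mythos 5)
This statement is quoted in the paper directly from \citet{van1996weak} (Theorem 3.2.5) without proof, and your peeling argument is precisely the standard proof of that result: shelling $\{2^{j-1}<r_Nd(\theta,\theta_0)\leq 2^j\}$, converting near-minimization plus quadratic growth into a deviation event on each shell, applying Markov with the modulus bound $\phi_N$, and using the $\delta\mapsto\phi_N(\delta)/\delta^\alpha$ monotonicity together with $r_N^2\phi_N(1/r_N)\leq\sqrt{N}$ to get the summable tail $2^{j(\alpha-2)}$. Your handling of the $o_\p(r_N^{-2})$ slack, the restriction to the neighborhood where quadratic growth holds, and the outer-probability conventions are all consistent with the cited source, so the proposal is correct and takes the same route as the reference the paper relies on.
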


\begin{lemma}\label{derivative}
Let $\mathcal{T}_{l,u} = \{ T\in \mathcal{T}:  0<L_l\leq T' \leq L_u <\infty\}$ and suppose $\{T_i\} \subset \mathcal{T}_{l,u}$. Then
$$\E |M'_N(\at)| \lesssim \frac{1}{\sqrt{N}}.$$
\end{lemma}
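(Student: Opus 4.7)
The plan is to exploit that $\at$ is the minimizer of $M$ (Lemma \ref{minimizer_expectation}), so $M'(\at)=0$, and then to view $M_N'(\at)=M_N'(\at)-M'(\at)$ as the fluctuation of a sample quantity about its expectation. It will suffice to show $\E[M_N'(\at)^2]\lesssim 1/N$, after which Jensen yields $\E|M_N'(\at)|\lesssim N^{-1/2}$. The $L^2$ bound should in turn follow by decomposing $M_N'(\at)$ into inner products of explicit mean-zero noise residuals and uniformly bounded derivative factors, and then computing second moments using the i.i.d.\ structure of the noise maps $T_{\epsilon_i}$ together with the mixing afforded by the moment-contracting condition.

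Concretely, I first apply the chain rule to $M_N(\alpha)=\frac{1}{N}\sum_i\|S_{N,\alpha}\circ[\alpha T_{i-1}]-T_i\|_2^2$ to obtain
\[
M_N'(\alpha)=\frac{2}{N}\sum_{i=1}^N\bigl\langle S_{N,\alpha}\circ[\alpha T_{i-1}]-T_i,\,D_{i,\alpha}\bigr\rangle,\quad D_{i,\alpha}:=\partial_\alpha\bigl(S_{N,\alpha}\circ[\alpha T_{i-1}]\bigr),
\]
with $D_{i,\alpha}$ further unfolding into $(\partial_\alpha S_{N,\alpha})\circ[\alpha T_{i-1}]+S_{N,\alpha}'([\alpha T_{i-1}])\,\partial_\alpha[\alpha T_{i-1}]$. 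The hypothesis $\{T_i\}\subset\mathcal{T}_{l,u}$ makes all of these terms---and in particular the derivatives of the inverse maps $[\alpha T_{j-1}]^{-1}$ hidden inside $\partial_\alpha S_{N,\alpha}$, obtained by implicit differentiation of $[\alpha T]([\alpha T]^{-1})=\id$---uniformly bounded, so $D_{i,\at}$ is pointwise under control.

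Next, I substitute the model identity $T_i=T_{\epsilon_i}\circ\St\circ[\at T_{i-1}]$ into $S_{N,\at}=\frac{1}{N}\sum_j T_j\circ[\at T_{j-1}]^{-1}$ to get the clean expression $S_{N,\at}=\frac{1}{N}\sum_j T_{\epsilon_j}\circ\St$. Writing $Y_i:=\St\circ[\at T_{i-1}]$ and $\xi_j(y):=T_{\epsilon_j}(y)-y$ (i.i.d., mean zero), the residual factor becomes
\[
\bigl[S_{N,\at}\circ[\at T_{i-1}]-T_i\bigr](x)=\frac{1}{N}\sum_{j=1}^N\xi_j(Y_i(x))-\xi_i(Y_i(x)),
\]
exposing the mean-zero structure explicitly. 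I then split $D_{i,\at}=\bar D_i+(D_{i,\at}-\bar D_i)$, where $\bar D_i$ is the population analogue obtained by replacing $S_{N,\at}$ and $\partial_\alpha S_{N,\alpha}|_{\alpha=\at}$ by their limits; Lemma \ref{CLT_maps} (together with the Lipschitz continuity of the inverse operation on $\mathcal{T}_{l,u}$ from Lemma \ref{norm_inverse}) yields $\|D_{i,\at}-\bar D_i\|_2=O_\p(N^{-1/2})$. Substituting, the dominant part of $M_N'(\at)$ is
\[
-\frac{2}{N}\sum_i\langle\xi_i(Y_i),\bar D_i\rangle+\frac{2}{N^2}\sum_{i,j}\langle\xi_j(Y_i),\bar D_i\rangle,
\]
plus a residual of order $N^{-1/2}\cdot N^{-1/2}=N^{-1}$ from pairing the correction $D_{i,\at}-\bar D_i$ against the $O_\p(N^{-1/2})$ residual.

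The first single sum is an average of martingale differences, since $\xi_i$ is independent of $\mathcal{F}_{i-1}=\sigma(T_k:k<i)\supset\sigma(Y_i,\bar D_i)$, giving $L^2$ norm $O(N^{-1/2})$. The double sum I split into a diagonal part $j=i$ (contributing $O(N^{-1})$ after the $N^{-2}$ normalisation) and off-diagonal contributions whose covariances vanish for $j>i$ by independence of $\xi_j$ from the past, and decay geometrically in $|i-j|$ for $j<i$ by the moment-contracting assumption applied to the dependence of $Y_i$ on $\xi_j$ (in the spirit of Lemma \ref{m_dependent}). Summing the variances then gives $\E[M_N'(\at)^2]\lesssim 1/N$, completing the proof. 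I expect the main obstacle to be the bookkeeping of the cross-correlations in the double sum and showing rigorously that the correction $D_{i,\at}-\bar D_i$ really contributes only at lower order; this is precisely where the uniform boundedness afforded by $\mathcal{T}_{l,u}$ (to control the inverse-map derivatives inside $\partial_\alpha S_{N,\alpha}$) and the geometric mixing from the moment-contracting assumption (to control dependence across time) are both indispensable.
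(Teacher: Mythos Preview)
Your approach is viable but genuinely different from the paper's. The paper does not carry out a direct second-moment computation. Instead, after writing $M_N'(\alpha)=\frac{1}{N}\sum_j g'(T_{j-1},T_j,S_{N,\alpha},\alpha)$ with $g':=\partial_\alpha g_\alpha$ (and computing this derivative explicitly, much as you do), it observes that $M'(\at)=0$ and then simply verifies that $g'$ satisfies the hypotheses of the modified CLT (Corollary \ref{theorem3_modified}): Fr\'echet differentiability in the $S$-argument and stochastic Dini continuity in $(T_{i-1},T_i)$, the latter holding precisely because $\{T_i\}\subset\mathcal{T}_{l,u}$ gives Lipschitz control on inverse maps. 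The CLT then yields $\sqrt{N}\,M_N'(\at)\Rightarrow N(0,\sigma^2)$, and the moment bound follows (by uniform boundedness of the summands on the compact $\Omega$). In other words, the paper recycles the infrastructure already built for $M_N$ itself, treating $M_N'$ as just another functional of the same form.

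What you do differently is to exploit the algebraic simplification available \emph{at} $\alpha=\at$: the identity $S_{N,\at}=\frac{1}{N}\sum_j T_{\epsilon_j}\circ\St$ lets you expose the residual $S_{N,\at}\circ[\at T_{i-1}]-T_i$ as an explicit linear combination of the i.i.d.\ mean-zero noises $\xi_j$, and then you bound second moments by hand via martingale-difference and mixing arguments. This is more elementary in that it avoids invoking the Wu--Shao/Horv\'ath CLT machinery for $g'$, but it is also more laborious and has a couple of loose points you should tighten. First, the step ``Lemma \ref{CLT_maps} yields $\|D_{i,\at}-\bar D_i\|_2=O_\p(N^{-1/2})$'' is not quite justified as stated: $D_{i,\at}$ involves $\partial_\alpha S_{N,\alpha}|_{\at}$ and $S_{N,\at}'$, not just $S_{N,\at}$, so you need separate CLT/LLN statements for those derivative processes (they do hold under $\mathcal{T}_{l,u}$, but Lemma \ref{CLT_maps} alone does not give them). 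Second, you mix $O_\p$ statements with expectation bounds when controlling the remainder $D_{i,\at}-\bar D_i$; since your target is $\E[M_N'(\at)^2]$, you need $L^2$ control of that correction, not merely convergence in probability. Both points are fixable using the uniform boundedness furnished by $\mathcal{T}_{l,u}$, but they should be made explicit.
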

\begin{proof}
Note that
$$M'_N(\alpha)= \frac{1}{N}\sum_{j=1}^N\frac{\partial g_\alpha(T_{j-1},T_j,S_{N,\alpha})}{\partial \alpha},$$
and
\begin{equation*}
\begin{split}
\frac{\partial g_\alpha(T_{j-1},T_j,S_{N,\alpha})}{\partial \alpha}&=\frac{\partial \norm{S_{N,\alpha} \circ [\alpha T_{j-1}]-T_j}_2^2}{\partial \alpha}\\
&=\int 2 |S_{N,\alpha} \circ [\alpha T_{j-1}](x)-T_j(x)|\times \frac{\partial}{\partial \alpha}S_{N,\alpha} \circ [\alpha T_j](x) \diff x
\end{split}
\end{equation*}
The expression $|S_{N,\alpha} \circ [\alpha T_{j-1}](x)-T_j(x)|$ can be uniformly bounded.
In what follows we will explicitly calculate $\frac{\partial}{\partial \alpha}S_{N,\alpha} \circ [\alpha T_j](x)$ for a fixed $j$.
The calculation is tedious but elementary. To calculate the derivative we use the following fact: if $f(\alpha,x) = C(\alpha, y(x,\alpha))$, then
$$\frac{\partial f}{\partial \alpha} = \frac{\partial C(\alpha,y(x,\alpha'))}{\partial \alpha}|_{\alpha'=\alpha} + \frac{\partial C(\alpha,y)}{\partial y} \times \frac{\partial y(x,\alpha)}{\partial \alpha}.$$
Using the above equation we can write:  
\begin{equation}\label{der}
\begin{split}
\frac{\partial}{\partial \alpha}S_{N,\alpha} \circ [\alpha T_j](x)&=\frac{\partial}{\partial \alpha}S_{N,\alpha} ([\alpha' T_j](x))|_{\alpha'=\alpha}+\frac{\partial S_{N,\alpha}([\alpha T_j](x))}{\partial ([\alpha T_j](x))}\times \frac{\partial [\alpha T_j](x)}{\partial \alpha}\\
& = \frac{\partial S_{N,\alpha} (y)}{\partial \alpha}|_{y = [\alpha T_j](x)} + \frac{\partial S_{N,\alpha} (y)}{\partial y}|_{y = [\alpha T_j](x)}\times \frac{\partial [\alpha T_j](x)}{\partial \alpha}.
\end{split}
\end{equation}
First, we derive the first term on the LHS of \eqref{der}:
$$ \frac{\partial S_{N,\alpha} (y)}{\partial \alpha} = \sum_{i=1}^N \frac{\partial}{\partial \alpha} T_i \circ [\alpha T_{i-1}]^{-1} (y)$$
If we consider one of the terms in this summation we have
\begin{equation}
\begin{split}
\frac{\partial}{\partial \alpha} T_i \circ [\alpha T_{i-1}]^{-1} (y) & = \frac{\partial T_i ([\alpha T_{i-1}]^{-1}(y))}{\partial [\alpha T_{i-1}]^{-1}(y)} \times \frac{\partial [\alpha T_{i-1}]^{-1}(y)}{\partial \alpha}\\
&= T'_i(z_i)|_{z_i = [\alpha T_{i-1}]^{-1}(y)} \times \frac{\partial}{\partial \alpha} [\alpha T_{i-1}]^{-1}(y)\
\end{split}
\end{equation}
Now to calculate $\frac{\partial}{\partial \alpha} [\alpha T_{i-1}]^{-1}(y)$ note that:
\begin{equation}
\begin{split}
0 = \frac{\partial}{\partial \alpha } y &= \frac{\partial}{\partial \alpha} [\alpha T_{i-1}]([\alpha T_{i-1}]^{-1}(y))\\
& = \frac{\partial}{\partial \alpha} [\alpha T_{i-1}]([\alpha' T_{i-1}]^{-1}(y))|_{\alpha'= \alpha} + \frac{\partial[\alpha T_{i-1}]([\alpha T_{i-1}]^{-1}(y))}{\partial [\alpha T_{i-1}]^{-1}(y)} \times \frac{\partial [\alpha T_{i-1}]^{-1}(y)}{\partial \alpha}\\
&= \frac{\partial}{\partial \alpha} [\alpha T_{i-1}](z_i)|_{z_i=[\alpha T_{i-1}]^{-1}(y)} +\frac{\partial [\alpha T_{i-1}](z_i)}{\partial z_i}|_{z_i = [\alpha T_{i-1}]^{-1}(y)}\times \frac{\partial [\alpha T_{i-1}]^{-1}(y)}{\partial \alpha}
\end{split}
\end{equation}
Thus
\begin{equation}
\begin{split}
\frac{\partial [\alpha T_{i-1}]^{-1}(y)}{\partial \alpha} &= (-1)\times  \frac{\partial}{\partial \alpha} [\alpha T_{i-1}](z_i)|_{z_i=[\alpha T_{i-1}]^{-1}(y)} \times \frac{1}{\frac{\partial [\alpha T_{i-1}](z_i)}{\partial z_i}|_{z_i = [\alpha T_{i-1}]^{-1}(y)} } \\
&=  \begin{cases}
(z_i-T_{i-1}(z_i))\times \frac{1}{\alpha(T'_{i-1}(z_i)-1)+1}\Big|_{z_i = [\alpha T_{i-1}]^{-1}(y)} , & \text{for } 0<\alpha\leq 1\\
(T^{-1}_{i-1}(z_i)-z_i)\times \frac{1}{\alpha(1-(T^{-1}_{i-1})'(z_i))+1}\Big|_{z_i = [\alpha T_{i-1}]^{-1}(y)}, & \text{for } -1\leq \alpha<0,
\end{cases}
\end{split}
\end{equation}
And we can conclude that
\begin{equation}
\begin{split}
\frac{\partial}{\partial \alpha} T_i \circ [\alpha T_{i-1}]^{-1} (y) &= T'_i(z_i) \times
\begin{cases}
(z_i-T_{i-1}(z_i))\times \frac{1}{\alpha(T'_{i-1}(z_i)-1)+1}\Big|_{z_i = [\alpha T_{i-1}]^{-1}(y)} , & \text{for } 0<\alpha\leq 1\\
(T^{-1}_{i-1}(z_i)-z_i)\times \frac{1}{\alpha(1-(T^{-1}_{i-1})'(z_i))+1}\Big|_{z_i = [\alpha T_{i-1}]^{-1}(y)}, & \text{for } -1\leq \alpha<0
\end{cases}
\end{split}
\end{equation}
With this, we have all the needed terms to calculate the left terms of \eqref{der}.
Now we calculate the right term of \eqref{der}:
\begin{equation}
\begin{split}
\frac{\partial S_{N,\alpha} (y)}{\partial y} &=  \frac{1}{N}\sum_{i=1}^N \frac{\partial}{\partial y} T_i \circ [\alpha T_{i-1}]^{-1}(y)\\
&= \frac{1}{N}\sum_{i=1}^N  T'_i(z_i)|_{z_i = [\alpha T_{i-1}]^{-1}(y) } \times \frac{1}{\frac{\partial [\alpha T_{i-1}](z_i)}{\partial z_i}|_{z_i = [\alpha T_{i-1}]^{-1}(y) }}\\
&=\begin{cases}
\frac{1}{N}\sum_{i=1}^N T'_i(z_i)\times \frac{1}{\alpha(T'_{i-1}(z_i)-1)+1}\Big|_{z_i = [\alpha T_{i-1}]^{-1}(y)} , & \text{for } 0<\alpha\leq 1\\
\frac{1}{N}\sum_{i=1}^N T'_i(z_i)\times \frac{1}{\alpha(1-(T^{-1}_{i-1})'(z_i))+1}\Big|_{z_i = [\alpha T_{i-1}]^{-1}(y)} , & \text{for } -1\leq \alpha<0
\end{cases}
\end{split}
\end{equation}
By plugging all the terms calculated above in \eqref{der} we get
\begin{equation}
\begin{split}
&\frac{\partial}{\partial \alpha}S_{N,\alpha} \circ [\alpha T_j](x)\\
&=\begin{cases}
\frac{1}{N}\sum_{i=1}^N T'_i(z_i)\times \frac{1}{\alpha(T'_{i-1}(z_i)-1)+1}\times(z_i-T_{i-1}(z_i)+T_j(x)-x)\Big|_{z_i = [\alpha T_{i-1}]^{-1}(y)}, & \text{for } 0<\alpha\leq 1\\
\frac{1}{N}\sum_{i=1}^N T'_i(z_i)\times\frac{1}{\alpha(1-(T^{-1}_{i-1})'(z_i))+1}\times (T^{-1}_{i-1}(z_i)-z_i+x-T^{-1}_j(x))\Big|_{z_i = [\alpha T_{i-1}]^{-1}(y)}, & \text{for } -1\leq\alpha<0,
\end{cases}
\end{split}
\end{equation}
where $y=[\alpha T_j](x).$

The differentiability of $M_N$ with respect to $\alpha$ follows from the equation above. Similarly, if we replace $S_{N,\alpha}$ with $S_\alpha$, the summations can be replaced by an integral, and we can see that $M$ is also differentiable with respect to $\alpha$. Let  $$g'(T_{j-1},T_j,S_{N,\alpha},\alpha)=\frac{\partial g_\alpha(T_{j-1},T_j,S_{N,\alpha})}{\partial \alpha},\quad g'(T_{j-1},T_j,S_\alpha,\alpha)=\frac{\partial g_\alpha(T_{j-1},T_j,S_\alpha)}{\partial \alpha}.$$
Since $\at$ is the minimizer of $M$, we must have $\E g'(T_{j-1},T_j,S_{\alpha},\alpha)|_{\alpha=\at}=M'(\at)=0$. Additionally, We can argue $g'$ is stochastically Dini-continuous when $\{T_i\} \subset \mathcal{T}_{l,u}$ (similar to the arguments in the proof of Lemma \ref{SDC}). Therefore the assumptions of Corollary \ref{theorem3_modified} (CLT) are satisfied for $g'$, and we have $\E|M_N(\at)|\lesssim \frac{1}{\sqrt{N}}$.
\end{proof}

\begin{proof}[Proof of Theorem \ref{rate} (Convergence Rate)]
Using Theorem \ref{theorem3.2.5}, we can obtain a rate of convergence for our estimator. First, it should be noted that the functional $M$ is twice differentiable with respect to $\alpha$ since it is a composition of twice differentiable functions. As $\at$ is the unique minimizer of $M$, its first derivative vanishes at $\at$, which implies that $M$ has quadratic growth around $\at$. Next, we need to find a function $\phi_N(\delta)$ such that

\begin{equation}{\label{phi_N}}
\begin{split}
\E \sup_{|\alpha-\at|\leq \delta} \sqrt{N} \Big|(M_{N}-M)(\alpha)-(M_{N}-M)(\at)\Big|&\leq \phi_N(\delta).
\end{split}
\end{equation}

Taylor expanding, we can write:
\begin{equation}
\begin{split}
(M_{N}-M)(\alpha) = (M_{N}-M)(\at) + (M_{N}-M)'(\at)\times(\alpha-\at) +\text{higher order terms}
\end{split}
\end{equation}
Since $\at$ is the minimiser of $M$, yielding $M'(\at)=0$, we only need to calculate $M'_N(\at)$. But by Lemma \ref{derivative} we can see that
$$\E |M'_N(\at)|\lesssim \frac{1}{\sqrt{N}}.$$
By plugging the inequality into the expression \eqref{phi_N} we obtain    
\begin{equation*}
\begin{split}
\E \sqrt{N}\Big|(M_{N}-M)(\alpha)-(M_{N}-M)(\at)\Big| &\leq |\alpha-\at|.
\end{split}
\end{equation*}
And, we conclude $\phi_N(\delta) =  \delta$ and the rate of convergence for $\hat{\alpha}_N$ is $N^{-\frac{1}{2}}$. Using Lemma \ref{norm_inverse} we can see
$$\norm{S_{N,\hat{\alpha}_N}-\St}_2\leq \norm{S_{N,\hat{\alpha}_N}-S_{N,\at}}_2+\norm{S_{N,\at}-\St}_2\lesssim N^{-\frac{b}{2}}+N^{-\frac{1}{2}}\lesssim N^{-\frac{b}{2}},$$
and since $\{T_i\} \subset \mathcal{T}_{l,u}$, $b=1$ according to Lemma \ref{norm_inverse}.

\end{proof}

\subsection{Generalization of Iterated System \eqref{model_muller}} \label{sec:gq-model}
The definition of the iterated system \eqref{Model} is based on the contraction of maps around the identity map. It extends system \eqref{model_muller} by introducing the map $S$.  However, we could alternatively generalise \eqref{model_muller} by introducing  $S$ not at the level of the iteration itself, but rather at the level of the contraction itself: contracting around an arbitrary map $S$, instead of the identity.  
Specifically, define the $\alpha$-contraction of a map $T$ around an arbitrary map $S$ as follows:
\begin{equation}{\label{scalar_multiplication_S}}
    \alpha [T,S](x) := \begin{cases}
        S(x)+\alpha(T(x)-S(x)) & 0<\alpha \leq 1 \\
        S(x) & \alpha=0 \\
        S(x)+\alpha(S(x)-T^{-1}(x)) & -1\leq\alpha <0.
     \end{cases}
\end{equation}
With this definition, the original contraction operation \eqref{scalar_multiplication} now corresponds to $\alpha[T,\id]$, for $\id(x)=x$ the identity map. Definition \eqref{scalar_multiplication_S} directly leads to the following extension of system \eqref{model_muller}
\begin{equation}{\label{model_muller_generalise}}
    T_i = T_{\epsilon_i} \circ \alpha [T_{i-1},S],
\end{equation}
where $\{T_{\epsilon_i}\}_{i=1}^{N}$ is again a collection of independent and identically distributed random optimal maps satisfying $\E\{T_{\epsilon_i}(x)\}=x$ almost everywhere on $\Omega$. Compared to system \eqref{Model},
$$  T_i  =T_{\epsilon_i} \circ S \circ \alpha [T_{i-1},\id].$$
this system interjects $S$ at the level of the contraction and not at the level of the random perturbation (note that for identifiability reasons it does not make sense to do both). Of course, either is more general than system \eqref{model_muller}
$$T_i  =T_{\epsilon_i}  \circ \alpha[T_{i-1},\id].$$
\begin{remark}\label{rem:gq-model}
Suppose we use the contraction definition \eqref{scalar_multiplication_S}, and define the iteration \eqref{model_muller_generalise}. Then, the quantile model (UQ) with $S = F^{-1}_\mu$ (i.e. where we contract around the quantile function of a measure $\mu$) is equivalent to the generalised quantile model (GQ) with $S = \id$; that is, they produce the same stationary time series. To demonstrate this equivalence, consider the model (GQ) with $S=\id$. We then have:
$$\E(F^{-1}_{\mu_i} \circ F_\mu (x) | F^{-1}_{\mu_{i-1}} \circ F_\mu) = \E(F^{-1}_{\mu_i} | F^{-1}_{\mu_{i-1}} \circ F_\mu) \circ F_\mu (x) = x+\alpha (F^{-1}_{\mu_{i-1}} (F_\mu(x))-x)$$
Thus,
$$\E(F^{-1}_{\mu_i} | F^{-1}_{\mu_{i-1}} \circ F_\mu) = F^{-1}_\mu (x)+\alpha (F^{-1}_{\mu_{i-1}} (x)- F^{-1}_\mu (x)),$$
which is equal to the conditional expectation of $\E(F^{-1}_{\mu_i} | F^{-1}_{\mu_{i-1}})$ when we use model \eqref{scalar_multiplication_S} for $F^{-1}_{\mu_i}$ and contract around $S = F^{-1}_\mu$.
\end{remark}

\begin{remark}
    Note that $\alpha[T,S] = T$ when $\alpha=1$, and $\alpha[T,S] = T^{-1}$ when $\alpha=-1$. Therefore in either of these cases, the time series $T_i$ does not provide any information about $S$ and it would impossible to estimate the map $S$. Therefore we assume $-1<\alpha<1$. This is in contrast with system \eqref{Model}, where consistent estimation is possible for all values of $0\leq\alpha\leq 1$,
\end{remark}
\noindent If a stationary solution to system \eqref{model_muller_generalise} exists, then
$$
        \E[T_{i}] = \E[T_{i+1}]
        = \E[E[T_{i+1}|T_{i}]]
        = \E[\alpha [T_{i},S]],
$$
and therefore $\E [T_i] = S$, when $-1<\alpha<1$.

We define the estimators $(\hat{\alpha}_N,S_N)$ of $(\alpha,S)$ as follows:
$$
    \hat{\alpha}_N \coloneqq \arg\min_\alpha M_N(\alpha),
$$
where
\begin{equation*}
\begin{split}
&M_N(\alpha)\coloneqq \frac{1}{N} \sum_{i=1}^N g (T_{i-1},T_i,S_{N})\\
&g_\alpha(T_{i-1},T_i,S) \coloneqq  \norm{\alpha [T_{i-1},S]-T_i}_2^2\\
    &S_{N} \coloneqq \frac{1}{N} \sum_{j=1}^N T_j    
\end{split}
\end{equation*}

It is worth noting that unlike in system \eqref{Model}, where the estimation of the map $S$ depends on the estimator of $\alpha$, in this system, the estimator of the map $S$ is simply the average of the maps $T_i$. Consequently, the statistical analysis of the estimators is somewhat easier in this case. Similar procedures to those used for model \eqref{Model} can be used to demonstrate the existence of a unique stationary solution, the consistency of the estimator, and obtain the rate of convergence.

Assuming that system \eqref{model_muller_generalise} satisfies the moment contracting condition \ref{moment_contracting_assumption}, a unique stationary solution for this system exists, and $\E[T_i] = S$, as in the previous case. We can then use Lemma \ref{m_dependent} to obtain the central limit theorem (CLT) for $S_N$ and show that $S_N$ converges in probability to the true $S$.

It is worth noting that the Lipschitz continuity property of the new function $g$ with respect to $\alpha$ can be shown using the fact that $\norm{\alpha_1 [T,S] - \alpha_2 [T,S]}_2\lesssim |\alpha_1-\alpha_2|$. Using this property and following a similar proof technique as in Theorem \ref{rate}, we can argue that the rate of convergence is $N^{-1/2}$.

\begin{remark}
    Once again we can use the system \eqref{model_muller_generalise} to construct a Markov chain model for a dependent sequence of probability distributions $\mu_i\in\W_2(\Omega)$ by either interpreting the maps as consecutive optimal maps between a time series of probability distributions or directly using the maps to model the quantile functions. While using system \eqref{Model}, the increment interpretation using $\alpha=0$ is equivalent to quantile interpretation using $\alpha=1$, a similar straightforward relationship does not appear to exist when using system \eqref{model_muller_generalise}.
\end{remark}

\section*{Data Availability Statement}
The data that support the findings of this study are openly available at \hfill\phantom{a} \texttt{www.ncei.noaa.gov/cdo-web/search?datasetid=GHCND}.

\bibliographystyle{imsart-nameyear}
\bibliography{complex}

\end{document}